\documentclass[12pt]{article}

\AtBeginDocument{%
   \setlength\abovedisplayskip{0.05in}
   \setlength\belowdisplayskip{0.08in}}

\usepackage[onehalfspacing]{setspace}

\makeatletter
\renewcommand\@footnotetext[1]{%
  \insert\footins{%
    \reset@font\footnotesize
    \interlinepenalty\interfootnotelinepenalty
    \splittopskip\footnotesep
    \splitmaxdepth \dp\strutbox \floatingpenalty \@MM
    \hsize\columnwidth \@parboxrestore
    {\setstretch{1.0}\protect\@makefntext{%
      \rule{\z@}{\footnotesep}\ignorespaces#1}}}}
\makeatother

\usepackage{geometry}
\usepackage{enumitem}
\usepackage{lineno}
\usepackage{booktabs}
\usepackage{amssymb}
\usepackage{amsmath}
\usepackage{amsthm}
\usepackage{caption}
\usepackage{epsfig}
\usepackage{graphicx}
\usepackage{graphics}
\usepackage{float}
\usepackage{subfigure}
\usepackage{multirow}
\usepackage{xcolor}
\usepackage{lineno}
\usepackage{fullpage}
\usepackage[normalem]{ulem} 
\usepackage{makeidx}
\usepackage{xspace}
\usepackage{wrapfig}
\usepackage{lscape}
\usepackage{mathrsfs}
\usepackage{adjustbox}
\usepackage{multirow}
\usepackage{varwidth}
\usepackage{rotating}
\usepackage{lscape}
\usepackage[x11names,dvipsnames]{xcolor}
\usepackage{tikz}
\usetikzlibrary{automata, positioning}
\usepackage[bottom]{footmisc}
\usepackage[colorlinks,citecolor=darkblue,urlcolor=darkblue,linkcolor=darkblue]{hyperref} 
\graphicspath{{C:/Users/flori/Desktop/folders/minimumWagePaper/paper/}}
\makeindex
\newtheorem{theorem}{Theorem}

\newtheorem{corollary}{Corollary}

\newtheorem{lemma}{Lemma}

\newtheorem{proposition}{Proposition}

\newtheorem{obs}{Observation}

\definecolor{purple}{rgb}{0.6, 0.4, 0.8}
\definecolor{darkred}{rgb}{1, 0.1, 0.3}
\definecolor{darkblue}{rgb}{0.0, 0.0, 0.55}
\definecolor{darkgreen}{rgb}{0,0.6,0.5}
\definecolor{forestgreen}{rgb}{0.0, 0.46, 0.37}
\definecolor{bittersweet}{rgb}{1.0, 0.44, 0.37}
\definecolor{navy}{rgb}{0.0, 0.0, 0.55}
\definecolor{brown}{rgb}{0.53, 0.18, 0.09}
\definecolor{Green}{rgb}{0.0, 0.47, 0.44}

\newcommand {\mm}[1] {\ifmmode{#1}\else{\mbox{$#1$}}\fi}

\newcommand\E{\mathbb{E}}

\newcommand\R{\mathbb{R}}

\usepackage{dsfont}
\newcommand{\m}[1]{\ensuremath{\boldsymbol {#1}}}

\newcommand{\DKL}{D_{\mathrm{KL}}}

\usepackage{sectsty}
\sectionfont{\centering\Large}
\subsectionfont{\large}
\usepackage{fancyhdr}
\usepackage{natbib}

\usepackage{pst-node, pst-plot, pst-poly}
\usepackage{rotating}

\usepackage{tikz}
\usepackage{pstricks}
\usepackage{tikz-cd}
\usepackage{siunitx}
\usepackage{pgfplots}
\usepgfplotslibrary{fillbetween}
\usepackage{algorithm}
\usepackage{algorithmic}
\usepackage{bbm}
\usepackage{csquotes}

\usepackage{scalerel}
\usetikzlibrary{shapes,arrows,trees,calc}
\pgfplotsset{compat=1.12}

\newcommand{\Y}{\mathcal{Y}}
\newcommand{\Acal}{\mathcal{A}}
\newcommand{\DeltaY}{\Delta(\Y)}
\newcommand{\co}{\operatorname{co}}

\newcommand{\Var}{\operatorname{Var}}
\newcommand{\argmin}{\operatorname*{arg\,min}}
\newcommand{\argmax}{\operatorname*{arg\,max}}

\usetikzlibrary{trees}
\allowdisplaybreaks 
\begin{document}

\title{ \vspace{-3.8em}
\mbox{\small\bf\MakeUppercase{Contracting under Misspecification}}\footnote{I am very grateful to Drew Fudenberg and Stephen Morris for their support and mentorship.
}
\vspace{-0.5em}
}
\author{\small\MakeUppercase{Florian Mudekereza}\footnote{Department of Economics, MIT, \href{mailto:florianm@mit.edu}{\texttt{\footnotesize florianm@mit.edu}}.}}
\date{}

\maketitle
\thispagestyle{empty}
\setcounter{page}{0}
\vspace{-0.41in}

\begin{abstract}
This paper studies agency problems when both parties worry that the model linking action to output is misspecified. With observable actions, an optimal contract is linear in output, so performance pay arises solely to share misspecification exposure, the slope reflects the parties' relative robustness concerns, and its allocation is Pareto efficient. With hidden actions, this sharing rule survives and incentives add a nonlinear correction. Misspecification concerns can  polarize effort by making intermediate actions impossible to implement. Moreover, ambiguity across competing models has asymmetric effects: uncertainty about desired actions raises the principal's payoff, whereas uncertainty about deviations can lower it.
\par\noindent\textit{Keywords}: linear contract, misspecification sharing, ambiguity, robustness.
\end{abstract}

\section{Introduction}

Consider a principal and an agent who agree on a probabilistic description of how the agent’s actions influence output realizations, but neither is willing to regard this model as the true data-generating process. Both parties recognize that the model is an approximation and want the contract to perform well under \textit{misspecification}. What should such a contract look like? When actions are observable, we find that an optimal contract is linear in output (Theorem \ref{thm:firstbest-linear}); it is  also unique when the parties' misspecification aversion is mild and no output realization can be ruled out (Corollary \ref{cor:firstbest-unique}). 
Thus, performance pay arises even without incentive problems: its role is to share exposure to misspecification in the action-output relationship. The optimal output share is determined by the parties' relative concerns, so the contract allocates more exposure to the party that is less concerned about misspecification. This allocation is also Pareto efficient (Corollary \ref{lem:FB-pareto}). 

We capture misspecification concerns using \citeauthor{mmr06}'s (\citeyear{mmr06}) variational preferences; \citeauthor{hansen01}'s (\citeyear{hansen01}) multiplier preference is the most tractable special case and is popular  in agency problems \citep[e.g.,][]{miao16,wu18,miao21}. Each party is risk neutral and evaluates a payoff by considering alternative output distributions while penalizing them with a statistical ambiguity index. The parties may differ in how strongly they are concerned about misspecification, but they use the same ambiguity index up to a positive scale. In general, this common statistical structure is necessary for a linear contract to be Pareto efficient (Proposition \ref{prop:different-divergences}). The key mechanism behind Theorem \ref{thm:firstbest-linear} can be understood by asking how each party values a small reallocation of compensation across output realizations. A dollar paid to the agent in a realization is weighted by one of his effective models or ``marginal robust probabilities,'' while the same dollar costs the principal according to one of hers. If the parties have no common effective model, compensation can be shifted across outputs in a way that improves one party's valuation without reducing the other's, so first-best optimality requires their effective-model sets to intersect (Corollary \ref{cor:effective-models}). A linear contract achieves this alignment because it divides every additional unit of output in the same proportion; in fact, it forces the parties' effective-model sets to coincide exactly, so every marginal robust probability that supports one party's valuation also supports the other's. 

\par Our results have several implications for the interpretation of performance pay. In the classical agency problem where both parties are represented by expected utility (EU), performance pay is associated with hidden actions and the informativeness of output about effort \citep[see,][]{holmstrom79}. When actions are observable, there is no incentive constraint requiring compensation to depend on output. With bilateral risk neutrality and unrestricted transfers, only the agent's expected payment matters for  surplus: a constant transfer is optimal, but it is not uniquely so, because any output-contingent contract delivering the same expected payment gives the same expected surplus \citep[][Section 4]{laffontmartimort02}. Misspecification concerns remove this indeterminacy by making how output is divided across the parties payoff relevant and select a contract that assigns the agent a constant share of output. 
This suggests that observing performance pay in practice need not be evidence of incentive problems. The slope of performance pay may instead contain information about how the parties allocate exposure to  misspecification; Theorem \ref{thm:firstbest-linear} predicts greater performance sensitivity when the principal is relatively more concerned about misspecification and lower sensitivity when the agent is.

\par A real-world analogue of our framework is catastrophe reinsurance for rare natural disasters, such as hurricanes. Primary insurers purchase reinsurance to transfer part of the losses from disasters to reinsurers, so the contract divides a common uncertain claim. Both parties rely on ``catastrophe models''---stochastic simulation models that generate hypothetical disasters and map them into portfolio losses---to price catastrophe exposure. While these models are useful, the parties understand that they are approximations: as \citet[p. 205]{gray2021hazardous} reports, after 2004--2005, the models ``underperformed [...] and did poorly at anticipating specific kinds of property damage.'' Thus, these reinsurance contracts can be viewed as allocating exposure to misspecification of catastrophe models.\footnote{The analogue of a linear contract here is quota-share reinsurance---the reinsurer assumes a fixed fraction of premiums and losses---which is popular in the literature \citep[e.g.,][]{froot2008pricing}.}

\par Optimal misspecification sharing changes the value of the agency relationship. By splitting a common misspecified output claim, the contract weakly reduces the cost of misspecification relative to leaving the entire claim with either party (Proposition \ref{prop:robustness-pooling}). When output is genuinely stochastic, this gain can be strict and can change which observable action the principal prefers. Misspecification sharing therefore affects not only compensation, but also which technologies are more attractive to operate.

\par Theorem \ref{thm:firstbest-linear} reflects a static representation of an \textit{uncertainty-transfer} mechanism identified in dynamic robust contracting. Under unilateral robustness, where a robust principal contracts with an EU agent, our linear contract transfers all output exposure to the agent. This comparative static isolates the same mechanism that operates in \citet{miao16}, \citet{wu18}, and \citet{miao21}: they find that a robust principal benefits from shifting cash-flow exposure toward an EU agent, so that the agent can insure her by making her payoff flows constant; this is achieved by having the agent bear more uncertainty from cash flows than incentive provision alone requires. 

\par The first-best approach provides a benchmark for studying moral hazard. When the agent's actions become hidden, Theorem \ref{thm:MH-general} shows that the common-effective-model condition from first best is distorted only by the marginal incentive wedges associated with active deviations. Thus, moral hazard changes how the parties' marginal valuations are aligned rather than eliminating the first-best sharing motive. Under multiplier preferences,  there is a unique optimal contract in closed form, which retains the first-best linear component and adds a nonlinear likelihood-ratio correction (Corollary \ref{thm:MH-contract}). In settings with binary output, the optimal bonus must be large enough both to provide incentives and to satisfy the first-best sharing requirement. More generally, when the desired action dominates deviations in the monotone-likelihood-ratio sense, incentive provision can steepen the contract but cannot undo its first-best performance sensitivity (Corollary \ref{prop:MH-slope-floor}).

Misspecification concerns also change which actions can be implemented. With a continuum of actions, variational preferences impose a curvature restriction on any interior implemented action; under multiplier preferences, this restriction becomes sharper because the curvature index is constant. The agent's multiplier value of performance pay is convex in the success probability, unlike the EU benchmark where compensation value is affine in that probability. The resulting curvature can make an interior action locally unattractive even when it satisfies the agent's first-order condition. When the convexity induced by robust compensation dominates effort-cost curvature throughout the interior, every interior action becomes nonimplementable. Thus, an arbitrarily small change in bonus can move the agent discontinuously from lowest to highest action (Theorem \ref{thm:continuum-polarization}).

We extend our framework to environments where an action is associated with multiple competing models  to study the effects of model uncertainty and aggregation on incentive provision. This extension captures settings where an organization has access to a set of plausible models or ``model library,'' which may represent expert predictions or competing scientific theories. The multitude of models introduces a new friction: \textit{model ambiguity}.  
To formalize this friction, we follow \citeauthor{kellner17}'s (\citeyear{kellner17}) approach, which introduces a second-order belief that assigns weights to candidate models in a Bayesian way. Within each candidate model, we allow a  reference-indexed variational criterion and average the corresponding robust valuations. The resulting criterion is the linear-aggregator special case of \citeauthor{hansenmiss25}'s (\citeyear{hansenmiss25}) smooth Bayesian criterion. The distribution of models matters even when its mean model is fixed. Greater dispersion in the library of the desired action raises both the agent's valuation of that action and the principal's valuation of residual output, so the set of implementing contracts expands and the principal's optimized payoff weakly rises (Theorem \ref{thm:library}). Dispersion in the library of a deviation instead makes that deviation more attractive to the agent, shrinks the implementation set, and weakly lowers the principal's payoff. We then allow dispersion to be endogenous by allowing the principal to optimize it according to \citeauthor{linear20}’s (\citeyear{linear20}) confidence-region problem. We show that their worst-case principal trades greater statistical confidence against lower contractual performance, whereas here these two forces move in the same direction, so our principal chooses the widest admissible region (Proposition \ref{prop:endogenous-dispersion}).

The first-best linear contract relies on the variational structure  for two reasons. First,  variational preferences represent a particularly \textit{simple} form of model uncertainty: alternative models are evaluated using statistical penalties that are independent of payoffs. Second, this preference class also satisfies translation invariance: adding the same sure amount to every realization changes a party's valuation one-for-one without changing how uncertainty is assessed. These properties help separate compensation into a level that satisfies participation and a constant output share that allocates misspecification exposure. In contrast, other non-EU preferences, such as \citeauthor{smooth05}'s (\citeyear{smooth05}) smooth ambiguity preferences, which need not be translation invariant, allow ambiguity attitudes to vary with payoffs, so the level of compensation can itself change how uncertainty is evaluated; Appendix \ref{app:general-criteria}  shows that this dependence can give rise to nonlinear contracts.

\par The simplicity of variational preferences is closely related to contracting. \citet{contract25} shows, axiomatically, that the canonical moral-hazard problem admits a parsimonious representation in which the agent chooses an output distribution according to the cost of generating it, and that the inverse of the preference over contracts is variational. Thus, the same cost over candidate models that characterizes variational model uncertainty arises naturally from moral hazard. This connects two forms of simplicity emphasized here: variational preferences provide a simple representation of model uncertainty, while linear contracts provide a simple rule for sharing exposure to that uncertainty.

\subsection{Related work}

A literature explains linear contracts through robustness to incomplete knowledge of the contracting environment. In \citet{linear15}, the principal does not know all of the actions available to the agent and chooses a contract to perform well against the worst technology consistent with the actions she knows. Subsequent work has extended this approach in various ways; for example, \citet{dai22} analyze robustness of team incentives, and \citet{linear22} study when the possible responses to a contract are sufficiently rich and responsive for linear contracts to maximize the principal's worst-case guarantee. \citet{linear20} instead place uncertainty over the parameters of the effort-contingent output distribution and require the contract to perform well, and to preserve incentives, across a specified uncertainty set. As emphasized by \citeauthor{linear19}'s (\citeyear{linear19}) survey, these approaches formalize robustness through a class of possible environments and protect the principal against its least favorable member. Our notion of uncertainty is different. Both parties share a variational ambiguity index for output conditional on an action, and alternative models are penalized according to the same statistical criterion rather than treated as equally plausible members of a worst-case set. The results can therefore be viewed as complementary explanations for linear contracting. In the robust-contract literature, linearity protects the principal against incomplete knowledge of what the agent can do or of distributional parameters. By contrast, linearity arises here even without incentive problems: it shares exposure to a probability model that both parties recognize may be misspecified. Section \ref{subsec:carroll} develops these distinctions in more detail.

A conceptually closer literature introduces model misspecification into principal-agent problems using multiplier preferences. \citet{miao16} pioneered the study of continuous-time financial-contracting problems where the principal distrusts the project's cash-flow model while the agent trusts it. \citet{miao21} embed the same unilateral multiplier robustness in a richer dynamic environment with investment, financing, and liquidation decisions. Similarly, \citet{wu18} introduce the same robustness in \citeauthor{hm87}'s (\citeyear{hm87}) framework to study robust long-term incentives and focus on relative performance evaluation. All these papers report the ``uncertainty-transfer'' mechanism introduced earlier. However, their optimal first-best contracts are not linear because this mechanism interacts with other frictions such as continuation-value dynamics. After stripping away these additional forces, our first-best analysis reveals that uncertainty transfer is part of a more general variational sharing problem and determines how output exposure should be allocated between the two parties.  A bit further afield,  \citet{seller16} study a different form of model misspecification in a screening problem, where the principal's model is an approximation of the agent's preference.

\par Our first-best results connect to the literature on efficient sharing under uncertainty. \citet{wilson68} defines a syndicate as a group that makes a collective decision under uncertainty and jointly shares the resulting payoff. He shows that efficient risk sharing assigns each member a marginal share in proportion to that member's risk tolerance. We obtain the same sharing formula for misspecification sharing, despite studying risk-neutral variational parties rather than risk-averse EU agents. Since risk sharing is absent here, the coincidence shows that Wilson's sharing rule is not specific to curvature in utility: the same marginal allocation principle governs how an aggregate uncertain payoff is divided when the relevant friction is model misspecification rather than risk aversion. \citet{rss08} characterize Pareto-efficient allocations under ambiguity by the existence of a common subjective supporting belief; for risk-neutral variational preferences these beliefs are the effective models, so Corollary \ref{cor:effective-models} implements their Pareto-efficiency condition in agency problems. \citet{strzaleckiwerner11} sharpen this welfare perspective for a broad class of ambiguity preferences when aggregate uncertainty is present by introducing conditional beliefs: agreement on conditional beliefs yields measurability of efficient allocations with respect to aggregate endowment. \citet{rigottishannon12} study a complementary sharing problem in complete markets with variational preferences by analyzing risk and ambiguity sharing, and show that equilibria are generically determinate.

Our analysis relates to the broader literature on ambiguity in agency problems. \citet{ghir94} studies  contracts when both parties have Choquet preferences \citep{schmeidler89}, which allows nonadditive beliefs to alter the classical incentive problem. \citet{amb11} instead give the agent an incomplete preference, with an EU principal, and studies how ambiguity affects the information that should enter contracts. \citet{kellner17} uses smooth-ambiguity preferences and shows that an ambiguity-averse agent can change implementation costs and qualitative properties of the optimal incentive scheme. \citet{amb24} analyze how an EU principal can deliberately introduce ambiguity in contract design to exploit the agent's ambiguity aversion. These papers primarily study how ambiguity preferences interact with moral hazard. Our focus is different: we show that misspecification concerns  create a contractual sharing problem even before incentive frictions are introduced, and the distinction between variational and other non-EU preferences determines whether the first-best problem can select a linear contract.

We also contribute to the ongoing effort to understand why linear contracts arise often in both economic theory and practice. Classical explanations include \citeauthor{hm87}'s (\citeyear{hm87}) dynamic aggregation, as well as the mechanisms in \citet{hur78}, \citet{diamond98}, and \citet{chassang13}. \citet{linear19} emphasizes a broader lesson from the robustness literature: simple contracts can become attractive precisely when the designer is unwilling to tailor incentives to a finely specified environment. The repeated appearance of linear rules in recent robust-contracting frameworks reinforces this connection between simplicity and robustness. Our results add a different reason for the same contractual form: even with observable actions, misspecification concerns select a linear first-best rule to share exposure to model misspecification.

\par\noindent --- \textit{Outline}. Section \ref{sec:model} introduces the framework. Section \ref{sec:firstbest} studies the first-best problem, and Section \ref{sec:moralhazard} introduces moral hazard. Section \ref{sec:ext} analyzes effort polarization and model ambiguity, and Section \ref{sec:conclusion} concludes the paper.  Appendix \ref{app:proofs} contains all the proofs.

\section{Setup}\label{sec:model}

There is a finite set of actions $\Acal$ and a finite set of outputs $\Y\subset\R$. The agent chooses an action $a\in\Acal$. When action $a$ is chosen, output is distributed according to a full-support probability model $q^a\in\DeltaY$. Output is the principal's gross payoff, so she receives $y\in \Y$ before paying the agent. A contract is an unrestricted transfer schedule $w:\Y\to\R$. The agent's cost from choosing action $a$ is $c(a)\in\R$, and his outside option is $U_0\in\R$.

Both parties are risk neutral but are concerned that the probability model induced by each action is misspecified. Following \citet{mmr06}, for each action $a$ let $\Gamma^a:\DeltaY\to[0,\infty]$ be a grounded, convex, and lower-semicontinuous ambiguity index, with $\Gamma^a(q^a)=0$. The agent's concern is indexed by $\lambda_A>0$, and the principal's concern is indexed by $\lambda_P\ge0$. For $\lambda>0$ and a payoff vector $x:\Y\to\R$, the variational value is
$$
\mathcal V_\lambda^a(x):=\min_{p\in\DeltaY}\Big\{\E_p[x(y)]+\frac{1}{\lambda}\Gamma^a(p)\Big\}.
$$
Thus, the parties have proportional ambiguity indices $\Gamma^a/\lambda_A$ and $\Gamma^a/\lambda_P$: they agree on the relative statistical cost of every alternative model but may differ in how strongly they are concerned about misspecification. Let $C^a:=\argmin_{p\in\DeltaY}\Gamma^a(p)$ and define the zero-concern boundary $\mathcal V_0^a(x):=\min_{p\in C^a}\E_p[x(y)]$, which is a maxmin criterion; when $C^a=\{q^a\}$, this boundary case becomes the EU criterion under the reference model $q^a$.

For $\lambda>0$, let $\Pi_\lambda^a(x)$ denote the nonempty set of models attaining the minimum in $\mathcal V_\lambda^a(x)$; for $\lambda=0$, let $\Pi_0^a(x):=\argmin_{p\in C^a}\E_p[x]$. We call these minimizers the \textit{effective models}. For a contract $w$, write $\Pi_A^a(w):=\Pi_{\lambda_A}^a(w)$ and $\Pi_P^a(w):=\Pi_{\lambda_P}^a(y-w)$. Whenever party $i$'s effective-model set is a singleton, denote its element by $\pi_i^a(\cdot;w)$, for $i\in\{A,P\}$. Effective models are the marginal probability weights of a smooth variational value, so they will be useful for characterizing both the first-best  and moral-hazard problems.

 The most popular special case is \citeauthor{hansen01}'s (\citeyear{hansen01}) multiplier preference. It sets $\Gamma^a(p)=\DKL(p\|q^a)$, where $\DKL(p\|q):=\sum_y p(y)\log[p(y)/q(y)]$ is the relative entropy. Writing the multiplier criterion as $M(q,x;\lambda)$, the effective model is unique and $M(q,x;\lambda)=-\lambda^{-1}\log\E_q[e^{-\lambda x(y)}]$ \citep[][Proposition 1.4.2]{dupuis97}, where
$$
\pi_A^a(y;w)=\frac{q^a(y)e^{-\lambda_A w(y)}}{\E_{q^a}[e^{-\lambda_A w(z)}]},\qquad
\pi_P^a(y;w)=\frac{q^a(y)e^{-\lambda_P[y-w(y)]}}{\E_{q^a}[e^{-\lambda_P[z-w(z)]}}.
$$
Thus, the multiplier specification has effective models in closed form and reveals that robustness to misspecification operates as a pessimistic distortion: when a party evaluates a payoff, lower-payoff outputs receive extra weight, but deviations from $q$ are penalized by relative entropy.  
The agent's payoff from action $a$ under contract $w$ is $V_A^a(w):=\mathcal V_{\lambda_A}^a(w)-c(a)$. The principal's payoff from implementing action $a$ is $V_P^a(w):=\mathcal V_{\lambda_P}^a(y-w(y))$.

\section{First Best}\label{sec:firstbest}

The first-best problem fixes an observable action $a\in\Acal$, so contracts must deliver the agent his outside option without incentive constraints. Our first-best problem becomes
$$
(\mathsf{FB}_a)\qquad \max_{w\in\R^{\Y}} V_P^a(w)\quad\text{subject to}\quad V_A^a(w)\ge U_0.
$$
In the classical EU benchmark with observable actions, bilateral risk neutrality, and unrestricted transfers, the principal's payoff depends on the contract only through its expected payment. Any contract satisfying participation with equality therefore gives the same expected surplus; a constant transfer is an optimal contract, but it is not uniquely selected \citep[see,][Sections 4.2.3--4.3]{laffontmartimort02}. Theorem \ref{thm:firstbest-linear} and Corollary \ref{cor:firstbest-unique} will show that introducing misspecification concerns removes this indeterminacy because the  allocation of output itself affects both parties' robust valuations.

\begin{lemma}\label{lem:FB-participation}
Fix an action $a$. Every solution of $(\mathsf{FB}_a)$ satisfies $V_A^a(w)=U_0$.
\end{lemma}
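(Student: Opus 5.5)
The plan is to argue by contradiction, using the translation invariance of variational values: for any payoff vector $x$, any constant $k\in\R$, and any $\lambda\ge0$,
$$
\mathcal V_\lambda^a(x+k)=\mathcal V_\lambda^a(x)+k.
$$
This holds because in both the case $\lambda>0$ and the maxmin boundary case $\lambda=0$, the objective inside the minimum is $\E_p[x]+k$ plus a term independent of $x$. The constant therefore factors out of the minimum over $p$. Every model $p$ lies in the simplex, so $\E_p[k]=k$ regardless of which models attain the minimum.

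Suppose $w$ solves $(\mathsf{FB}_a)$ but $V_A^a(w)>U_0$. Set $\varepsilon:=V_A^a(w)-U_0>0$. Define the shifted contract $w':=w-\varepsilon$, which lowers the transfer by the same constant in every output realization. The agent's payoff becomes
$$
V_A^a(w')=\mathcal V_{\lambda_A}^a(w-\varepsilon)-c(a)=V_A^a(w)-\varepsilon=U_0,
$$
so $w'$ is feasible. The principal's payoff becomes
$$
V_P^a(w')=\mathcal V_{\lambda_P}^a\bigl(y-w(y)+\varepsilon\bigr)=V_P^a(w)+\varepsilon>V_P^a(w).
$$
This contradicts the optimality of $w$, which proves the lemma.

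The only step needing care is translation invariance, in particular verifying that the values are finite so that adding $\varepsilon$ is meaningful. Finiteness follows because $\Gamma^a(q^a)=0$ gives an upper bound on each value. Since $\Y$ is finite and $\Gamma^a\ge0$, the value is at least $\min_y x(y)$, so it is bounded below as well. The case $\lambda_P=0$ is covered directly, since the maxmin criterion over $C^a$ is also translation invariant. No other obstacle arises: transfers are unrestricted, so the shift $w-\varepsilon$ is always admissible.
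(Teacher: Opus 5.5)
Your proof is correct and is essentially the paper's argument: a uniform downward shift of the wage, combined with translation invariance of both variational values (including the $\lambda_P=0$ maxmin boundary), keeps participation satisfied while raising the principal's payoff one-for-one, which contradicts optimality. The only cosmetic difference is that you shift by the full slack $\varepsilon=V_A^a(w)-U_0$, so participation binds exactly, whereas the paper shifts by any $\varepsilon$ strictly smaller than the slack; either choice works.
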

Lemma \ref{lem:FB-participation} separates the level of compensation from its exposure to output. Translation invariance of variational preferences implies that a common shift in wages transfers utility one-for-one between parties, so participation determines only the level of the contract. The main question is how the stochastic output claim should be divided. Giving the agent too little output exposure leaves the principal disproportionately exposed to misspecification in residual output; giving him too much shifts that exposure to the wage claim. Theorem \ref{thm:firstbest-linear} identifies a division that balances these two valuations.

\begin{theorem}\label{thm:firstbest-linear}
Fix an action $a$. The first-best problem $(\mathsf{FB}_a)$ has an optimal solution
$$
 w^{FB}_a(y)=\alpha_a+\frac{\lambda_P}{\lambda_A+\lambda_P}\hspace{0.03in}y
$$
for every output $y\in\mathcal{Y}$, where the slope is interpreted as zero when $\lambda_P=0$, and the intercept $\alpha_a$ is uniquely determined by binding participation.
\end{theorem}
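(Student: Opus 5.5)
The plan is to reduce $(\mathsf{FB}_a)$ to an unconstrained maximization of a weighted sum of the two robust valuations, and then to solve that sum by an inf-convolution of the two penalized criteria. By Lemma \ref{lem:FB-participation}, participation binds. Translation invariance of $\mathcal V_\lambda^a$ means $\mathcal V(x+k)=\mathcal V(x)+k$. Together these show that $(\mathsf{FB}_a)$ is equivalent to maximizing the sum
$$S(w):=\mathcal V_{\lambda_P}^a(y-w)+\mathcal V_{\lambda_A}^a(w)$$
over all $w$, with the intercept then adjusted so that $V_A^a(w)=U_0$. Indeed, shifting $w$ by a constant $k$ leaves $S$ unchanged and moves the agent's value by exactly $k$. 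So any maximizer of $S$, once shifted to bind participation, solves $(\mathsf{FB}_a)$. Conversely, the principal's payoff at any feasible binding $w$ equals $S(w)-U_0-c(a)$. The intercept is unique because $\mathcal V_{\lambda_A}^a(w+k)$ is strictly increasing in $k$.

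Next I bound $S$ from above. For any fixed $p$, using the same $p$ in both minimizations gives
$$S(w)\le \E_p[y-w]+\tfrac{1}{\lambda_P}\Gamma^a(p)+\E_p[w]+\tfrac{1}{\lambda_A}\Gamma^a(p)=\E_p[y]+\Big(\tfrac1{\lambda_P}+\tfrac1{\lambda_A}\Big)\Gamma^a(p).$$
Minimizing over $p$ yields
$$S(w)\le \mathcal V^a_{\lambda^*}(y),\qquad \lambda^*=\frac{\lambda_A\lambda_P}{\lambda_A+\lambda_P},$$
since $1/\lambda_P+1/\lambda_A=1/\lambda^*$.

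Then I show the linear contract attains this bound. Put $\beta=\lambda_P/(\lambda_A+\lambda_P)$. Then $w=\beta y$ gives the principal the payoff $(1-\beta)y$ and the agent $\beta y$. A scaling identity does the work: $\mathcal V_\lambda^a(\theta x)=\theta\,\mathcal V_{\lambda\theta}^a(x)$ for $\theta>0$, because $\min_p\{\theta\E_p x+\Gamma/\lambda\}=\theta\min_p\{\E_p x+\Gamma/(\lambda\theta)\}$. Now $\lambda_P(1-\beta)=\lambda_A\beta=\lambda^*$. Hence
$$S(\beta y)=(1-\beta)\mathcal V^a_{\lambda^*}(y)+\beta\mathcal V^a_{\lambda^*}(y)=\mathcal V^a_{\lambda^*}(y),$$
so the upper bound is attained. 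Constants do not matter here, by translation invariance. The same computation shows that both parties share the effective-model set $\Pi^a_{\lambda^*}(y)$, which matches the later interpretation.

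The main obstacle is the boundary case $\lambda_P=0$, where $\mathcal V_0^a$ is the maxmin criterion over $C^a$ rather than a penalized one. Here I would show that $S(w)\le \mathcal V_0^a(y)$ by repeating the bound with $p$ ranging over $C^a$. For $p\in C^a$ we have $\Gamma^a(p)=\min\Gamma^a$, which is $0$ by groundedness. Then $S(w)\le\E_p[y-w]+\E_p[w]=\E_p[y]$, and minimizing over $p\in C^a$ gives $S(w)\le\mathcal V_0^a(y)$. A constant wage attains this bound: with $\beta=0$, $S=\mathcal V_0^a(y)+\text{const}$ and $\mathcal V_{\lambda_A}(\text{const})=\text{const}$. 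So the slope is zero in this case, as claimed. A minor point to check is that the minima exist, which follows from lower semicontinuity on the compact simplex.
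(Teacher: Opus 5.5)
Your proposal is correct and follows essentially the same route as the paper. It uses the same-$p$ upper bound $\mathcal V_{\lambda_A}^a(w)+\mathcal V_{\lambda_P}^a(y-w)\le\mathcal V_{\lambda_{AP}}^a(y)$, shows the linear contract attains it via the scaling identity, and handles $\lambda_P=0$ with the analogous bound for $p\in C^a$. The only difference is presentational: you first recast $(\mathsf{FB}_a)$ as unconstrained maximization of the joint value $S(w)$ using translation invariance, whereas the paper bounds the principal's value directly over the feasible set.
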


Theorem \ref{thm:firstbest-linear} gives performance sensitivity a first-best interpretation. Each additional unit of output is divided in fixed proportions; the agent receives the share $\lambda_P/(\lambda_A+\lambda_P)$ and the principal retains $\lambda_A/(\lambda_A+\lambda_P)$. These shares depend only on the parties' relative misspecification concerns. The ambiguity index $\Gamma^a$, the action cost $c(a)$, and the outside option $U_0$ affect the level of compensation, but not this marginal division of output. A more misspecification-concerned principal therefore transfers a larger fraction of output exposure to the agent, while a more misspecification-concerned agent bears less of it.

\subsection{Welfare implication and risk sharing}
Linearity in Theorem \ref{thm:firstbest-linear} follows from matching the parties' effective models. To make their robust evaluations agree on which outputs are costly at the margin, wages must allocate the common output claim so that both sides can use the same probability weights. The next result formalizes this condition without requiring a unique effective model.

\begin{corollary}\label{cor:effective-models}
Suppose $\lambda_P>0$. At every first-best optimum $w$, the parties have a common effective model:
$\Pi_A^a(w)\cap\Pi_P^a(w)\neq\varnothing$.
For the linear contract $w^{FB}_a$,  $\Pi_A^a(w^{FB}_a)=\Pi_P^a(w^{FB}_a)$.
\end{corollary}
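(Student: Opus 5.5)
The plan has two parts. The first claim follows from convex duality in the reallocation problem, and the second claim follows from an explicit rescaling of the penalty terms.

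\textbf{Common effective model at any optimum.} Since $\lambda_P>0$ and $\lambda_A>0$, both $\mathcal V^a_{\lambda_A}$ and $\mathcal V^a_{\lambda_P}$ are finite concave functions on $\R^{\Y}$. Each is a minimum of affine functions of the payoff vector. By Danskin's theorem, the superdifferential of $\mathcal V^a_{\lambda}$ at $x$ is exactly the set of minimizers $\Pi^a_\lambda(x)$: the effective models are the supergradients.

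Fix a first-best optimum $w$. By Lemma~\ref{lem:FB-participation}, $V_A^a(w)=U_0$. Now argue by contradiction and suppose $\Pi_A^a(w)\cap\Pi_P^a(w)=\varnothing$. Both sets are nonempty, convex and compact subsets of $\DeltaY$, so strict separation gives a direction $d\in\R^{\Y}$ with
\[
\min_{p\in\Pi_A^a(w)}\E_p[d]>0>\max_{p\in\Pi_P^a(w)}\E_p[d].
\]
The directional derivatives of concave minima of affine functions are
\[
(\mathcal V^a_{\lambda_A})'(w;d)=\min_{p\in\Pi_A^a(w)}\E_p[d]>0,
\qquad
(\mathcal V^a_{\lambda_P})'(y-w;-d)=\min_{p\in\Pi_P^a(w)}\E_p[-d]>0.
\]
So a small step $w+\epsilon d$ strictly raises both parties' values. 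Next subtract a small constant $\delta>0$ from the wage so that the agent's value returns exactly to $U_0$; this is possible by translation invariance. Subtracting $\delta$ raises the principal's value by another $\delta$, so her value is now strictly above its value at $w$. This contradicts optimality.

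The one delicate point is making the separation strict, which needs only compactness and convexity of the two sets. That is immediate, since $\Gamma^a$ is convex and lower semicontinuous and the minimizer sets of a convex lsc function plus a linear term are convex and closed in $\DeltaY$.

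\textbf{Equality of effective-model sets for the linear contract.} Write $s=\lambda_P/(\lambda_A+\lambda_P)$. Under $w^{FB}_a$ the agent holds $\alpha_a+sy$ and the principal holds $-\alpha_a+(1-s)y$. Constants do not affect the argmin, so
\[
\Pi_A^a(w^{FB}_a)=\argmin_p\{s\,\E_p[y]+\Gamma^a(p)/\lambda_A\}.
\]
Multiplying the objective by $1/s=(\lambda_A+\lambda_P)/\lambda_P$ leaves the argmin unchanged and gives $\E_p[y]+\Gamma^a(p)\,\frac{\lambda_A+\lambda_P}{\lambda_A\lambda_P}$.

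For the principal, $\Pi_P^a=\argmin_p\{(1-s)\E_p[y]+\Gamma^a(p)/\lambda_P\}$ with $1-s=\lambda_A/(\lambda_A+\lambda_P)$. Multiplying by $1/(1-s)$ gives the same objective, $\E_p[y]+\Gamma^a(p)\,\frac{\lambda_A+\lambda_P}{\lambda_A\lambda_P}$. The two argmin sets therefore coincide, which is the second claim.
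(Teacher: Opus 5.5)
Your proof is correct. The second half is exactly the paper's argument: you rescale the two argmin problems by $1/s$ and $1/(1-s)$ so that both become $\E_p[y]+\Gamma^a(p)/\lambda_{AP}$. The first half takes a genuinely different route.

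The paper never perturbs the contract. It takes from the proof of Theorem~\ref{thm:firstbest-linear} the identity $\mathcal V^a_{\lambda_A}(w)+\mathcal V^a_{\lambda_P}(y-w)=\mathcal V^a_{\lambda_{AP}}(y)$, which holds at any optimum. It then evaluates both criteria at a pooled minimizer $p^*\in\Pi^a_{\lambda_{AP}}(y)$ and notes that two weak inequalities whose two sides have equal sums must both hold with equality. That argument is fully elementary and gives the stronger inclusion $\Pi^a_{\lambda_{AP}}(y)\subseteq\Pi_A^a(w)\cap\Pi_P^a(w)$. In other words, it identifies \emph{which} model is common, and that is exactly what the paper later needs for uniqueness in Corollary~\ref{cor:firstbest-unique}.

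Your separation argument delivers only nonemptiness. In exchange, it uses neither the optimal value from Theorem~\ref{thm:firstbest-linear} nor proportionality of the indices. It shows that disjoint supergradient sets always admit a Pareto-improving reallocation, so it would go through verbatim for nonproportional indices or for $\lambda_P=0$. It is the set-valued analogue of the necessity step in Proposition~\ref{prop:different-divergences}, and the formal version of the intuition the paper gives right after the corollary.

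Two small repairs are needed.

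\begin{enumerate}
\item Strict separation only yields $\min_{p\in\Pi_A^a(w)}\E_p[d]>\max_{p\in\Pi_P^a(w)}\E_p[d]$. To place $0$ strictly between these values, subtract a suitable multiple of $\m{1}$ from $d$; this is harmless because every $p$ sums to one.
\item The step you flag as delicate (separation) is routine; the real care is needed in the Danskin step. Since $\Gamma^a$ is only lower semicontinuous and may take the value $+\infty$, the textbook Danskin theorem, which assumes joint continuity, does not literally apply. The lower bound you actually need still holds. For $p_\epsilon\in\Pi^a_\lambda(x+\epsilon d)$ one has $\mathcal V^a_\lambda(x+\epsilon d)\ge\mathcal V^a_\lambda(x)+\epsilon\,\E_{p_\epsilon}[d]$. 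Moreover, every limit point of $p_\epsilon$ lies in $\Pi^a_\lambda(x)$, by compactness, lower semicontinuity of $\Gamma^a$, and continuity of $\mathcal V^a_\lambda$.
\end{enumerate}

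Finally, the translation by $\delta$ at the end is unnecessary. The contract $w+\epsilon d$ is already feasible and strictly better for the principal.
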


Corollary \ref{cor:effective-models} has a welfare implication. \citet{rss08} define subjective beliefs at an act as normalized supporting probabilities to its upper contour set. Their Proposition 3 shows that, for variational preferences, these supporting probabilities are effective models weighted by marginal utility. Since both parties here are risk neutral, marginal utility is constant and the supporting beliefs coincide with the minimizing models in the variational representation. Their Proposition 7 then characterizes an interior Pareto-efficient allocation by the existence of a subjective belief common to all agents. Our first-best problem is a two-agent sharing problem for output $y$, so Corollary \ref{cor:effective-models} is the corresponding contracting analogue: an efficient division of output must admit a model that prices marginal transfers for both parties. If their effective-model sets were disjoint, the parties would assign incompatible marginal prices to transfers, so a small reallocation of compensation could make both strictly better off. Notably, the linear first-best contract delivers stronger alignment by requiring the two effective-model sets to coincide.

\begin{corollary}\label{lem:FB-pareto}
Suppose $\lambda_P>0$. Every first-best optimal contract is Pareto efficient.
\end{corollary}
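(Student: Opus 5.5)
The plan is a direct contradiction argument built on Lemma \ref{lem:FB-participation} and translation invariance. Fix $a$ and let $w$ be a first-best optimal contract. Here Pareto efficiency means the following. Among all contracts $w'\in\R^{\Y}$ (with action $a$ fixed), there is no $w'$ with $V_A^a(w')\ge V_A^a(w)$ and $V_P^a(w')\ge V_P^a(w)$ in which at least one inequality is strict. Lemma \ref{lem:FB-participation} gives $V_A^a(w)=U_0$. Suppose some $w'$ Pareto dominates $w$. We split into two cases, according to which inequality is strict.

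\emph{Case 1: $V_P^a(w')>V_P^a(w)$ and $V_A^a(w')\ge U_0$.} Then $w'$ is feasible for $(\mathsf{FB}_a)$ and gives the principal a strictly higher payoff, contradicting optimality of $w$. Nothing more is needed.

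\emph{Case 2: $V_A^a(w')>U_0$ and $V_P^a(w')\ge V_P^a(w)$.} Set $\varepsilon:=V_A^a(w')-U_0>0$ and $w'':=w'-\varepsilon$. Translation invariance of $\mathcal V_\lambda^a$ holds because, for a constant $k$,
\[
\mathcal V_\lambda^a(x+k)=\min_p\{\E_p[x]+k+\Gamma^a(p)/\lambda\}=\mathcal V_\lambda^a(x)+k .
\]
For $\lambda_P>0$ this gives $V_A^a(w'')=U_0$ and $V_P^a(w'')=V_P^a(w')+\varepsilon>V_P^a(w)$. The boundary case $\lambda_P=0$ is excluded by hypothesis, although $\mathcal V_0^a$ is also translation invariant. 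So $w''$ is feasible and strictly better for the principal, again contradicting optimality.

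The only point needing care is that all values are finite, so that the shift arithmetic is legitimate. Since $\Y$ is finite, $\Gamma^a\ge0$, and $\Gamma^a(q^a)=0$, we have $\min_y x(y)\le\mathcal V_\lambda^a(x)\le\E_{q^a}[x]$, so every value is a finite real number.

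If the intended notion of efficiency is instead the Negishi-style condition, that no reallocation of the common claim $y$ improves both parties, the same two-case argument applies verbatim. The reason is that any contract $w'$ induces exactly the split $(w',y-w')$. I therefore expect no genuine obstacle. The argument uses neither linearity nor Corollary \ref{cor:effective-models}. One could alternatively argue via the common effective model and \citet{rss08}'s Proposition 7, but the direct argument is shorter and avoids interiority or smoothness requirements.
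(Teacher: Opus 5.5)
Your proof is correct, but it takes a different route from the paper's. You use only Lemma~\ref{lem:FB-participation} and translation invariance. Because a constant shift in wages transfers utility one-for-one, any Pareto improvement over a first-best optimum can be turned into a feasible contract that is strictly better for the principal, which contradicts optimality.

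The paper instead takes a common effective model $p^*\in\Pi_A^a(w)\cap\Pi_P^a(w)$ from Corollary~\ref{cor:effective-models} and uses it as a supporting price. Evaluating both variational criteria at $p^*$ gives $V_A^a(\widetilde w)\le V_A^a(w)+\E_{p^*}[\widetilde w-w]$ and $V_P^a(\widetilde w)\le V_P^a(w)-\E_{p^*}[\widetilde w-w]$. A weak improvement for both parties then forces $\E_{p^*}[\widetilde w-w]=0$, which rules out any strict gain.

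Your route is more elementary and needs neither Theorem~\ref{thm:firstbest-linear} nor Corollary~\ref{cor:effective-models}. As you note, it also goes through verbatim when $\lambda_P=0$, so it shows that $\lambda_P>0$ is not needed for this conclusion. That hypothesis enters the paper's proof only through Corollary~\ref{cor:effective-models}, which is proved via the pooled parameter $\lambda_{AP}$.

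The paper's route buys a first-welfare-theorem statement: any contract at which the parties share an effective model is Pareto efficient, whether or not it solves $(\mathsf{FB}_a)$. This makes the link to the common-supporting-belief characterization of \citet{rss08} explicit, and the same argument is reused in the sufficiency half of Proposition~\ref{prop:different-divergences}.

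One small correction to your closing remark: the effective-model argument needs no interiority or smoothness. It uses only that $p^*$ attains both minima, not any first-order condition.
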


The linear sharing formula also has a direct counterpart in \citet{wilson68}. Define the parties' \textit{misspecification tolerances} by $\rho_A:=1/\lambda_A$ and $\rho_P:=1/\lambda_P$. The agent's share in Theorem \ref{thm:firstbest-linear} can then be rewritten as $\rho_A/(\rho_A+\rho_P)$, while the principal retains $\rho_P/(\rho_A+\rho_P)$. Wilson's Theorem 4 shows that a syndicate's risk tolerance is the sum of its members' risk tolerances, and his Theorem 5 shows that each member's marginal share of aggregate payoff equals that member's risk tolerance divided by the syndicate's total risk tolerance. Thus, the same formula that efficiently shares risk in Wilson's syndicate also shares exposure to model misspecification in our two-agent contracting problem.

The coincidence is economically informative because the source of marginal valuation is different. Wilson's members are strictly risk-averse EU maximizers, so marginal utility varies with the payoff they receive. Here, both parties are risk neutral, so conventional risk sharing is absent; marginal valuations instead come from the effective models selected by their variational criteria. Efficiency nevertheless has the same implication in both environments: marginal exposure is allocated in proportion to tolerance so that the parties' marginal valuations are aligned. Misspecification tolerance therefore plays for model uncertainty the same allocative role that risk tolerance plays for payoff uncertainty.
\subsection{Uniqueness and example}\label{subsec:FB-uniqueness}

Theorem \ref{thm:firstbest-linear} is an existence result because arbitrary variational preferences can have kinks or flat regions that leave the sharing slope underidentified. A mild regularity condition restores uniqueness. For notation, let $\operatorname{ri}B$ denote the relative interior of set $B.$

\begin{corollary}\label{cor:firstbest-unique}
Suppose $\lambda_P>0$. If there exists a probability distribution $p^*\in\operatorname{ri}\DeltaY$ minimizing the pooled problem $p\mapsto\E_p[y]+(\frac{1}{\lambda_A}+\frac{1}{\lambda_P})\Gamma^a(p)$ at which $\Gamma^a$ is differentiable, then the linear contract $w^{FB}_a$ in Theorem \ref{thm:firstbest-linear} is the unique solution of $(\mathsf{FB}_a)$.
\end{corollary}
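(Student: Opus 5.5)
The plan is to reduce $(\mathsf{FB}_a)$ to maximizing the sum of the two variational values, identify that maximum with the pooled problem, and then show that differentiability at an interior pooled minimizer pins down the contract.

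\emph{Step 1 (reduction to a sum).} By Lemma \ref{lem:FB-participation} and translation invariance, $(\mathsf{FB}_a)$ is equivalent to maximizing $S(w):=\mathcal V^a_{\lambda_A}(w)+\mathcal V^a_{\lambda_P}(y-w)$ over $w$, with the intercept then fixed by binding participation. Indeed, adding a constant $k$ to $w$ moves $k$ of value from the principal to the agent and leaves $S$ unchanged. Hence $w$ solves $(\mathsf{FB}_a)$ if and only if $w$ maximizes $S$ and $V_A^a(w)=U_0$.

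\emph{Step 2 (the maximum of $S$ is the pooled value).} Write $\mu:=1/\lambda_A+1/\lambda_P$.
\begin{itemize}
\item Upper bound: for any $w$, evaluating both minimizations at the same $p$ gives
$$S(w)\le \E_p[w]+\tfrac1{\lambda_A}\Gamma^a(p)+\E_p[y-w]+\tfrac1{\lambda_P}\Gamma^a(p)=\E_p[y]+\mu\,\Gamma^a(p).$$
Taking $p=p^*$ yields $S(w)\le G(p^*)$, where $G(p):=\E_p[y]+\mu\Gamma^a(p)$.
\item Attainment: by Theorem \ref{thm:firstbest-linear}, $S(w^{FB}_a)=G(p^*)$.
\end{itemize}

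\emph{Step 3 (characterizing the maximizers).} A maximizer $w$ must give equality in the bound at $p^*$. That requires $p^*\in\Pi_A^a(w)$ and $p^*\in\Pi_P^a(w)$ simultaneously.

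\emph{Step 4 (first-order conditions).} Since $p^*$ is interior and $\Gamma^a$ is differentiable there, each minimization has a Lagrangian first-order condition on the simplex, with multipliers $\kappa_A,\kappa_P\in\R$ attached to the constraint $\sum_y p(y)=1$:
$$w(y)+\tfrac1{\lambda_A}\partial_y\Gamma^a(p^*)=\kappa_A,\qquad y-w(y)+\tfrac1{\lambda_P}\partial_y\Gamma^a(p^*)=\kappa_P\quad\text{for all }y.$$
This is the main technical point. Interiority makes these exact equalities rather than inequalities, and differentiability makes the gradient unique rather than an arbitrary subgradient; these two hypotheses are precisely what rule out kinks and flat regions. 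Because $\Gamma^a$ is convex and possibly $+\infty$ off some set, I would phrase the condition via the subdifferential of the restriction to the affine hull of $\DeltaY$, which is a singleton modulo constants by differentiability.

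\emph{Step 5 (solving for $w$).} Eliminate $\partial_y\Gamma^a(p^*)$: multiply the first equation by $\lambda_A$ and the second by $\lambda_P$, then subtract. This gives
$$\lambda_A w(y)-\lambda_P\big(y-w(y)\big)=\text{const},$$
so $w(y)=\frac{\lambda_P}{\lambda_A+\lambda_P}y+\text{const}$. Every maximizer of $S$ is therefore affine with the slope of $w^{FB}_a$. Binding participation then fixes the intercept uniquely, as in Theorem \ref{thm:firstbest-linear}, so $w=w^{FB}_a$.
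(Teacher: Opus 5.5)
Your proposal is correct and follows essentially the paper's argument. The paper likewise uses the equality-in-the-summed-bound reasoning from the proof of Corollary \ref{cor:effective-models} to show that $p^*$ is an effective model for both parties at every optimum, then takes the interior first-order conditions, eliminates $\nabla\Gamma^a(p^*)$ to obtain the slope $\lambda_P/(\lambda_A+\lambda_P)$, and fixes the intercept by binding participation. The differences are cosmetic: you reduce explicitly to maximizing the sum $S(w)$, and you eliminate the gradient by weighting and subtracting rather than by adding and substituting back.
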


The condition in Corollary \ref{cor:firstbest-unique} has a simple economic meaning. The model that prices aggregate output must have full support, and the statistical penalty must assign a unique marginal cost to small reallocations of probability mass around that model. This condition rules out maxmin preferences because their ambiguity index is an indicator of a model set, so the minimizing model selected by a nonconstant aggregate payoff lies at a kink or boundary at which the required marginal statistical prices are not available. 

\par A  subclass of variational preferences relevant for Corollary \ref{cor:firstbest-unique} are those with $\phi$-divergences, which are popular in statistics and information theory \citep[see,][eq. (6)]{hansenmiss25}. Their Proposition 5 shows that these preferences satisfy a mild form of misspecification aversion---their reference models remain useful to evaluate bets. Strict convexity already makes the pooled effective model $p^*$ unique, so Corollary \ref{cor:firstbest-unique} additionally requires differentiability of $\phi$ at positive likelihood ratios and full support of $p^*$. Full support holds when $\phi'(0^+)=-\infty$, as for relative entropy.\footnote{Whereas when  $\phi'(0^+)$ is finite, as for the Gini index, full support must be checked separately.} Thus, multiplier preferences satisfy  Corollary \ref{cor:firstbest-unique}, so the linear contract $w^{FB}_a$ is uniquely optimal in this case. 
\par Figure \ref{fig:misspecification-sharing} retains the multiplier case to visualize Theorem
\ref{thm:firstbest-linear} and Corollary \ref{cor:effective-models}.  Fix an action $a$, let $\Y=\{1,2,3\}$,
$\lambda_A=\lambda_P=1$, $q^a=(1/3,1/3,1/3)$, $c(a)=0$, and
$U_0=M(q^a,y/2;1)
=-\log\frac{e^{-1/2}+e^{-1}+e^{-3/2}}{3}.
$
Binding participation implies $\alpha_a=0$. The horizontal axis measures the agent's misspecification exposure, $\lambda_Aw(y)=w(y)$, while the vertical axis measures the principal's, $\lambda_P[y-w(y)]=y-w(y)$. For every output $y\in\{1,2,3\}$, each gray downward-sloping line consists of all feasible divisions of $y$, defined by $w(y)+[y-w(y)]=y$; moving southeast  transfers exposure from  principal to agent.

\begin{figure}[hbt!]
\centering

\begin{tikzpicture}[scale=1.3, x=2.8cm,y=1.75cm]

    \draw[->,thick] (0,0) -- (3.85,0);
    \draw[->,thick] (0,0) -- (0,3.35);

    \node[align=center] at (1.9,-0.25)
        {Agent's exposure to misspecification};
    \node[rotate=90,align=center] at (-0.35,1.68)
        {Principal's exposure to misspecification};

    \draw[gray!55,thin] (0,1) -- (1,0);
    \draw[gray!55,thin] (0,2) -- (2,0);
    \draw[gray!55,thin] (0,3) -- (3,0);

    \node[gray!80,anchor=south west] at (1.02,0.02)
        {\small $y=1$};
    \node[gray!80,anchor=south west] at (2.02,0.02)
        {\small $y=2$};
    \node[gray!80,anchor=south west] at (3.02,0.02)
        {\small $y=3$};

    \draw[black,very thick] (0,0) -- (3.12,3.12)
        node[pos=0.82, sloped, above, black] {\small Effective-model alignment};

    \draw[orange!60!yellow,very thick,dashed]
        (0.5,0.5) -- (1,1) -- (1.5,1.5);
    \filldraw[orange!60!yellow] (0.5,0.5) circle (1.7pt);
    \filldraw[orange!60!yellow] (1,1) circle (1.7pt);
    \filldraw[orange!60!yellow] (1.5,1.5) circle (1.7pt);

    \draw[darkblue,very thick,densely dotted]
        (0.75,0.25) -- (1.5,0.5) -- (2.25,0.75);
    \filldraw[darkblue] (0.75,0.25) circle (1.5pt);
    \filldraw[darkblue] (1.5,0.5) circle (1.5pt);
    \filldraw[darkblue] (2.25,0.75) circle (1.5pt);

    \draw[brown,very thick,loosely dotted]
        (0.25,0.75) -- (0.5,1.5) -- (0.75,2.25);
    \filldraw[brown] (0.25,0.75) circle (1.5pt);
    \filldraw[brown] (0.5,1.5) circle (1.5pt);
    \filldraw[brown] (0.75,2.25) circle (1.5pt);

    \draw[darkred,very thick,dotted]
        plot[domain=1:3,samples=100,variable=\t]
        ({0.5*\t+0.25*(\t-2)*(\t-2)},
         {0.5*\t-0.25*(\t-2)*(\t-2)});
    \filldraw[darkred] (0.75,0.25) circle (1.5pt);
    \filldraw[darkred] (1,1) circle (1.5pt);
    \filldraw[darkred] (1.75,1.25) circle (1.5pt);

    \begin{scope}[shift={(2.50, 1.05)}, scale=0.81, transform shape]
        \filldraw[fill=white,fill opacity=0.94,draw=gray!45,
                  text opacity=1]
            (0,0) rectangle (1.63,1.15);

        \draw[orange!60!yellow,very thick,dashed]
            (0.05,0.99) -- (0.20,0.99);
        \node[anchor=west] at (0.22,0.99)
            {\small $w^{FB}_a(y)=\frac12y$};

        \draw[darkblue,very thick,densely dotted]
            (0.05,0.71) -- (0.20,0.71);
        \node[anchor=west] at (0.22,0.71)
            {\small $w(y)=\frac34y$};

        \draw[brown,very thick,loosely dotted]
            (0.05,0.43) -- (0.20,0.43);
        \node[anchor=west] at (0.22,0.43)
            {\small $w(y)=\frac14y$};

        \draw[darkred,very thick,dotted]
            (0.05,0.15) -- (0.20,0.15);
        \node[anchor=west] at (0.22,0.15)
            {\small $w(y)=\frac12y+\frac14(y-2)^2$};
    \end{scope}

\end{tikzpicture}
\vspace{-0.08in}
\caption{Misspecification sharing}
\label{fig:misspecification-sharing}
\end{figure}

 \par The $45$-degree line is the \textit{effective-model alignment} line: it is derived from Corollary \ref{cor:effective-models}, which requires $\pi_P^a(y;w)=\pi_A^a(y;w)$ for all $y$, or equivalently $w(y)=y-w(y)$ for all $y$, since $\lambda_A=\lambda_P=1$ and $\alpha_a=0$. Each curve connects the allocations selected by one contract across the three outputs. The nonlinear contract $w(y)=y/2+(y-2)^2/4$ is aligned at $y=2$ but changes the division of output across realizations and therefore fails to align the parties' effective models at all three outputs. The linear contracts $w(y)=3y/4$ and $w(y)=y/4$ allocate a constant output share, but give the agent too much and too little misspecification exposure, respectively. In contrast, the optimal contract $w^{FB}_a(y)=y/2$ lies directly on the alignment line at every output. Thus, linearity keeps the sharing rule constant across outputs, while the unique slope $\frac{\lambda_P}{\lambda_A+\lambda_P}=\frac{1}{2}$ is optimal because it selects the constant share that aligns the parties' marginal valuations at every output.

\subsection{Robustness pooling and action choice}\label{subsec:robustness-pooling}

The linear sharing rule also has an implication for the value of the agency relationship. The contract does more than divide a fixed robust surplus: by allocating output exposure across two parties, it determines how costly misspecification is for the organization. The next result summarizes this effect by identifying a \textit{pooled} robustness parameter.

\begin{proposition}\label{prop:robustness-pooling}
Suppose $\lambda_P>0$ and define $\lambda_{AP}:=\frac{\lambda_A\lambda_P}{\lambda_A+\lambda_P}$. For every observable action $a$, the principal's first-best value after optimizing over contracts is
$$
\max_{w: V_A^a(w)\ge U_0}V_P^a(w)
=
\mathcal V_{\lambda_{AP}}^a(y)-c(a)-U_0.
$$
Thus, if the principal also chooses the observable action, every first-best action belongs to
$$
\argmax_{a\in\Acal}
\Big\{
\mathcal V_{\lambda_{AP}}^a(y)-c(a)
\Big\}.
$$
The optimized first-best value weakly exceeds the principal's value under either a constant participating wage or a participating contract that gives the agent the entire output claim. If output is nonconstant and  Corollary \ref{cor:firstbest-unique} holds, then both inequalities are strict.
\end{proposition}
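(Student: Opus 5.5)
The plan is to reduce the constrained problem to an unconstrained inf-convolution of the two variational functionals, and then evaluate that inf-convolution using the structure of $\Gamma^a/\lambda$.

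\textbf{Step 1 (reduction).} By Lemma \ref{lem:FB-participation} and translation invariance, a contract $w$ can be shifted by a constant until participation binds. This shift changes $V_P^a$ one-for-one in the opposite direction. Hence the first-best value equals
$$\sup_{w}\big\{\mathcal V_{\lambda_P}^a(y-w)+\mathcal V_{\lambda_A}^a(w)\big\}-c(a)-U_0 .$$

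\textbf{Step 2 (upper bound).} For any $w$ and any $p\in\DeltaY$, each variational value is bounded above by its objective evaluated at $p$. Adding the two bounds gives
$$\mathcal V_{\lambda_P}^a(y-w)+\mathcal V_{\lambda_A}^a(w)\le \E_p[y]+\Big(\tfrac1{\lambda_A}+\tfrac1{\lambda_P}\Big)\Gamma^a(p)=\E_p[y]+\tfrac{1}{\lambda_{AP}}\Gamma^a(p).$$
Minimizing over $p$ shows the sum is at most $\mathcal V_{\lambda_{AP}}^a(y)$.

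\textbf{Step 3 (attainment).} Take $w=\frac{\lambda_P}{\lambda_A+\lambda_P}y$. Then $w=\frac{\lambda_{AP}}{\lambda_A}y$ and $y-w=\frac{\lambda_{AP}}{\lambda_P}y$. Using $\min_p\{\E_p[\theta x]+\Gamma/\lambda\}=\theta\min_p\{\E_p[x]+\Gamma/(\theta\lambda)\}$, both terms equal $\frac{\lambda_{AP}}{\lambda_i}\mathcal V_{\lambda_{AP}}^a(y)$. Their sum is therefore $\mathcal V_{\lambda_{AP}}^a(y)$, which proves the value formula and is consistent with Theorem \ref{thm:firstbest-linear}.

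The action-choice statement then follows immediately by maximizing the value formula over $a$.

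\textbf{Step 4 (weak comparisons).} A constant participating wage gives the principal $\mathcal V_{\lambda_P}^a(y)-c(a)-U_0$. Giving the agent the whole output gives $\mathcal V_{\lambda_A}^a(y)-c(a)-U_0$, since the intercept is set so that participation binds. Both are values of the sup in Step 1 at particular contracts, namely $w=0$ and $w=y$, so both are weakly below the optimum.

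\textbf{Step 5 (strictness).} This is the main obstacle. I would argue by contradiction using Corollary \ref{cor:firstbest-unique}: under its hypothesis the linear contract is the unique solution of $(\mathsf{FB}_a)$. If the constant contract (after its participation-binding shift) achieved the optimum, it would be a solution. Since it has slope $0\ne\frac{\lambda_P}{\lambda_A+\lambda_P}$ and output is nonconstant (so contracts with different slopes differ as functions on $\Y$), this contradicts uniqueness.

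The same argument applies to the slope-one contract $y-\alpha$. Both slopes differ from the optimal one because $\lambda_A,\lambda_P>0$.

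The only care needed is that uniqueness is stated for solutions of $(\mathsf{FB}_a)$, and that the shifted contracts are feasible with binding participation. Both hold by construction.
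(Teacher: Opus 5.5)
Your proof is correct. Steps 1--3 are the paper's argument for the value formula: the paper simply points back to the upper bound and attainment in the proof of Theorem \ref{thm:firstbest-linear}. Your treatment of the comparisons, however, takes a different route.

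\textbf{Weak inequalities.} The paper shows that $\lambda\mapsto\mathcal V_\lambda^a(x)$ is weakly decreasing and then uses $\lambda_{AP}<\min\{\lambda_A,\lambda_P\}$. You instead note that the two polar contracts are feasible points of the same maximization. This is shorter and needs nothing beyond the value formula.

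\textbf{Strict inequalities.} The paper works directly with the interior differentiability point $p^*$. First-order conditions together with nonconstant output rule out $\Gamma^a(p^*)=0$. Evaluating the $\lambda_A$ (resp.\ $\lambda_P$) criterion at $p^*$ then gives strictness, and the same computation yields the explicit bounds $\mathcal V_{\lambda_{AP}}^a(y)-\mathcal V_{\lambda_A}^a(y)\ge\Gamma^a(p^*)/\lambda_P$ and $\mathcal V_{\lambda_{AP}}^a(y)-\mathcal V_{\lambda_P}^a(y)\ge\Gamma^a(p^*)/\lambda_A$. You instead invoke the uniqueness conclusion of Corollary \ref{cor:firstbest-unique}. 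You observe that slopes $0$ and $1$ differ from $\lambda_P/(\lambda_A+\lambda_P)\in(0,1)$, so once $\Y$ has two points the binding-participation versions of the polar contracts are different functions on $\Y$. They therefore cannot attain the unique optimum.

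\textbf{What each route buys.} Yours is more economical because it reuses an already established result. The paper's is self-contained and ties the gain to the pooled parameter being smaller than either individual one. It also quantifies the pooling gain in terms of the statistical cost $\Gamma^a(p^*)$ of the pooled effective model, which a contradiction argument from uniqueness does not deliver.
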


Proposition \ref{prop:robustness-pooling} gives the linear contract an organizational interpretation. The parties behave as if total output were evaluated with the pooled misspecification parameter $\lambda_{AP}=\frac{\lambda_A\lambda_P}{\lambda_A+\lambda_P}$. Since $\lambda_{AP}<\min\{\lambda_A,\lambda_P\}$, the organization is effectively less sensitive to misspecification than either party would be if one side alone bore the entire output claim. This reduction does not come from risk diversification, as both observe the same output. Instead, dividing a single ambiguous payoff optimally dilutes the aggregate misspecification penalty. The sharing rule can therefore create a strict first-best gain relative to allocating the entire nonconstant output claim to either party.

\subsection{Discussion of key assumptions}\label{subsec:common-variational}
We now explore what happens to Theorem \ref{thm:firstbest-linear} when some of our assumptions are relaxed.

\subsubsection{Misspecification structure}
In our baseline analysis, the principal and agent use the same ambiguity index  while differing only in the intensity of their misspecification concerns. This means that they agree on the relative statistical cost of departures from the benchmark description of the action-output relationship but disagree about how cautious to be. This common statistical structure is what allows a single output share to align their valuations. \par Two departures can break that alignment: the parties may start from different reference models, creating disagreement about output probabilities, or they may use nonproportional ambiguity indices, assigning different relative statistical costs to the same probability distortion. Proposition \ref{prop:different-reference-models} illustrates the first departure in the multiplier case, while Proposition \ref{prop:different-divergences} characterizes the second one for a broad variational subclass.

\begin{proposition}\label{prop:different-reference-models}
Fix an action and suppose the parties have multiplier preferences with $\lambda_A,\lambda_P>0$. If the agent evaluates wages using $M(q_A,w;\lambda_A)$ while the principal evaluates residual output using $M(q_P,y-w;\lambda_P)$, then the first-best contract is unique and satisfies
$$w(y)=\alpha+\frac{\lambda_P}{\lambda_A+\lambda_P}\hspace{0.03in}y+\frac{1}{\lambda_A+\lambda_P}\log \frac{q_A(y)}{q_P(y)},$$
where the intercept $\alpha$ is uniquely determined by binding participation. Thus, the contract is linear in output if and only if $\log\frac{q_A(y)}{q_P(y)}$ is linear in output on $\Y$.
\end{proposition}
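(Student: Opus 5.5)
The proof works directly with the closed form of the multiplier criterion. It reduces the first-best problem to an unconstrained strictly concave program and reads the contract off the first-order conditions.

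First, I adapt Lemma \ref{lem:FB-participation} to heterogeneous reference models. The multiplier criterion is translation invariant: $M(q,x+k;\lambda)=M(q,x;\lambda)+k$. Adding a constant to $w$ therefore moves utility one-for-one between the parties, so participation binds at any optimum. The problem then becomes
\[
\max_{w}\; -\tfrac{1}{\lambda_P}\log \E_{q_P}\big[e^{-\lambda_P(y-w(y))}\big]\quad\text{s.t.}\quad -\tfrac{1}{\lambda_A}\log \E_{q_A}\big[e^{-\lambda_A w(y)}\big]=U_0 .
\]
Both objective and constraint are smooth on $\R^{\Y}$. The map $x\mapsto M(q,x;\lambda)$ is concave, and it is strictly concave in every direction other than constant shifts, because $q$ has full support and the log-sum-exp function is strictly convex off the constant direction.

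Second, I write the Lagrangian with multiplier $\mu$ and use the derivative formula $\partial M(q,x;\lambda)/\partial x(y)=\pi(y)$, where $\pi$ is the exponentially tilted model displayed in Section \ref{sec:model}. The first-order condition at each $y$ is $\pi_P(y;w)=\mu\,\pi_A(y;w)$. Summing over $y$ gives $\mu=1$, so the condition is equality of the two effective models. This is the analogue of Corollary \ref{cor:effective-models}, now with different reference models. Written out, it says that for some constant $K$ and all $y\in\Y$,
\[
q_P(y)e^{-\lambda_P(y-w(y))}=K\,q_A(y)e^{-\lambda_A w(y)} .
\]
Taking logs gives $(\lambda_A+\lambda_P)w(y)=\lambda_P y+\log\frac{q_A(y)}{q_P(y)}+\log K$, which is the stated formula with $\alpha=\log K/(\lambda_A+\lambda_P)$. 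Because the constraint fixes the level of $M(q_A,w;\lambda_A)$, which is strictly increasing in constant shifts, the intercept $\alpha$ is pinned down uniquely.

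Third, I establish existence and uniqueness, which I expect to be the main point requiring care. The first-order condition is sufficient, since it describes a tangency between a concave objective and a convex upper contour set of the concave constraint. To get uniqueness, I plan to parametrize the binding constraint surface by its directions modulo constants. Suppose $w$ and $w'$ are two distinct optima on that surface. Their difference cannot be constant, since both satisfy the binding constraint. Then the midpoint $\tfrac12(w+w')$ satisfies participation strictly or with equality, by concavity of the constraint. It also gives the principal a strictly higher value by strict concavity off constants, and adding a small constant if necessary preserves feasibility, which contradicts optimality of $w$ and $w'$. Alternatively, and more simply, one can note that the first-order system has exactly one solution up to $\alpha$, and $\alpha$ is fixed by the constraint.

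Finally, the "if and only if" clause is immediate from the formula: $w$ is affine in $y$ exactly when $\log\frac{q_A(y)}{q_P(y)}$ is affine on $\Y$.
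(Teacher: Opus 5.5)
Your proposal is correct and follows essentially the paper's route: participation binds by translation invariance, the multiplier condition $\pi_P=\mu\pi_A$ has $\mu=1$ forced by summing the effective models, log-inversion gives the formula, concavity (the supergradient or tangency argument) gives sufficiency, and uniqueness follows because the first-order system fixes the contract up to an intercept that participation pins down. The differences are minor. The paper uses the inequality-constrained convex program with Slater's condition rather than an equality-constrained Lagrangian; your constraint qualification holds trivially because the gradient $\pi_A$ is nonzero, and the binding level should read $U_0+c(a)$ rather than $U_0$. Your midpoint argument based on strict concavity away from constant shifts is a valid extra check of uniqueness.
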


The first two terms in this optimal contract are the sharing rule from Theorem \ref{thm:firstbest-linear}. The last term is different. It transfers wealth toward outputs that the agent regards as relatively more likely than the principal does and away from outputs that the principal regards as relatively more likely. It is therefore a disagreement term. Unless the likelihood ratio between the two reference models happens to have the appropriate exponential form in output, this term breaks linearity. Assuming a common reference model rules out this belief-disagreement channel and isolates the effect of bilateral misspecification concerns.

The proportionality assumption on the ambiguity indices is also substantive. The following result shows that, within a large subclass of variational preferences, proportionality is precisely the restriction that selects linear contracts for every output.

\begin{proposition}\label{prop:different-divergences}
Let $\Gamma_A,\Gamma_P:\DeltaY\to[0,\infty)$ be grounded, continuous, and strictly convex, and continuously differentiable on $\operatorname{ri}\DeltaY$. Let $\mathcal W_i(x):=\min_{p\in\DeltaY}\{\E_p[x]+\Gamma_i(p)\}$ for $i\in\{A,P\}$. Fix $s\in(0,1)$. For any $\alpha\in\R$, the sharing rule $w(y)=\alpha+sy$ is Pareto efficient for every $y\in\R^{\Y}$ if and only if $\Gamma_A(p)/s=\Gamma_P(p)/(1-s)$ for every $p\in\DeltaY$.
\end{proposition}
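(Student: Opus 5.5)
The plan is to reduce Pareto efficiency of a split $(w,y-w)$ to a statement about the parties' effective models, and then, for necessity, to turn that statement into an identity between the conjugate value functions. Since both parties are risk neutral and $\mathcal W_i(x+k)=\mathcal W_i(x)+k$ for constants $k$, utility is transferable by constant side payments. A split is therefore Pareto efficient iff it maximizes $\mathcal W_A(w)+\mathcal W_P(y-w)$ over $w\in\R^{\Y}$. By strict convexity, each $\mathcal W_i(x)$ has a unique minimizer $p_i(x)$. By Danskin's theorem, the concave map $\mathcal W_i$ is then differentiable with gradient $p_i(x)$. (The $C^1$ hypothesis on $\operatorname{ri}\DeltaY$ is not needed for this; strict convexity suffices.)

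\emph{Efficiency criterion.} The first step is to show that a split is Pareto efficient iff $p_A(w)=p_P(y-w)$, mirroring Corollary \ref{cor:effective-models} and \citet{rss08}.
\begin{itemize}[nosep]
\item If the two minimizers coincide at some $p^*$, then for any alternative $w'$,
$$
\mathcal W_A(w')+\mathcal W_P(y-w')\le \E_{p^*}[y]+\Gamma_A(p^*)+\Gamma_P(p^*)=\mathcal W_A(w)+\mathcal W_P(y-w),
$$
so $w$ maximizes the sum.
\item If the minimizers differ, the directional derivative of the sum along some $d$ equals $\E_{p_A}[d]-\E_{p_P}[d]>0$. A small transfer $\varepsilon d$, followed by a constant rebalancing, then makes both parties strictly better off.
\end{itemize}

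\emph{Sufficiency.} Suppose $\Gamma_A=s\Gamma$ and $\Gamma_P=(1-s)\Gamma$ for a common $\Gamma$. Then
$$
\E_p[\alpha+sy]+\Gamma_A(p)=\alpha+s\big(\E_p[y]+\Gamma(p)\big),\qquad
\E_p[y-w]+\Gamma_P(p)=-\alpha+(1-s)\big(\E_p[y]+\Gamma(p)\big).
$$
Both parties therefore minimize the same function $\E_p[y]+\Gamma(p)$ and share its minimizer. By the criterion above, the rule $w=\alpha+sy$ is efficient for every $y\in\R^{\Y}$.

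\emph{Necessity.} Set $G_A:=\Gamma_A/s$ and $G_P:=\Gamma_P/(1-s)$. The agent's minimizer at $w=\alpha+sy$ is $\argmin_p\{\E_p[y]+G_A(p)\}$, and the principal's is $\argmin_p\{\E_p[y]+G_P(p)\}$. Efficiency for every $y$ therefore forces these argmins to coincide for every $x\in\R^{\Y}$. Define $h_i(x):=\min_p\{\E_p[x]+G_i(p)\}$. By Danskin's theorem, $\nabla h_A=\nabla h_P$ everywhere, so $h_A-h_P$ is constant. Groundedness gives $h_A(0)=\min G_A=0=\min G_P=h_P(0)$, so $h_A=h_P$.

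\emph{Main obstacle.} The step I expect to be hardest is recovering $G_A=G_P$ from $h_A=h_P$. My first attempt is Fenchel biconjugation: each $G_i$ is convex and continuous on the compact simplex, so $G_i(p)=\sup_x\{h_i(x)-\E_p[x]\}$ (concave–convex conjugacy on $\DeltaY$). Hence $G_A=G_P$, which is exactly $\Gamma_A/s=\Gamma_P/(1-s)$. This avoids having to show that every $p$, including boundary points, is attained as a minimizer.
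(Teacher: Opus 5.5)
Your proposal is correct. The sufficiency half and the efficiency criterion (a split with unique effective models is Pareto efficient if and only if those models coincide) match the paper in substance. Your reduction to maximizing $\mathcal W_A(w)+\mathcal W_P(y-w)$ via transferable utility only repackages the paper's direct perturbation argument with $d=h+k\m{1}$.

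The genuine difference is in necessity.

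\emph{The paper's route.} It stays in the primal. For each $p\in\operatorname{ri}\DeltaY$ it builds the payoff $y=-\nabla F(p)$, which makes $p$ the agent's unique effective model. The common-model condition and the interior first-order conditions then give $\nabla(G-F)(p)$ proportional to $\m{1}$. It integrates along segments in the relative interior, extends to the boundary by continuity, and removes the constant by groundedness.

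\emph{Your route.} You pass to the dual. Coincidence of argmins for every payoff vector means the concave value functions $h_A,h_P$ have the same gradient everywhere on $\R^{\Y}$. So they differ by a constant, which is zero because $h_A(0)=h_P(0)=0$ by groundedness. Fenchel--Moreau then returns $G_A=G_P$ on all of $\DeltaY$, boundary included. This is equivalently the uniqueness of a grounded, convex, lower semicontinuous index in a variational representation, as in \citet{mmr06}.

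\emph{Comparison.} Your route is shorter. It does not need to realize each model as an effective model, and it needs no separate boundary-extension step. It also shows that continuous differentiability on $\operatorname{ri}\DeltaY$ is not needed at all: strict convexity and continuity suffice. The paper's route is more elementary, using only supporting hyperplanes and first-order conditions rather than conjugate duality. It also makes the economic content explicit: the two scaled indices must assign the same marginal statistical cost to moving probability mass at every interior model.
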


Smoothness and strict convexity in Proposition \ref{prop:different-divergences} have a transparent role: each payoff must induce a unique effective model, and the statistical penalty must have well-defined marginal costs of moving probability mass. This captures a broad class of smooth variational preferences, such as those whose ambiguity indices are smooth $\phi$-divergences. If the parties disagree about the relative statistical cost of models, a single slope may not align their marginal valuations, and therefore nonlinear contracts can become useful.

\subsubsection{Unilateral robustness and risk neutrality}\label{subsec:miao-sharing}

Suppose the variational ambiguity index $\Gamma^a$ has a unique minimizer $C^a=\{q^a\}$, so that the zero-concern boundary is the EU benchmark under the reference model $q^a$. As the agent's concern vanishes, $\lambda_A\downarrow0$, the sharing rule in Theorem \ref{thm:firstbest-linear} satisfies $\frac{\lambda_P}{\lambda_A+\lambda_P}\rightarrow1$. Thus, an EU agent becomes the residual claimant at the margin: a misspecification-concerned principal finds it optimal to transfer all marginal output exposure to the agent.

In \citeauthor{miao16}'s (\citeyear{miao16}) first-best dynamic benchmark, a robust principal would ideally have the risk-neutral agent insure her by absorbing all project cash-flow uncertainty; however, limited liability prevents complete transfer. Their moral-hazard problem then adds incentive and liquidation considerations, which make the amount of uncertainty borne by the agent state dependent. Theorem \ref{thm:firstbest-linear} isolates the underlying force: without those dynamic frictions, unilateral robustness leads to complete transfer of model uncertainty to the agent, while bilateral robustness determines the interior share $\frac{\lambda_P}{\lambda_A+\lambda_P}$.

\par As in the robust-contract literature \citep[][]{linear19}, risk neutrality is important for the interpretation of the linear sharing rule. It removes conventional risk sharing from the first-best problem, so that all output-dependent curvature in the parties' valuations comes from their concern about misspecification. If either party instead has nonlinear utility over money, then marginal utility itself varies with payoff realized in each state, so a linear contract need not equalize the two parties' marginal valuations across outputs.

\subsection{Relation to the robust-contract literature}\label{subsec:carroll}

The key distinction from \citet{linear15}, \citet{linear20}, and \citet{linear22} is the role that linearity plays. In those papers, linearity is a device for protecting incentives against incomplete knowledge of the contracting environment: uncertainty concerns the actions that a contract may induce, or the parameters of the output distribution over which incentive compatibility must remain valid. As summarized in \citet{linear19}, these are \textit{worst-case} approaches based on \citeauthor{gilboa89}'s (\citeyear{gilboa89}) maxmin logic. Theorem \ref{thm:firstbest-linear} instead has no incentive constraint and its variational logic differs from maxmin. The action is observable and fixed, and the two parties evaluate misspecification through proportional variational penalties over output models. Linearity is therefore an efficient sharing rule rather than a robust incentive rule: it divides output so that the parties attach the same marginal weight to every realization under their effective models.

The mathematical arguments in the proofs reflect this difference. \citet{linear15} obtains a supporting affine rule from the payoff guarantee generated by the set of technologies the principal considers possible, while \citet{linear22} abstracts the properties of contract responses that sustain this guarantee and \citet{linear20} exploits the geometry of a robust-optimization problem. As a result, the slope in these contracts typically depends on the technological primitives governing expected output and action cost \citep[e.g.,][eq. (12)]{linear15}. Moreover, most of these papers rely on limited liability, because without it, their problems would generally admit the familiar ``sell the firm to the agent'' solution. By contrast, our argument uses neither an unknown technology, limited liability, nor incentive constraints that must hold uniformly over an uncertainty set. First-best optimality instead requires a common effective model (Corollary \ref{cor:effective-models}), and proportional ambiguity indices make a constant output share implement that condition. This changes what determines the slope in our framework: the share $\lambda_P/(\lambda_A+\lambda_P)$ depends only on the parties' relative misspecification concerns.

\section{Moral Hazard}\label{sec:moralhazard}

The first-best analysis removed hidden actions in order to isolate misspecification sharing. The natural next step is to introduce hidden actions into our framework and study how the first-best sharing rule changes when the contract must also induce the desired action. Fix a desired action $a^*\in\Acal$. The principal chooses a contract to maximize her variational payoff under the desired action subject to participation and incentive compatibility:
$$
(\mathsf{MH}_{a^*})\qquad \max_{w\in\R^{\Y}} V_P^{a^*}(w)
\quad\text{subject to}\quad
V_A^{a^*}(w)\ge U_0,
\quad
V_A^{a^*}(w)\ge V_A^a(w)\ \forall a\ne a^*.
$$
The main question is whether hidden actions overturn the first-best sharing rule. We start by identifying when a first-best contract survives the addition of hidden actions.
\begin{lemma}\label{lem:MH-binding}
Every solution of $(\mathsf{MH}_{a^*})$ satisfies $V_A^{a^*}(w)=U_0$. If a first-best solution $w^{FB}_{a^*}$ satisfies $V_A^{a^*}(w^{FB}_{a^*})\ge V_A^a(w^{FB}_{a^*})$ for every $a\ne a^*$, then $w^{FB}_{a^*}$ solves $(\mathsf{MH}_{a^*})$.
\end{lemma}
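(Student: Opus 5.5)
The argument rests on translation invariance of the variational criterion: for any constant $k\in\R$ and any $\lambda\ge0$, $\mathcal V^a_\lambda(x+k)=\mathcal V^a_\lambda(x)+k$. This holds because the penalty $\Gamma^a(p)/\lambda$ (or the constraint $p\in C^a$ when $\lambda=0$) does not depend on $x$, so the minimizing set is unchanged and $\E_p[x+k]=\E_p[x]+k$ for every $p$.

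\emph{Binding participation.} Suppose $w$ solves $(\mathsf{MH}_{a^*})$ but $V_A^{a^*}(w)=U_0+\varepsilon$ with $\varepsilon>0$, and consider $\tilde w:=w-\varepsilon$.
\begin{itemize}
\item Participation: by translation invariance $V_A^{a}(\tilde w)=V_A^a(w)-\varepsilon$ for every action $a$, so $V_A^{a^*}(\tilde w)=U_0$.
\item Incentive compatibility: each constraint $V_A^{a^*}(\tilde w)\ge V_A^a(\tilde w)$ is obtained from the original one by subtracting $\varepsilon$ from both sides, so all incentive constraints still hold.
\item Principal's payoff: $V_P^{a^*}(\tilde w)=\mathcal V^{a^*}_{\lambda_P}(y-w+\varepsilon)=V_P^{a^*}(w)+\varepsilon>V_P^{a^*}(w)$.
\end{itemize}
This uses translation invariance for both $\lambda_P>0$ and the maxmin boundary $\lambda_P=0$. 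So $\tilde w$ is feasible and strictly better, contradicting optimality of $w$. (This is the same argument as Lemma \ref{lem:FB-participation}, now noting that the shift also preserves incentive compatibility.)

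\emph{First-best contracts that are incentive compatible solve the moral-hazard problem.} The feasible set of $(\mathsf{MH}_{a^*})$ is contained in that of $(\mathsf{FB}_{a^*})$, since it adds constraints with the same objective. Hence the value of $(\mathsf{MH}_{a^*})$ is at most the value of $(\mathsf{FB}_{a^*})$, and the latter is attained by $w^{FB}_{a^*}$. The first-best solution meets participation, and by hypothesis it satisfies every incentive constraint, so it is feasible for $(\mathsf{MH}_{a^*})$. It therefore attains the upper bound and solves $(\mathsf{MH}_{a^*})$.

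\emph{Main obstacle.} There is no substantive difficulty. The only points needing care are that translation invariance must be verified for the $\lambda_P=0$ maxmin case, which holds because $C^a$ is fixed, and that the first claim is about solutions when they exist, so existence never has to be established.
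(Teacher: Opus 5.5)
Your proof is correct and is essentially the paper's argument. A uniform downward wage shift preserves participation and every incentive difference while raising the principal's value one-for-one by translation invariance, and an incentive-compatible first-best contract solves $(\mathsf{MH}_{a^*})$ because that problem only adds constraints to $(\mathsf{FB}_{a^*})$. The only cosmetic difference is that you shift by the full slack rather than by some $\varepsilon$ strictly below it, which is equally valid since participation is a weak inequality.
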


\subsection{The moral-hazard contract}\label{subsec:MH-contract}

For any action $a\ne a^*$, call the incentive constraint against $a$ \textit{active} at $w$ if $V_A^{a^*}(w)=V_A^a(w)$. We say that a solution is \textit{smooth} if the effective-model sets entering the principal's objective, participation, and every active incentive constraint are singletons. A smooth solution is \textit{regular} if these differentiable constraints satisfy the Mangasarian-Fromovitz constraint qualification.\footnote{This is a standard sufficient condition for the existence of KKT multipliers. Appendix \ref{app:regularity} gives a characterization in terms of effective models and primitive sufficient conditions.} The next result is the moral-hazard analogue of Corollary \ref{cor:effective-models}.

\begin{theorem}\label{thm:MH-general}
Let $w^*$ be a smooth, regular solution of $(\mathsf{MH}_{a^*})$. Then, there exist nonnegative multipliers $\{\beta^a\}_{a\ne a^*}$, with $\beta^a=0$ for every inactive incentive constraint, such that
$\pi_P^{a^*}(\cdot;w^*)=\pi_A^{a^*}(\cdot;w^*)+\sum_{a\ne a^*}\beta^a[\pi_A^{a^*}(\cdot;w^*)-\pi_A^a(\cdot;w^*)]$.
\end{theorem}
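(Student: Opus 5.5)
The plan is to write the KKT conditions for $(\mathsf{MH}_{a^*})$ at $w^*$ and identify the gradients of the variational values with effective models through a Danskin-type envelope argument. For $\lambda>0$, the map $x\mapsto\mathcal V^a_\lambda(x)$ is a minimum of functions affine in $x$. It is therefore concave, and its superdifferential at $x$ equals $\co\Pi^a_\lambda(x)=\Pi^a_\lambda(x)$; the set $\Pi^a_\lambda(x)$ is already convex, since it is the argmin of a convex function. Consequently, when the effective-model set is a singleton, $\mathcal V^a_\lambda$ is differentiable at $x$ with gradient equal to that model, viewed as a vector in $\R^{\Y}$. I would state this as a preliminary step; the same argument covers $\lambda=0$ using the compact set $C^a$. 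This gives
\[
\nabla_w V_A^a(w^*)=\pi_A^a(\cdot;w^*),\qquad \nabla_w V_P^{a^*}(w^*)=-\pi_P^{a^*}(\cdot;w^*),
\]
where the minus sign comes from the chain rule applied to $y-w$. By smoothness, these hold for the objective, for participation, and for every active incentive constraint. The inactive constraints are slack and, by continuity, remain slack on a neighborhood of $w^*$, so they can be dropped locally.

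Next, I would apply the KKT theorem under the Mangasarian--Fromovitz constraint qualification. Only finitely many constraints are involved, and all the relevant functions are differentiable at $w^*$ (concave functions differentiable at a point are enough for the standard Fritz John/KKT derivation at a local maximizer). This yields multipliers $\mu\ge0$ for participation and $\beta^a\ge0$ for the active constraints $V_A^{a^*}-V_A^a\ge0$, with $\beta^a=0$ set for the inactive ones, such that
\[
-\pi_P^{a^*}+\mu\,\pi_A^{a^*}+\sum_{a\ne a^*}\beta^a\big(\pi_A^{a^*}-\pi_A^a\big)=0 .
\]
Lemma \ref{lem:MH-binding} shows participation binds, which is consistent with the equation, though not needed to obtain it.

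The last step pins down $\mu=1$. I would do this by summing the displayed identity over $y\in\Y$. Each effective model is a probability vector summing to one, and each difference $\pi_A^{a^*}-\pi_A^a$ sums to zero. Summing therefore gives $-1+\mu=0$, which is exactly the stated formula. Equivalently, this reflects translation invariance: a uniform shift of $w$ leaves every incentive constraint unchanged and moves the two parties' values one-for-one.

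I expect the main obstacle to be the rigorous justification of differentiability and KKT in the presence of possibly infinite-valued $\Gamma^a$. The key point is that the minimum in $\mathcal V^a_\lambda$ is attained on a compact set, namely the effective domain intersected with $\DeltaY$. Lower semicontinuity makes the minimum attained, and a singleton argmin, combined with the concave superdifferential characterization, gives Fréchet differentiability in finite dimensions. I would handle that step first through the concave-function argument, and then verify the KKT step against the standard finite-dimensional statement under MFCQ.
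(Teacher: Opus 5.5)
Your proposal is correct and follows essentially the paper's own proof. Both identify gradients with effective models by an envelope (Danskin-type) argument, apply KKT under the Mangasarian--Fromovitz constraint qualification, and take the inner product of stationarity with $\mathbf{1}$ to force the participation multiplier to equal one; your superdifferential justification of differentiability, and your remark that inequality-only KKT needs differentiability only at $w^*$, are slightly more careful versions of steps the paper handles by citing its preliminary appendix and the standard KKT theorem.
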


Theorem \ref{thm:MH-general} captures 
the main implications  of moral hazard. Recall that at the first best the principal and agent must agree on a probability model that prices marginal transfers. Hidden actions create an additional wedge. The vector $\pi_A^{a^*}-\pi_A^a$ is the output-by-output marginal effect of compensation on the agent's incentive advantage of the desired action over deviation $a$, while $\beta^a$ is the shadow value of that incentive constraint. The principal's marginal valuation therefore differs from the agent's desired-action valuation only through a nonnegative combination of the wedges required to deter active deviations. If no incentive constraint is active, all the $\beta^a$'s vanish and the condition in Theorem \ref{thm:MH-general} reduces to the common-effective-model condition in Corollary \ref{cor:effective-models}.

For general variational preferences, Theorem \ref{thm:MH-general} is naturally stated in terms of marginal robust probabilities because the mapping from effective models back to compensation depends on the ambiguity index. The next result shows that multiplier preferences make that inversion explicit and therefore yield a unique optimal contract in closed form.

\begin{corollary}\label{thm:MH-contract}
Suppose both parties have multiplier preferences and $(\mathsf{MH}_{a^*})$ is feasible.\footnote{Corollary \ref{thm:MH-contract} does not require the smoothness or regularity  conditions imposed in Theorem \ref{thm:MH-general}. Its proof works with a convex reformulation of $(\mathsf{MH}_{a^*})$ for multiplier criteria, so feasibility is sufficient.} Then, it has a unique solution $w^*$. There are nonnegative multipliers $\{\beta^a\}_{a\ne a^*}$ and $\alpha\in\R$, with $\beta^a=0$ for every inactive incentive constraint, such that
\begin{align*}
 w^*(y)=\alpha+\frac{\lambda_P}{\lambda_A+\lambda_P}y
 +\frac{1}{\lambda_A+\lambda_P}\log\Big(1+\sum_{a\ne a^*}\beta^a\Big[1-e^{\lambda_A[c(a)-c(a^*)]}\frac{q^a(y)}{q^{a^*}(y)}\Big]\Big)
\end{align*}
for every output $y\in\Y$. The expression inside the logarithm is strictly positive for every output. The intercept $\alpha$ is uniquely determined by binding participation.
\end{corollary}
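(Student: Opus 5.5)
Under multiplier preferences the problem $(\mathsf{MH}_{a^*})$ becomes convex once it is written in terms of exponentials of payments, and the stated formula is then the first-order condition of that convex program, inverted explicitly.

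\textbf{Step 1: change of variables.} Introduce $u(y):=e^{-\lambda_A w(y)}>0$, so that $w=-\lambda_A^{-1}\log u$.
The participation constraint $M(q^{a^*},w;\lambda_A)-c(a^*)\ge U_0$ becomes
$$\E_{q^{a^*}}[u]\le e^{-\lambda_A(U_0+c(a^*))},$$
a linear constraint in $u$. Each incentive constraint against $a\ne a^*$ reads $\E_{q^{a^*}}[u]\,e^{\lambda_A c(a^*)}\le \E_{q^a}[u]\,e^{\lambda_A c(a)}$, which is also linear in $u$.
The principal's objective is $-\lambda_P^{-1}\log\E_{q^{a^*}}\big[e^{-\lambda_P y}u^{-\lambda_P/\lambda_A}\big]$. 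Maximizing it amounts to minimizing $G(u):=\E_{q^{a^*}}[e^{-\lambda_P y}u^{-\lambda_P/\lambda_A}]$. Because $t\mapsto t^{-r}$ with $r=\lambda_P/\lambda_A>0$ is strictly convex on $(0,\infty)$ and $q^{a^*}$ has full support, $G$ is strictly convex.

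\textbf{Step 2: existence and uniqueness.} The feasible set is a polyhedron intersected with $u>0$, and it is nonempty by assumption. $G$ blows up as any $u(y)\downarrow 0$, and participation bounds $u$ from above. Hence $G$ attains its minimum on a compact sublevel set inside the open orthant. Strict convexity gives a unique minimizer $u^*$, and therefore a unique $w^*$.

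The case $\lambda_P=0$ is degenerate: the objective is then $\E_{q^{a^*}}[y-w]$. I would handle it separately, or note that the formula still follows from the same KKT argument with $G$ replaced by $\E_{q^{a^*}}[\lambda_A^{-1}\log u]$, which is concave in $u$; the problem then becomes minimizing a convex function again.

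Lemma~\ref{lem:MH-binding} gives binding participation.

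\textbf{Step 3: KKT conditions.} Since all constraints are linear, no Mangasarian--Fromovitz condition is needed, and KKT multipliers exist at $u^*$. There are $\mu\ge 0$ for participation and $\gamma^a\ge0$ for each incentive constraint, with complementary slackness. Stationarity at $y$, after dividing by $q^{a^*}(y)>0$, is
$$\tfrac{\lambda_P}{\lambda_A}e^{-\lambda_P y}u(y)^{-\lambda_P/\lambda_A-1}
=\mu+\sum_a\gamma^a\Big[e^{\lambda_A c(a^*)}-e^{\lambda_A c(a)}\tfrac{q^a(y)}{q^{a^*}(y)}\Big].$$
The left side is strictly positive, so the right side is positive at every $y$. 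Then:
\begin{itemize}
\item if $\mu=0$, the right side would be a nonnegative combination whose $q^{a^*}$-expectation is $\le 0$ (each bracket integrates to $e^{\lambda_A c(a^*)}-e^{\lambda_A c(a)}$, and active constraints have matching sign), which contradicts positivity, so $\mu>0$;
\item setting $\beta^a:=\gamma^a e^{\lambda_A c(a^*)}/\mu$, the right side becomes $\mu\big(1+\sum_a\beta^a[1-e^{\lambda_A(c(a)-c(a^*))}q^a/q^{a^*}]\big)$, which proves the positivity claim.
\end{itemize}

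\textbf{Step 4: inversion.} Substitute $u=e^{-\lambda_A w}$, so the left side equals $\tfrac{\lambda_P}{\lambda_A}e^{-\lambda_P y+(\lambda_P+\lambda_A)w(y)}$. Taking logs gives
$$(\lambda_A+\lambda_P)\,w(y)=\lambda_P y+\log(\text{bracket})+\text{const},$$
which is exactly the stated formula. Binding participation then pins down $\alpha$ uniquely, by translation invariance.

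\textbf{Main obstacle.} The delicate step is $\mu>0$. The argument above needs care: at an active constraint, $\E_{q^{a^*}}[u]e^{\lambda_A c(a^*)}=\E_{q^a}[u]e^{\lambda_A c(a)}$ need not force $e^{\lambda_A c(a^*)}\le e^{\lambda_A c(a)}$. If that route fails, I would instead multiply stationarity by $u^*(y)q^{a^*}(y)$ and sum. The incentive terms vanish by complementary slackness, which leaves
$$\mu\,\E_{q^{a^*}}[u^*]=\tfrac{\lambda_P}{\lambda_A}G(u^*)>0,$$
and hence $\mu>0$ directly.
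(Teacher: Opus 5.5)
Your proposal is correct and follows the paper's proof essentially step for step. The shared steps are: the change of variables $u=e^{-\lambda_A w}$, which makes participation and incentive compatibility linear; a strictly convex objective that blows up at the boundary of the orthant, giving existence and uniqueness; multipliers obtained from linearity of the constraints, which the paper derives explicitly via Farkas; multiplying stationarity by the optimizer and summing to pin down the participation multiplier; and inverting back to wages. The $\lambda_P=0$ case goes through by the same argument applied to $-\E_{q^{a^*}}[\log u]$.

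The one fix is to delete your first bullet rather than keep it as the primary route. The $q^{a^*}$-mean of the bracket, $e^{\lambda_A c(a^*)}-e^{\lambda_A c(a)}$, is strictly positive whenever an active deviation is cheaper than $a^*$, which is the typical case, so that bullet yields no contradiction. Your fallback is the right argument. It is slightly cleaner than the paper's version: your incentive constraints are homogeneous of degree one in $u$, so complementary slackness removes them outright. The paper instead first imposes binding participation as the equality $\E_{q^{a^*}}[t]=C_{a^*}$ and writes incentive compatibility as $\E_{q^a}[t]\ge C_a$, with $C_a:=e^{-\lambda_A[U_0+c(a)]}$. The same summation then leaves an extra term $\sum_a\gamma^aC_a$ in its expression for the participation multiplier.
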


Corollary \ref{thm:MH-contract} shows how hidden actions change the optimal contract in the multiplier case. The first two terms are exactly the first-best sharing rule from Theorem \ref{thm:firstbest-linear}. The logarithmic term is the incentive correction. It depends on likelihood ratios $q^a(y)/q^{a^*}(y)$ for deviations, adjusted by effort costs and by the agent's misspecification concern. Outputs that are relatively more likely under deviations must be made less attractive, while outputs that are relatively more likely under the desired action can be rewarded.

Notice that the coefficient on the nonlinear term is $1/(\lambda_A+\lambda_P)$. Thus, when the principal's misspecification concern $\lambda_P$ is sufficiently large and the correction term remains bounded, the moral-hazard contract is close to the first-best linear contract. The reason is that a highly robust principal is very reluctant to retain residual-output exposure, so the extra nonlinear correction needed for incentives receives a small coefficient.

\subsection{A floor on performance pay}\label{subsec:MH-slope-floor}

For the remainder of this section, we retain the multiplier specification so that the incentive correction is in closed form. Suppose the desired action $a^*$ dominates every deviation in the monotone likelihood-ratio order: $q^a(y)/q^{a^*}(y)$ is weakly decreasing in $y$ for every $a\neq a^*$. Then, the nonlinear term in Corollary \ref{thm:MH-contract} cannot undo the first-best component.

\begin{corollary}\label{prop:MH-slope-floor}
Let $w^*$ be the solution of $(\mathsf{MH}_{a^*})$ in Corollary \ref{thm:MH-contract}, and suppose $q^a(y)/q^{a^*}(y)$ is weakly decreasing in $y$ for every $a\ne a^*$. Then, for any $y'>y$ in $\Y$,
$$
w^*(y')-w^*(y)\ge \frac{\lambda_P}{\lambda_A+\lambda_P}(y'-y).
$$
If $\lambda_P>0$, the optimal contract is strictly increasing. Moreover, the inequality is strict between $y$ and $y'$ whenever some deviation with $\beta^a>0$ has $q^a(y')/q^{a^*}(y')<q^a(y)/q^{a^*}(y)$.
\end{corollary}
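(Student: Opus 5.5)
The plan is to read the inequality directly off the closed form in Corollary \ref{thm:MH-contract}. Write $\kappa:=\frac{1}{\lambda_A+\lambda_P}$ and
$$
G(y):=1+\sum_{a\ne a^*}\beta^a\Big[1-e^{\lambda_A[c(a)-c(a^*)]}\frac{q^a(y)}{q^{a^*}(y)}\Big],
$$
so that $w^*(y)=\alpha+\frac{\lambda_P}{\lambda_A+\lambda_P}y+\kappa\log G(y)$. Corollary \ref{thm:MH-contract} guarantees $G(y)>0$ on $\Y$ and $\beta^a\ge 0$. Then for $y'>y$,
$$
w^*(y')-w^*(y)-\tfrac{\lambda_P}{\lambda_A+\lambda_P}(y'-y)=\kappa\big[\log G(y')-\log G(y)\big].
$$
This reduces the claim to showing that $G$ is weakly increasing on $\Y$. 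Because $\kappa>0$ (recall $\lambda_A>0$) and $\log$ is strictly increasing on $(0,\infty)$, monotonicity of $G$ gives the floor.

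Next I would expand
$$
G(y')-G(y)=\sum_{a\ne a^*}\beta^a e^{\lambda_A[c(a)-c(a^*)]}\Big[\frac{q^a(y)}{q^{a^*}(y)}-\frac{q^a(y')}{q^{a^*}(y')}\Big].
$$
Each term is a product of $\beta^a\ge 0$, a strictly positive exponential, and a bracket that is $\ge 0$ by the monotone likelihood-ratio hypothesis. Hence $G(y')\ge G(y)$, which proves the inequality.

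For strict monotonicity when $\lambda_P>0$, the right-hand side $\frac{\lambda_P}{\lambda_A+\lambda_P}(y'-y)$ is then strictly positive for $y'>y$, so $w^*(y')>w^*(y)$.

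For the strictness clause, suppose some $a$ has $\beta^a>0$ and $q^a(y')/q^{a^*}(y')<q^a(y)/q^{a^*}(y)$. Its term in the sum is strictly positive, and all other terms are nonnegative, so $G(y')>G(y)$. Strict monotonicity of $\log$ on positives then makes the floor inequality strict.

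The only delicate point is the positivity of $G$, which we need so that the logarithm is well defined and strictly monotone. Corollary \ref{thm:MH-contract} supplies this, so I expect no real obstacle. I would only check that the multipliers $\beta^a$ are the same ones in the closed form, and that the MLR assumption is used with the correct orientation, namely ratios $q^a/q^{a^*}$ decreasing in $y$.
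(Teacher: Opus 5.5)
Your proposal is correct and follows the paper's proof essentially step for step. Both rewrite $w^*$ using the closed form in Corollary~\ref{thm:MH-contract} and rely on its guarantee that the logarithm's argument is strictly positive. Both then use the likelihood-ratio hypothesis with $\beta^a\ge0$ to show that this argument is weakly increasing, and strictly so when some deviation with $\beta^a>0$ has a strict likelihood-ratio drop. The floor, and strict monotonicity when $\lambda_P>0$, then follow directly.
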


Corollary \ref{prop:MH-slope-floor} gives the first-best sharing rule a second interpretation under moral hazard: it is a lower bound on performance sensitivity. Under the likelihood-ratio ordering, incentives can steepen the contract but cannot flatten it below the amount of output sharing that would already be optimal with observable action. Thus, the nonlinear term in Corollary \ref{thm:MH-contract} rewards outputs that are relatively diagnostic of the desired action on top of a baseline share that is required to allocate misspecification exposure efficiently.

A comparison with \citeauthor{smooth05}'s (\citeyear{smooth05}) smooth ambiguity is also informative. \citet{kellner17} shows that smooth-ambiguity preferences can generate nonmonotone optimal incentive schemes even under a natural likelihood-ratio condition. Under multiplier robustness, the first-best sharing component instead provides a quantitative monotonicity floor whenever the desired action monotone-likelihood-ratio dominates its deviations.

\subsubsection{A two-output case}\label{subsec:twooutcome}

The closed-form formula in Corollary \ref{thm:MH-contract} is especially transparent with two outputs and two actions. This case yields a closed-form incentive bonus and a sharp implementability limit. Let $\Y=\{0,1\}$ and $\Acal=\{H,L\}$. Write $q^H(1)=p_H$ and $q^L(1)=p_L$, with $0<p_L<p_H<1$, and let $\Delta c_{HL}:=c(H)-c(L)>0$. A contract has $w(0)=s$ and $w(1)=s+b$. For $p\in(0,1)$, define $\Psi(p,b):=-\frac{1}{\lambda_A}\log(1-p+pe^{-\lambda_A b})$. Incentive compatibility for $H$ is $\Psi(p_H,b)-\Psi(p_L,b)\ge\Delta c_{HL}$. The robust compensation advantage of $H$ is bounded even as the bonus becomes arbitrarily large. Define $\overline C_{HL}:=\frac{1}{\lambda_A}\log\frac{1-p_L}{1-p_H}.$

\begin{corollary}\label{cor:two-output-bonus}
A finite contract implements $H$ if and only if $\Delta c_{HL}<\overline C_{HL}$. If this holds,
$$
 b^*=\max\Big\{b^{IC},\frac{\lambda_P}{\lambda_A+\lambda_P}\Big\}
$$
is the optimal bonus for implementing $H$, where
$$
b^{IC}=\frac{1}{\lambda_A}\log
\frac{p_H-e^{-\lambda_A\Delta c_{HL}}p_L}
{e^{-\lambda_A\Delta c_{HL}}(1-p_L)-(1-p_H)}.
$$
The optimal fixed payment $s^*$ is then uniquely determined by binding participation.
\end{corollary}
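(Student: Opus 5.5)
The plan is to reduce $(\mathsf{MH}_H)$ to a one-dimensional problem in the bonus $b$. Three facts drive it: the objective is strictly concave in $b$ with peak at the first-best slope, the incentive constraint is an upper half-line in $b$, and the optimum is therefore the larger of the two thresholds.

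\textbf{Step 1 (reduction).} Under multiplier preferences, translation invariance gives $V_A^H(w)=s+\Psi(p_H,b)-c(H)$. Write $\Psi_P(p,x):=-\frac{1}{\lambda_P}\log(1-p+pe^{-\lambda_P x})$, read as the expectation $px$ when $\lambda_P=0$. Residual output is $-s$ at $y=0$ and $1-s-b$ at $y=1$, so $V_P^H(w)=-s+\Psi_P(p_H,1-b)$. Lemma \ref{lem:MH-binding} says participation binds, so $s=U_0+c(H)-\Psi(p_H,b)$. Substituting, the principal maximizes
$$f(b):=\Psi(p_H,b)+\Psi_P(p_H,1-b)$$
subject to the incentive constraint. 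Each term is strictly concave in $b$ because $p\in(0,1)$; for $\lambda_P=0$ the second term is affine. Hence $f$ is strictly concave. Its derivative is $\pi_A^H(1)-\pi_P^H(1)$, the difference of the effective-model weights on $y=1$. This derivative vanishes exactly when $\lambda_A b=\lambda_P(1-b)$, i.e.\ at $b=\lambda_P/(\lambda_A+\lambda_P)$. This is Theorem \ref{thm:firstbest-linear} specialized to two outputs, and it also covers $\lambda_P=0$, where the peak is $b=0$.

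\textbf{Step 2 (shape of the incentive constraint).} Let $D(b):=\Psi(p_H,b)-\Psi(p_L,b)$. Then $D'(b)=g(p_H,b)-g(p_L,b)$ with $g(p,b)=pe^{-\lambda_A b}/(1-p+pe^{-\lambda_A b})$. Since $g$ is strictly increasing in $p$, $D$ is strictly increasing on $\R$. Its boundary behavior is $D(0)=0$, $D(b)\to-\infty$ as $b\to-\infty$, and $D(b)\uparrow \overline C_{HL}$ as $b\to\infty$ because $\Psi(p,b)\to-\frac1{\lambda_A}\log(1-p)$. The supremum $\overline C_{HL}$ is not attained at any finite $b$. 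So a finite $b$ satisfying $D(b)\ge\Delta c_{HL}$ exists iff $\Delta c_{HL}<\overline C_{HL}$, which gives the implementability claim. When it exists, the feasible set is $[b^{IC},\infty)$, where $b^{IC}$ is the unique root of $D(b)=\Delta c_{HL}$.

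To solve for the root, set $x=e^{-\lambda_A b}$ and $k=e^{-\lambda_A\Delta c_{HL}}$. The equation becomes $k(1-p_L+p_Lx)=1-p_H+p_Hx$, which is linear in $x$. Its solution is $x=\frac{k(1-p_L)-(1-p_H)}{p_H-kp_L}$. The denominator is positive since $k<1$ and $p_L<p_H$. The numerator is positive exactly when $\Delta c_{HL}<\overline C_{HL}$. Taking logs yields the displayed $b^{IC}$.

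\textbf{Step 3 (optimization and main obstacle).} We maximize the strictly concave $f$ over $[b^{IC},\infty)$. If $\lambda_P/(\lambda_A+\lambda_P)\ge b^{IC}$, the unconstrained peak is feasible and optimal. Otherwise $f$ is decreasing on the feasible set, so the endpoint $b^{IC}$ is optimal. In both cases $b^*=\max\{b^{IC},\lambda_P/(\lambda_A+\lambda_P)\}$, and it is unique by strict concavity. Binding participation then pins down $s^*$. I expect Step 2 to be the only delicate part. It requires global monotonicity of $D$, including $b<0$, so that the constraint set is exactly a half-line. It also requires the non-attainment of $\overline C_{HL}$, which is what makes the inequality strict. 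The monotonicity of $g$ in $p$ handles both.
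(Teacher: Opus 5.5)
Your proposal is correct and follows the paper's proof essentially step for step: bind participation to get a strictly concave one-dimensional objective peaked at $\lambda_P/(\lambda_A+\lambda_P)$, show $D$ is strictly increasing with unattained supremum $\overline C_{HL}$, solve the linear equation in $e^{-\lambda_A b}$ for $b^{IC}$, and take the larger of the two thresholds. One small slip: $D(b)$ does not tend to $-\infty$ as $b\to-\infty$ (it tends to the finite value $-\lambda_A^{-1}\log(p_H/p_L)$), but this is harmless, since $D(0)=0<\Delta c_{HL}$ together with continuity and strict monotonicity already places the unique root in $(0,\infty)$.
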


The EU benchmark isolates what misspecification changes. If the agent is an EU maximizer, the incentive constraint is $(p_H-p_L)b\ge\Delta c_{HL}$, so $b^{EU}=\Delta c_{HL}/(p_H-p_L)$ and every finite cost difference can be overcome by a sufficiently large bonus; see Appendix \ref{app:EU} for these derivations. Here, it follows that $b^{IC}\to b^{EU}$ as $\lambda_A\downarrow0$, and locally $b^{IC}=b^{EU}+\frac{\lambda_A}{2}(1-p_H-p_L)(b^{EU})^2+O(\lambda_A^2).$ Thus, a small misspecification concern raises the incentive bonus when $p_H+p_L<1$ and lowers it when $p_H+p_L>1$: pessimism attenuates probability differences at low success rates but, since multiplier value is convex in the success probability, it can amplify them at high success rates. This local effect is distinct from the global implementability limit. Since $\overline C_{HL}$ falls with $\lambda_A$, sufficiently strong misspecification concern eventually makes $H$ impossible to implement.

The two terms inside the maximum operator of $b^*$ have different economic roles. The first, $b^{IC}$, is the minimum performance sensitivity required by hidden action and depends on the technology, the cost difference, and the agent's misspecification concern. The second, $\lambda_P/(\lambda_A+\lambda_P)$, is the first-best sharing slope and depends only on the parties' relative misspecification concerns. If $b^{IC}<\lambda_P/(\lambda_A+\lambda_P)$, the first-best contract already provides more than enough incentives and the incentive constraint is slack. At equality it binds without distorting the first-best sharing rule. If $b^{IC}>\lambda_P/(\lambda_A+\lambda_P)$, moral hazard raises performance sensitivity above its first-best level.

\section{Effort Polarization and Model Ambiguity}\label{sec:ext}

This section studies the effects of a continuum of actions and multiple candidate models.

\subsection{Effort polarization}\label{sec:continuum}

Section \ref{sec:moralhazard} studies moral hazard with finite actions. With a continuum of actions, it is tempting to replace the family of incentive constraints by the agent's first-order condition. Misspecification concerns introduce a curvature restriction. For a one-dimensional action $a$ and a fixed contract $w$, let $G(a;w):=\mathcal V_{\lambda_A}^a(w)$ be the agent's gross compensation value. Whenever $G(\cdot;w)$ is twice differentiable and an interior action $a$ is implemented, the first- and second-order necessary conditions imply $G_a(a;w)=c'(a)$ and $G_{aa}(a;w)\le c''(a)$. Equivalently, when $G_a(a;w)\neq0$, defining the local ambiguity-curvature index $\kappa_A(a;w):=G_{aa}(a;w)/G_a(a;w)^2$ gives the necessary inequality $c''(a)\ge\kappa_A(a;w)c'(a)^2$.

\begin{proposition}\label{prop:continuum-general}
Suppose $a\mapsto\mathcal V_{\lambda_A}^a(w)$ is twice differentiable near an interior action $a$ implemented by a finite contract $w$. Then, $\partial_a\mathcal V_{\lambda_A}^a(w)=c'(a)$ and $\partial_{aa}\mathcal V_{\lambda_A}^a(w)\le c''(a)$. If $c'(a)\neq0$, this inequality is equivalent to $c''(a)\ge\kappa_A(a;w)c'(a)^2$.
\end{proposition}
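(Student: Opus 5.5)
Implementation of an interior action $a$ by the finite contract $w$ means that $a$ maximizes the agent's payoff $a'\mapsto V_A^{a'}(w)=\mathcal V_{\lambda_A}^{a'}(w)-c(a')$ over the action set. The plan is to apply the elementary first- and second-order necessary conditions for an interior maximum of a real function of one variable to $h(a'):=G(a';w)-c(a')$, where $G(a';w)=\mathcal V_{\lambda_A}^{a'}(w)$ as defined before the statement, and then rewrite the second-order condition using the curvature index $\kappa_A$. I will take $c$ to be twice differentiable near $a$, as is implicit in the statement's use of $c'$ and $c''$. Together with the hypothesis on $\mathcal V_{\lambda_A}^{\cdot}(w)$, this makes $h$ twice differentiable on a neighborhood of $a$.

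The first step is the first-order condition. Because $a$ is interior and $h(a')\le h(a)$ for all $a'$ in a neighborhood, the one-sided difference quotients of $h$ at $a$ are $\le0$ from the right and $\ge0$ from the left. Differentiability then forces $h'(a)=0$, that is, $\partial_a\mathcal V_{\lambda_A}^a(w)=c'(a)$.

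The second step is the second-order condition. Using $h'(a)=0$, a second-order Taylor expansion (Peano remainder) gives $h(a+t)-h(a)=\tfrac12h''(a)t^2+o(t^2)$. If $h''(a)>0$, then $h(a+t)>h(a)$ for all small $t\neq0$, which contradicts optimality. Hence $h''(a)\le0$, that is, $\partial_{aa}\mathcal V_{\lambda_A}^a(w)\le c''(a)$.

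The third step is the equivalence. Suppose $c'(a)\neq0$. The first step gives $G_a(a;w)=c'(a)\neq0$, so $\kappa_A(a;w)=G_{aa}(a;w)/G_a(a;w)^2$ is well defined and $G_a(a;w)^2=c'(a)^2>0$. Substituting $G_{aa}=\kappa_A c'(a)^2$ into $G_{aa}\le c''(a)$ yields $c''(a)\ge\kappa_A(a;w)c'(a)^2$. The substitution is an identity, so the inequalities are equivalent.

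None of these steps is a real obstacle; the argument is routine calculus. The one point that needs care is the precise meaning of ``implemented'': it must mean that $a$ is a global, or at least local, maximizer of the agent's payoff at an interior point of an action interval. The participation constraint and the principal's objective play no role here. Likewise, the variational structure enters only through the assumed twice-differentiability of $a\mapsto\mathcal V_{\lambda_A}^a(w)$, so I do not need any envelope argument over effective models.
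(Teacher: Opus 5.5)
Your proposal is correct and follows the paper's proof essentially step for step: define the agent's net objective, apply the first- and second-order necessary conditions at an interior maximizer, and substitute the first-order condition into the definition of $\kappa_A$. The only difference is that you derive the elementary necessary conditions (via difference quotients and a Peano remainder), which the paper simply invokes.
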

 Different variational ambiguity indices can generate different curvature of compensation value in effort, so the implementability restriction depends on the local index $\kappa_A$. Multiplier preferences are special because this index becomes the constant $\lambda_A$, so we will focus on this case for the rest of this section to obtain some closed-form expressions. A success-contingent bonus makes the agent's robust value of compensation convex in the success probability, so stronger incentives can generate increasing rather than diminishing returns to effort. The consequence is that intermediate actions may cease to be implementable and the agent's response to performance pay can become discontinuous.

To isolate this intensive-margin effect, specialize to the binary-output environment $\Y=\{0,1\}$ from Section \ref{subsec:twooutcome} and replace the two actions with $\Acal=[\underline a,\bar a]\subset(0,1)$. We parameterize action by the probability of success, so that $q^a(1)=a$. This is a standard reparametrization of the continuum-effort, two-outcome moral-hazard model; \citet[Section 5.2.2]{laffontmartimort02}, for example, explicitly normalize effort to equal the probability of high performance.\footnote{More generally, if primitive effort $e$ generates success with a strictly increasing probability $\pi(e)$, one may define $a:=\pi(e)$ and write $c(a):=\psi(\pi^{-1}(a))$. Thus, $q^a(1)=a$ need not be interpreted as imposing an affine technology in primitive effort: it measures effort in units of the success probability.} Let $c:[\underline a,\bar a]\to\R$ be twice continuously differentiable with $c'(.)>0$. Write a contract as $w(0)=s$ and $w(1)=s+b$, where $b$ is the success bonus. The agent chooses $a$ to maximize $M(q^a,w;\lambda_A)-c(a)$. Since every variational value is translation invariant, $s$ does not affect the action choice; with unrestricted transfers, it can subsequently be adjusted to make participation bind.

For a fixed bonus $b$, the agent's gross compensation value is $s-\lambda_A^{-1}\log(1-a+a e^{-\lambda_A b})$. Under EU this expression is affine in $a$. In contrast, under multiplier robustness, it is strictly convex in $a$ whenever $b\ne0$. The following result gives two contract-independent restrictions that this curvature imposes on interior actions.

\begin{obs}\label{prop:continuum-interior}
If a finite contract implements an action $a\in(\underline a,\bar a)$, then $b>0$ and
\begin{equation}\label{eq:continuum-conditions}
 \lambda_A(1-a)c'(a)<1,
 \qquad
 c''(a)\ge \lambda_Ac'(a)^2.
\end{equation}
\end{obs}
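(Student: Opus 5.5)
Write $G(a):=-\lambda_A^{-1}\log(1-a+ae^{-\lambda_A b})$ for the agent's gross compensation value, dropping $s$, which does not affect the action choice. If an interior action $a\in(\underline a,\bar a)$ is implemented, then $a$ maximizes $G(a)-c(a)$ over $[\underline a,\bar a]$. Since $G$ is smooth in $a$, the first- and second-order necessary conditions of Proposition \ref{prop:continuum-general} apply: $G'(a)=c'(a)$ and $G''(a)\le c''(a)$. The plan is to compute $G'$ and $G''$ in closed form and read off both inequalities.

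Let $D(a):=1-a+ae^{-\lambda_A b}$ and $k:=1-e^{-\lambda_A b}$, so $D=1-ak$. Then
$$G'(a)=\frac{k}{\lambda_A D(a)},\qquad G''(a)=\frac{k^2}{\lambda_A D(a)^2}=\lambda_A G'(a)^2 .$$
This gives the constant curvature index $\kappa_A=\lambda_A$.

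For the sign of $b$: the first-order condition and $c'>0$ give $G'(a)>0$. Since $D>0$, this forces $k>0$, i.e.\ $b>0$.

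For the second inequality in \eqref{eq:continuum-conditions}, substitute $G'(a)=c'(a)$ into the second-order condition, which yields $c''(a)\ge \lambda_A c'(a)^2$.

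For the first inequality, solve the first-order condition for $k$. From $k=\lambda_A c'(a)(1-ak)$ we get
$$k\bigl(1+a\lambda_A c'(a)\bigr)=\lambda_A c'(a).$$
The constraint $b<\infty$ means $k<1$. This gives $\lambda_A c'(a)<1+a\lambda_A c'(a)$, i.e.\ $\lambda_A(1-a)c'(a)<1$.

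The only delicate point is this last step. We must use finiteness of the contract, which gives $k<1$ strictly, and positivity of $1+a\lambda_A c'(a)$ to divide. Both are immediate from $c'>0$ and $a\in(0,1)$.
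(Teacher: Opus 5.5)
Your proposal is correct and follows essentially the same route as the paper. Both compute $\partial_a M=k/[\lambda_A(1-ak)]$ and $\partial_{aa}M=\lambda_A(\partial_a M)^2$, deduce $b>0$ from the first-order condition and $c'>0$, obtain $c''(a)\ge\lambda_A c'(a)^2$ by substituting the first-order condition into the second-order condition, and get $\lambda_A(1-a)c'(a)<1$ from finiteness of $b$ (i.e.\ $k<1$). The only cosmetic difference is that you solve the first-order condition for $k$ before invoking $k<1$, whereas the paper compares $k/(1-ak)$ with $1/(1-a)$ directly and solves for $k$ only afterwards to obtain the stationary bonus.
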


The first inequality in \eqref{eq:continuum-conditions} is a marginal-incentive bound. Even an arbitrarily large finite success bonus cannot make the multiplier marginal value of effort exceed $1/[\lambda_A(1-a)]$. Conditional on this bound, the first-order condition pins down the only bonus that can make $a$ stationary, namely $b=\lambda_A^{-1}\log\big\{[1+\lambda_A a c'(a)]/[1-\lambda_A(1-a)c'(a)]\big\}$. The second inequality is the multiplier specialization of Proposition \ref{prop:continuum-general}: the increase in marginal effort cost must be large enough to offset the endogenous convexity of robust compensation. These restrictions have no counterpart in the EU benchmark. In particular, greater misspecification concern can remove interior actions from the implementable set even though transfers are unrestricted. The second restriction also explains why the first-order approach can fail precisely where it is most tempting to use it.

\begin{corollary}\label{cor:continuum-foa}
If a finite contract makes an interior action $a^*$ satisfy the agent's first-order condition and $c''(a^*)<\lambda_Ac'(a^*)^2$, then $a^*$ is a strict local minimum of the agent's objective and is not implemented by the contract.
\end{corollary}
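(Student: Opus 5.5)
The plan is to work directly with the agent's objective in the binary-output, multiplier setting. For a fixed contract $(s,b)$ it is
$$
F(a):=s-\frac{1}{\lambda_A}\log\big(1-a+ae^{-\lambda_A b}\big)-c(a).
$$
Write $g(a):=-\lambda_A^{-1}\log(1-a+ae^{-\lambda_A b})$ and $D(a):=1-a+ae^{-\lambda_A b}>0$. Differentiating gives
$$
g'(a)=\frac{1-e^{-\lambda_A b}}{\lambda_A D(a)},\qquad g''(a)=\frac{(1-e^{-\lambda_A b})^2}{\lambda_A D(a)^2}=\lambda_A g'(a)^2 .
$$
This exhibits the constant curvature index $\kappa_A=\lambda_A$ that Proposition \ref{prop:continuum-general} alluded to.

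Next, assume $a^*$ is interior and satisfies the first-order condition $g'(a^*)=c'(a^*)$. Then
$$
F''(a^*)=g''(a^*)-c''(a^*)=\lambda_A c'(a^*)^2-c''(a^*)>0
$$
by hypothesis. With $F'(a^*)=0$ and $F''(a^*)>0$, and since $F$ is $C^2$ on a neighborhood of $a^*$ ($c$ is $C^2$ and $g$ is smooth because $D>0$), the standard second-derivative test shows that $a^*$ is a strict local minimum of $F$. Concretely, by Taylor's theorem with continuity of $F''$, $F(a)>F(a^*)$ for all $a\neq a^*$ in a small interval around $a^*$ contained in $(\underline a,\bar a)$.

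Finally, a contract implements $a^*$ only if $a^*$ maximizes $F$ on $[\underline a,\bar a]$. That is impossible, because nearby interior actions yield strictly higher value, so $a^*$ is not implemented. Alternatively, the same conclusion follows immediately from Observation \ref{prop:continuum-interior}, whose second inequality is violated. The direct argument is preferable, though, because it also delivers the strict-local-minimum claim.

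There is no real obstacle here. The only point needing care is the identity $g''=\lambda_A (g')^2$, which holds for every finite $b$ (including $b\le0$, where both sides are still valid). Since $c'(a^*)>0$ forces $g'(a^*)>0$ at a stationary point, $b>0$ comes for free and is not needed for the argument.
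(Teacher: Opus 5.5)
Your proof is correct and follows essentially the same route as the paper. Both use the multiplier identity $\partial_{aa}M(q^a,w;\lambda_A)=\lambda_A[\partial_aM(q^a,w;\lambda_A)]^2$ and substitute the first-order condition to get the second derivative $\lambda_Ac'(a^*)^2-c''(a^*)>0$; the second-derivative test then makes $a^*$ a strict local minimum, so nearby feasible actions do strictly better and $a^*$ cannot be implemented.
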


Thus, replacing the global incentive constraints by a local first-order condition can identify an action that the agent moves away from in either direction. This matters because the principal may appear to be able to fine-tune effort by adjusting $b$, while the contract induces a discrete move to a distant action. The finite-action setting in Section \ref{sec:moralhazard} does not hide this issue because it retains every deviation constraint; a continuum setting must do the same unless additional conditions guarantee global concavity.

The main implication is effort \textit{polarization}. Suppose the curvature bound in Observation \ref{prop:continuum-interior} fails at every interior action. Let $\Delta c:=c(\bar a)-c(\underline a)$ and let $\overline C:=\frac{1}{\lambda_A}\log\frac{1-\underline a}{1-\bar a}$. The quantity $\overline C$ is the largest robust compensation advantage that the high endpoint can approach relative to the low endpoint as the success bonus becomes arbitrarily large. When $\Delta c<\overline C$, define the polarization threshold $\widetilde{b}:=-\frac{1}{\lambda_A}\log\frac{e^{-\lambda_A\Delta c}(1-\underline a)-(1-\bar a)}{\bar a-e^{-\lambda_A\Delta c}\underline a}$.

\begin{theorem}\label{thm:continuum-polarization}
Suppose $c''(a)<\lambda_Ac'(a)^2$ for every $a\in(\underline a,\bar a)$. If $\Delta c\ge\overline C$, every finite contract uniquely induces $\underline a$. If $\Delta c<\overline C$, then
\begin{equation}\label{eq:continuum-jump}
\argmax_{a\in[\underline a,\bar a]}\Big\{M(q^a,w;\lambda_A)-c(a)\Big\}
=
\begin{cases}
\{\underline a\},& b<\widetilde{b},\\
\{\underline a,\bar a\},& b=\widetilde{b},\\
\{\bar a\},& b>\widetilde{b}.
\end{cases}
\end{equation}
\end{theorem}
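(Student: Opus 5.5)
The first step is to use Observation \ref{prop:continuum-interior}, which has already been established. Under the hypothesis $c''(a)<\lambda_A c'(a)^2$ at every interior $a$, no finite contract implements an interior action. For a fixed bonus $b$, the objective $F(a;b):=s-\lambda_A^{-1}\log(1-a+ae^{-\lambda_A b})-c(a)$ is continuous on the compact interval $[\underline a,\bar a]$, so its argmax is nonempty. Since no interior point can belong to it, we get $\argmax\subseteq\{\underline a,\bar a\}$. The problem therefore reduces to comparing the two endpoints through
$$D(b):=F(\bar a;b)-F(\underline a;b)=\Psi(\bar a,b)-\Psi(\underline a,b)-\Delta c,$$
where $\Psi$ is defined as in Section \ref{subsec:twooutcome}. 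The argmax equals $\{\underline a\}$, $\{\underline a,\bar a\}$ or $\{\bar a\}$ according to whether $D(b)<0$, $D(b)=0$ or $D(b)>0$.

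Next I will study $D$ as a function of $b\in\R$. Writing $x=e^{-\lambda_A b}\in(0,\infty)$, we have
$$D=\frac{1}{\lambda_A}\log\frac{1-\underline a+\underline a x}{1-\bar a+\bar a x}-\Delta c.$$
The map $r(x):=(1-\underline a+\underline a x)/(1-\bar a+\bar a x)$ is a linear-fractional map whose derivative has the sign of $\underline a(1-\bar a)-\bar a(1-\underline a)=\underline a-\bar a<0$. So $r$ is strictly decreasing in $x$, and hence $D$ is strictly increasing in $b$. As $b\to-\infty$ we have $x\to\infty$, so $r\to\underline a/\bar a<1$ and $D\to\lambda_A^{-1}\log(\underline a/\bar a)-\Delta c<0$, using $\Delta c>0$ (which follows from $c'>0$). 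As $b\to+\infty$ we have $x\to0$, so $r\to(1-\underline a)/(1-\bar a)$ and $D\to\overline C-\Delta c$, and this supremum is not attained.

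If $\Delta c\ge\overline C$, then $D(b)<\sup D\le0$ for every finite $b$, so $\underline a$ is the unique maximizer. If $\Delta c<\overline C$, the intermediate value theorem together with strict monotonicity gives a unique root $\widetilde b$ of $D$, and the trichotomy in \eqref{eq:continuum-jump} follows immediately. It remains to identify the root in closed form. Setting $r(x)=e^{\lambda_A\Delta c}$ and solving the linear equation
$$1-\underline a+\underline a x=e^{\lambda_A\Delta c}(1-\bar a+\bar a x)$$
gives
$$x=\frac{e^{-\lambda_A\Delta c}(1-\underline a)-(1-\bar a)}{\bar a-e^{-\lambda_A\Delta c}\underline a}.$$
Both the numerator and the denominator are positive: the numerator because $\Delta c<\overline C$, and the denominator because $e^{-\lambda_A\Delta c}<1$ and $\underline a<\bar a$. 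Then $\widetilde b=-\lambda_A^{-1}\log x$, which matches the stated threshold.

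The only delicate step is the reduction to the endpoints, that is, making sure Observation \ref{prop:continuum-interior} excludes every interior maximizer, including the case $b\le0$. That observation already covers this case, because it asserts $b>0$ together with the curvature inequality for any implemented interior action. If I wanted a self-contained argument instead, I would note that the gross compensation value is strictly convex in $a$ for $b\neq0$. For $b\le0$ it is also nonincreasing in $a$ while $c$ is strictly increasing, which forces $\underline a$. The one remaining subtlety is that $\underline a$ and $\bar a$ lie strictly inside $(0,1)$, which is needed for the limits of $r$ above to be finite and positive.
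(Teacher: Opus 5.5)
Your proposal is correct and follows the paper's proof essentially step for step: you rule out interior maximizers via Observation~\ref{prop:continuum-interior}, reduce to the sign of the endpoint difference $D(b)$, show that $D$ is strictly increasing with unattained supremum $\overline C-\Delta c$, and solve for the unique root $\widetilde b$. The only differences are cosmetic. You obtain monotonicity from the sign of the linear-fractional map's derivative rather than by differentiating $D$ directly, and you use the $b\to-\infty$ limit rather than $D(0)=-\Delta c<0$ to exhibit a negative value.
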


Theorem \ref{thm:continuum-polarization} shows that a continuum of available effort levels need not generate a continuum of contractible effort levels. When cost curvature is too weak, the agent chooses only the two endpoints. Below the threshold $\widetilde{b}$, incentives are too weak to compensate for the additional cost of high effort; above $\widetilde{b}$, the multiplier criterion's increasing marginal valuation of the success bonus makes maximal effort optimal. Thus, an arbitrarily small increase in performance pay around $\widetilde{b}$ produces a discontinuous jump from $\underline a$ to $\bar a$.

\subsubsection{Illustration: polarization and discontinuity}

We provide a simple example to visualize the discontinuity in Theorem \ref{thm:continuum-polarization}. Let $\lambda_A=2$, $[\underline a,\bar a]=[0.20,0.80]$, and $c(a)=0.40a+0.05a^2$. The cost function is strictly convex, but $c''(a)=0.10<2c'(a)^2$ on the interior, so robust-compensation curvature dominates cost curvature. Moreover, $\Delta c=0.27<\overline C=0.69$, and Theorem \ref{thm:continuum-polarization} gives $\widetilde b=0.47$.

Figure \ref{fig:polarization-objective} plots the agent's robust payoff over all actions. When $b=0.40<\widetilde b$, the lowest action yields the highest payoff; when $b=0.55>\widetilde b$, the highest action does. At $b=\widetilde b$, the payoff curve is U-shaped and the two endpoints have the same value, while every interior action is worse. This provides a visual illustration of Theorem \ref{thm:continuum-polarization}: only endpoints are optimal, and crossing the threshold reverses which endpoint the agent prefers.

\begin{figure}[hbt!]
\centering
\begin{tikzpicture}
\begin{axis}[
width=0.78\textwidth,
height=8.5cm,
xmin=0.19,xmax=0.81,
ymin=-0.065,ymax=0.035,
xtick={0.20,0.30,0.40,0.50,0.60,0.70,0.80},
ytick={-0.06,-0.04,-0.017,0,0.02}, 
yticklabels={$-0.06$,$-0.04$,$-0.017$,$0$,$0.02$}, 
scaled y ticks=false,
yticklabel style={/pgf/number format/fixed},
xlabel={Action $a$},
ylabel={Agent's robust payoff},
label style={font=\small},
tick label style={font=\small},
legend style={
draw=none,
fill=none,
font=\scriptsize,
at={(0.02,0.98)},
anchor=north west
},
axis lines=left
]

\addplot[lightgray, dashed, domain=0.19:0.80, samples=2, forget plot] {-0.017};

\addplot[very thick,darkblue,domain=0.20:0.80,samples=180]
{-0.5*ln(1-x+x*exp(-2*0.40))-(0.40*x+0.05*x^2)};
\addlegendentry{$b=0.40$}

\addplot[very thick,forestgreen,domain=0.20:0.80,samples=180]
{-0.5*ln(1-x+x*exp(-2*0.471455679))-(0.40*x+0.05*x^2)};
\addlegendentry{$\widetilde b=0.47$}

\addplot[very thick,bittersweet,domain=0.20:0.80,samples=180]
{-0.5*ln(1-x+x*exp(-2*0.55))-(0.40*x+0.05*x^2)};
\addlegendentry{$b=0.55$}
\end{axis}
\end{tikzpicture}
\caption{Agent's robust payoff over the action set. The fixed payment is omitted.}
\label{fig:polarization-objective}
\end{figure}

Figure \ref{fig:polarization-response} shows what this geometry implies for incentive design. Under multiplier preferences, the induced-action correspondence consists only of the two horizontal branches. At $b=\widetilde b$, both $0.20$ and $0.80$ are optimal; there is no vertical branch connecting them because no intermediate action is implementable. An arbitrarily small increase in the bonus through $0.47$ therefore moves the agent from the lowest to the highest action.

\begin{figure}[hbt!]
\centering
\begin{tikzpicture}
\begin{axis}[
width=0.78\textwidth,
height=8.0cm,
xmin=0.39,xmax=0.55,
ymin=0.15,ymax=0.85,
xtick={0.40,0.42,0.471455679,0.50,0.55},
xticklabels={$0.40$,$0.42$,$\widetilde b=0.47$,$0.50$,$0.55$},
ytick={0.20,0.30,0.40,0.50,0.60,0.70,0.80},
xlabel={Success bonus $b$},
ylabel={Induced action},
label style={font=\small},
tick label style={font=\small},
legend style={
draw=none,
fill=none,
font=\scriptsize,
at={(0.02,0.98)},
anchor=north west 
},
axis lines=left
]
\addplot[very thick,darkblue,domain=0.39:0.471455679,samples=2] {0.20};
\addlegendentry{Multiplier preferences}

\addplot[very thick,darkblue,domain=0.471455679:0.55,samples=2,forget plot] {0.80};

\addplot[only marks,mark=*,mark size=2.4pt,darkblue,forget plot]
coordinates {(0.471455679,0.20) (0.471455679,0.80)};

\addplot[very thick,orange,dashed,domain=0.39:0.42,samples=2] {0.20};
\addlegendentry{EU benchmark}

\addplot[very thick,orange,dashed,domain=0.42:0.48,samples=100,forget plot] {10*x-4};

\addplot[very thick,orange,dashed,domain=0.48:0.55,samples=2,forget plot] {0.80};
\end{axis}
\end{tikzpicture}
\caption{Discontinuity in implementation. Parameters: $\lambda_A=2$, $[\underline a,\bar a]=[0.20,0.80]$, and $c(a)=0.40a+0.05a^2$; the EU benchmark uses the same action set and cost function.}
\label{fig:polarization-response}
\end{figure}

In summary, stronger performance pay need not operate on an intensive margin. If the principal would prefer an intermediate action, adjusting the bonus cannot fine-tune effort toward that target: near $\widetilde b$, she must choose between an action below it and one above it. This is stronger than the conventional wisdom that moral hazard depresses effort; misspecification concern can instead disconnect the implementable action set.

\subsubsection{EU benchmark}

Figure \ref{fig:polarization-response} also isolates the source of the discontinuity. With the same strictly convex cost function, the EU agent moves continuously through every interior action as the bonus rises from $0.42$ to $0.48$. The next result formalizes this EU benchmark.

\begin{obs}\label{prop:continuum-eu}
Suppose $c$ is strictly convex. If the agent is an EU maximizer, every action $a\in(\underline a,\bar a)$ is uniquely implementable by the success bonus $b=c'(a)$. On the interior range of bonuses, the induced action is $(c')^{-1}(b)$ and varies continuously with $b$.
\end{obs}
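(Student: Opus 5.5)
Under EU, the agent's objective for contract $w(0)=s$, $w(1)=s+b$ is $s+ab-c(a)$, which is affine in $a$ minus the cost. I will treat the statement as a pure concavity argument. Strict convexity of $c$ makes $a\mapsto s+ab-c(a)$ strictly concave on $[\underline a,\bar a]$. Its maximizer is therefore unique for every $b$, and the first-order condition is both necessary and sufficient at interior points. The fixed payment $s$ drops out, by the same translation-invariance remark used before Observation \ref{prop:continuum-interior}; it is adjusted afterwards to make participation bind.

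The key steps, in order, are as follows.

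First, fix an interior $a$. If a bonus $b$ implements $a$, then the first-order necessary condition $b=c'(a)$ must hold. This gives uniqueness of the implementing bonus.

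Second, conversely, set $b=c'(a)$. Then the derivative $b-c'(\cdot)$ is positive to the left of $a$ and negative to the right, because $c'$ is strictly increasing under strict convexity. So $a$ is the unique global maximizer on $[\underline a,\bar a]$, and $a$ is implementable.

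Third, the interior range of bonuses is the interval $(c'(\underline a),c'(\bar a))$. On it, the unique maximizer solves $c'(a)=b$, so the induced action is $(c')^{-1}(b)$. Outside this range, the same sign argument puts the maximizer at the corresponding endpoint.

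The one point needing care is continuity, since strict convexity alone does not make $c'$ strictly increasing in the smooth sense ($c''$ could vanish at isolated points). Even so, $c'$ is continuous, because $c$ is $C^2$, and strictly increasing, because $c$ is strictly convex. A continuous strictly increasing function on an interval has a continuous inverse on its image, so $(c')^{-1}$ is continuous on $(c'(\underline a),c'(\bar a))$.

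As a cross-check, Berge's maximum theorem gives the same conclusion directly. The objective is jointly continuous in $(a,b)$, the feasible set is compact, and the maximizer is unique, so the argmax varies continuously in $b$. I expect no real obstacle: the only step to state carefully is the strict monotonicity of $c'$, which is used both for uniqueness and for the continuity of the inverse.
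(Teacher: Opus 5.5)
Your proposal is correct and follows essentially the paper's argument. The paper gets unique implementation from the strict supporting-line inequality $c(a)>c(a^*)+c'(a^*)(a-a^*)$ rather than from the sign of $b-c'(\cdot)$, and gets continuity, as you do, from the continuous inverse of the continuous, strictly increasing $c'$; your extra step showing that the first-order condition forces $b=c'(a)$ (so the implementing bonus is itself unique) is a harmless strengthening the paper does not spell out.
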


Under EU, the agent's objective is $s+ab-c(a)$: gross compensation is affine in effort, so strict convexity of cost is enough to make the problem strictly concave. Under multiplier preferences, gross compensation itself is convex, and convex effort costs need not dominate this force. Moving from discrete to continuous actions therefore reveals a distinct effect of misspecification concern that is not present in the finite-action problem: robustness can change not only the cost of implementing an action, but also the shape and connectedness of the implementable action set.

\subsection{Model ambiguity and dispersion}\label{subsec:model-libraries}

The baseline analysis attaches one variational ambiguity index to each action. However, organizations often work with a set of plausible reference models for the same action, which may reflect different forecasts, scientific theories, or expert assessments. The multitude of models introduces a new friction in the agency relationship: \textit{model ambiguity}. This is the structure used by \citet{kellner17} in agency problems with smooth ambiguity, where each action is associated with a second-order belief over candidate output distributions, and model-specific evaluations are aggregated with respect to that belief in a Bayesian way. We adopt the same Bayesian ``model library''  while retaining a variational structure that captures misspecification concerns about each candidate model.

Let $\Gamma:\DeltaY\times\DeltaY\to[0,\infty]$ be lower semicontinuous, jointly convex, and satisfy $\Gamma(p,q)=0\iff p=q$. For $\lambda>0$, define $\mathcal W_\lambda(q,x):=\min_{p\in\DeltaY}\big\{\E_p[x]+\frac{1}{\lambda}\Gamma(p,q)\big\}$, and set $\mathcal W_0(q,x):=\E_q[x]$. Notice that $\mathcal W_\lambda$ is a special case of the criterion in \citet[][eq. (1)]{hansenmiss25} when the set of plausible (or ``structured'') models is a singleton, and the joint-convexity restriction is satisfied by all $\phi$-divergences.

For each action $a$, let $Q^a\subset\DeltaY$ be a nonempty finite set of full-support reference models and let $\mu^a$ be a probability distribution on $Q^a$. The mean model is $\bar q^a(y):=\E_{\mu^a}[q(y)]$, for $q\in Q^a$. We average the model-specific variational values linearly across the second-order belief $\mu^a$. This is the linear-aggregator special case of the smooth Bayesian criterion in \citet[][eq. (32)]{hansenmiss25}: their general criterion applies a quasi-arithmetic mean to reference-model variational evaluations, while the identity aggregator used here leaves only the Bayesian average. The entropic specification $\Gamma(p,q)=\DKL(p\|q)$ is the \textit{average robust-control} (ARC) criterion, axiomatized by \citet{lanzani2025}. The agent's value from action $a$ is $\widehat V_A^a(w;\mu^a):=\E_{\mu^a}[\mathcal W_{\lambda_A}(q,w)]-c(a)$. For the principal, define
$$
\widehat V_P^a(w;\mu^a):=
\begin{cases}
\E_{\mu^a}[\mathcal W_{\lambda_P}(q,y-w)],&\lambda_P>0,\\
\E_{\bar q^a}[y-w(y)],&\lambda_P=0.
\end{cases}
$$

Fix a desired action $a^*$. A contract implements $a^*$ under the library profile $\mu=(\mu^a)_{a\in\Acal}$ if it satisfies participation and all incentive constraints using the values $\widehat V_A^a$. Let $\widehat X^{a^*}(\mu)$ denote this implementation set, and define the principal's optimized payoff by $\widehat\Lambda_P^{a^*}(\mu):=\sup_{w\in\widehat X^{a^*}(\mu)}\widehat V_P^{a^*}(w;\mu^{a^*})$, with value $-\infty$ if $\widehat X^{a^*}(\mu)$ is empty.

A library $\tilde\mu^a$ is \textit{more dispersed} than $\mu^a$, written $\tilde\mu^a\succeq\mu^a$, if both have the same mean model and $\int\phi(q)d\tilde\mu^a(q)\leq\int\phi(q)d\mu^a(q)$ for every continuous concave $\phi$ on the relevant convex hull of models. Thus, two organizations can agree on the average model $\bar q^a$ while differing in how much model variation their libraries contain. Theorem \ref{thm:library} shows that the location of model dispersion matters. Dispersion in the desired-action library raises the agent's variational valuation of that action and also raises the principal's valuation of every fixed contract. Dispersion in a deviation library instead makes that deviation more attractive to the agent and therefore makes implementation harder.

\begin{theorem}\label{thm:library}
Fix $a^*$. If $\tilde\mu^{a^*}\succeq\mu^{a^*}$ and $\tilde\mu^a=\mu^a$ $\forall a\ne a^*$, then $\widehat X^{a^*}(\mu)\subseteq\widehat X^{a^*}(\tilde\mu)$, $\widehat V_P^{a^*}(w;\tilde\mu^{a^*})\geq\widehat V_P^{a^*}(w;\mu^{a^*})$ $\forall w$, and $\widehat\Lambda_P^{a^*}(\tilde\mu)\geq\widehat\Lambda_P^{a^*}(\mu)$. If instead $\tilde\mu^{\hat a}\succeq\mu^{\hat a}$ for some deviation $\hat a\ne a^*$ and all else unchanged, then $\widehat X^{a^*}(\tilde\mu)\subseteq\widehat X^{a^*}(\mu)$ and $\widehat\Lambda_P^{a^*}(\tilde\mu)\leq\widehat\Lambda_P^{a^*}(\mu)$.
\end{theorem}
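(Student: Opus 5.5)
The whole argument rests on one structural fact: for fixed $x$ and $\lambda\ge0$, the map $q\mapsto\mathcal W_\lambda(q,x)$ is concave on $\DeltaY$. For $\lambda=0$ it is linear in $q$. For $\lambda>0$, take $q_1,q_2$ with weights $t,1-t$, and let $p_1,p_2$ attain the minima defining $\mathcal W_\lambda(q_1,x)$ and $\mathcal W_\lambda(q_2,x)$. These minima exist because $\DeltaY$ is compact and $\Gamma$ is lower semicontinuous. Joint convexity of $\Gamma$ then gives
\[
\mathcal W_\lambda(tq_1+(1-t)q_2,x)\le \E_{tp_1+(1-t)p_2}[x]+\tfrac1\lambda\Gamma(tp_1+(1-t)p_2,tq_1+(1-t)q_2)\le t\,\mathcal W_\lambda(q_1,x)+(1-t)\,\mathcal W_\lambda(q_2,x).
\]
This is the wrong direction for concavity, so a correction is needed. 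The map is the pointwise infimum over $p$ of a function that is jointly convex in $(p,q)$. Partial minimization of a jointly convex function therefore makes $q\mapsto\mathcal W_\lambda(q,x)$ \emph{convex}, not concave.

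Convexity is what the theorem actually needs. Since $\phi$ in the dispersion order is concave, the order gives $\int\psi\,d\tilde\mu\ge\int\psi\,d\mu$ for every continuous convex $\psi$, by applying the definition to $\phi=-\psi$. I will check continuity of $q\mapsto \mathcal W_\lambda(q,x)$ on the relevant convex hull of full-support models. Finiteness and convexity give continuity on the relative interior. The hull is a polytope spanned by finitely many full-support models, and convex functions on polytopes are upper semicontinuous. If needed, I will instead extend the order to lower semicontinuous convex $\psi$ by a monotone approximation, which I expect to be routine. For $\lambda_P=0$ the principal's value depends on $\mu$ only through the mean $\bar q^{a}$, which is preserved, so that case is trivial.

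With convexity in hand, the first claim proceeds as follows. For each fixed $w$, set $\psi(q)=\mathcal W_{\lambda_A}(q,w)$. Then $\widehat V_A^{a^*}(w;\tilde\mu^{a^*})\ge\widehat V_A^{a^*}(w;\mu^{a^*})$, while $\widehat V_A^{a}$ is unchanged for every $a\ne a^*$. Participation and every incentive constraint have $a^*$ on the left-hand side, so any $w\in\widehat X^{a^*}(\mu)$ remains feasible under $\tilde\mu$. This gives $\widehat X^{a^*}(\mu)\subseteq\widehat X^{a^*}(\tilde\mu)$. Applying the same argument with $\psi(q)=\mathcal W_{\lambda_P}(q,y-w)$ shows the principal's pointwise valuation rises. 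The optimized payoff is a supremum of a pointwise larger objective over a larger set, so it weakly rises; this also covers the case where the original set is empty and the value is $-\infty$.

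For a deviation $\hat a$, the same argument shows $\widehat V_A^{\hat a}(w;\cdot)$ weakly rises for every $w$. Nothing else changes, including the objective, which is indexed by $\mu^{a^*}$. Only the incentive constraint against $\hat a$ tightens, so the implementation set shrinks and the supremum weakly falls.

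The only step that needs genuine care is the convexity and continuity step. I have to get the sign right: it is convexity, from partial minimization of a jointly convex function, not concavity. I also have to make sure the dispersion definition stated with concave $\phi$ applies to these value functions on the convex hull of the finite supports.
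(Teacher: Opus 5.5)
Your proposal is correct and follows essentially the same route as the paper. After your sign correction, the steps match: convexity of $q\mapsto\mathcal W_\lambda(q,x)$ from joint convexity of $\Gamma$, the dispersion order applied to $-\mathcal W_\lambda(\cdot,x)$, the mean-only dependence when $\lambda_P=0$, and the constraint-by-constraint comparison of implementation sets and suprema.

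Your continuity hedge is unnecessary. $\mathcal W_\lambda(\cdot,x)$ is finite and convex on all of $\DeltaY$. Full-support models lie in $\operatorname{ri}\DeltaY$, so their convex hull does too, and a finite convex function is continuous there. This is exactly how the paper closes that step.
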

 Section \ref{subsec:endogenous-dispersion} endogenizes model dispersion to study its effect on the principal's profit.

\subsubsection{Illustration: model dispersion}

For the remainder of this illustration, specialize to multiplier preferences, so $\mathcal W_\lambda(q,x)=M(q,x;\lambda)$. Let $\Y=\{0,1\}$ with desired action $H$ and deviation $L$. Set $\lambda_A=\lambda_P=1$, $c(H)=0.10$, $c(L)=0.01$, and $U_0=0$. Under the baseline models, success occurs with probability $0.60$ after $H$ and $0.40$ after $L$. Consider the contract $w(0)=0$ and $w(1)=0.50$, which gives the agent one half of successful output and coincides with the first-best sharing slope when $\lambda_A=\lambda_P$. If a model assigns success probability $p$, the agent's robust value of the bonus before effort cost is $\psi(p):=-\log(1-p+pe^{-1/2})$. At the baseline models, the contract implements $H$ with a small positive incentive margin.

To vary model dispersion without changing expected performance, let $d\in[0,0.35]$. Desired-action dispersion replaces the forecast $0.60$ by two equally weighted forecasts $0.60-d$ and $0.60+d$, leaving $L$ unchanged. Deviation dispersion instead replaces the forecast $0.40$ by $0.40-d$ and $0.40+d$, leaving $H$ unchanged. Figure \ref{fig:library-implementation} plots the agent's incentive margin---his value from $H$ minus his value from $L$---under these two experiments.
\begin{figure}[hbt!]
\centering
\begin{tikzpicture}
\begin{axis}[
width=0.78\textwidth,
height=8.0cm,
xmin=0,xmax=0.35,
ymin=-0.007,ymax=0.027,
xtick={0,0.10,0.20,0.270567,0.35},
xticklabels={$0$,$0.10$,$0.20$,$d^{IC}\simeq0.27$,$0.35$},
ytick={-0.005,0,0.005,0.010,0.015,0.020,0.025},
scaled y ticks=false,                        
yticklabel style={/pgf/number format/fixed}, 
xlabel={Model dispersion $d$},
ylabel={Implementation margin for $H$},
label style={font=\small},
tick label style={font=\small},
legend style={
draw=none,
fill=none,
font=\scriptsize,
at={(0.02,0.98)}, 
anchor=north west 
},
axis lines=left
]

\addplot[black!40,dashed,domain=0:0.35,samples=2,forget plot] {0};

\addplot[very thick,darkblue,domain=0:0.35,samples=180]
{
0.5*
(
-ln(1-(0.60-x)+(0.60-x)*exp(-0.5))
-ln(1-(0.60+x)+(0.60+x)*exp(-0.5))
)
-0.09
+ln(1-0.40+0.40*exp(-0.5))
};
\addlegendentry{Desired-action dispersion}

\addplot[very thick,bittersweet,domain=0:0.35,samples=180]
{
-ln(1-0.60+0.60*exp(-0.5))
-0.09
+0.5*ln(1-(0.40-x)+(0.40-x)*exp(-0.5))
+0.5*ln(1-(0.40+x)+(0.40+x)*exp(-0.5))
};
\addlegendentry{Deviation dispersion}

\addplot[black,dashed,forget plot]
coordinates {(0.270567,-0.007) (0.270567,0)};

\addplot[only marks,mark=*,mark size=2pt,black,forget plot]
coordinates {(0.270567,0)};

\end{axis}
\end{tikzpicture}
\caption{Model dispersion and asymmetric implementation effects.}
\label{fig:library-implementation}
\end{figure}

Figure \ref{fig:library-implementation} isolates the asymmetry in Theorem \ref{thm:library}. Since $\psi(p)$ is strictly convex, spreading two forecasts apart while holding their mean fixed raises their average robust value. When the desired-action models become more dispersed, this raises the value of $H$ and the implementation margin increases. When the same dispersion occurs around $L$, it instead raises the value of the deviation: the margin falls and reaches zero at $d^{IC}\simeq0.27$, after which the contract no longer implements $H$. Thus,  the effect of disagreement across models on incentives depends on which action the disagreement concerns.

\par The variance expansion below gives a local interpretation of Theorem \ref{thm:library}.

\begin{obs}\label{prop:variance}
Fix a finite model library $\mu^a$ and a contract $w$. As $\lambda_P\downarrow0$,
$$
\widehat V_P^a(w;\mu^a)
=
\E_{\bar q^a}[y-w]
-\frac{\lambda_P}{2}\E_{\mu^a}[\Var_q(y-w)]
+O(\lambda_P^2),
$$
and as $\lambda_A\downarrow0$,
$$
\widehat V_A^a(w;\mu^a)
=
\E_{\bar q^a}[w]-c(a)
-\frac{\lambda_A}{2}\E_{\mu^a}[\Var_q(w)]
+O(\lambda_A^2).
$$
Moreover, if $\tilde\mu^a\succeq\mu^a$, then $\E_{\tilde\mu^a}[\Var_q(x)]\leq\E_{\mu^a}[\Var_q(x)]$ for every payoff vector $x:\Y\to\R$.
\end{obs}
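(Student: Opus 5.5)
The plan is to prove both expansions model by model and then average over the library. I work in the multiplier (ARC) specification of the preceding illustration, so $\mathcal W_\lambda(q,x)=M(q,x;\lambda)=-\lambda^{-1}\log\E_q[e^{-\lambda x}]$. For a general smooth $\phi$-divergence the same argument goes through, but the variance term picks up the factor $1/\phi''(1)$; with relative entropy that factor equals one.

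\textbf{Per-model expansion.} Fix a full-support $q\in Q^a$ and a payoff vector $x$, and let $K_q(t):=\log\E_q[e^{tx}]$ be the cumulant generating function of $x$ under $q$. Then $K_q(0)=0$, $K_q'(0)=\E_q[x]$ and $K_q''(0)=\Var_q(x)$. Since $\Y$ is finite and $x$ is bounded, $K_q$ is smooth, and $|K_q'''|$ is bounded on $[-1,1]$ by a constant depending only on $\max_y|x(y)|$. Taylor's theorem with remainder gives
$$
M(q,x;\lambda)=-\lambda^{-1}K_q(-\lambda)=\E_q[x]-\frac{\lambda}{2}\Var_q(x)+O(\lambda^2).
$$
The $O(\lambda^2)$ term is uniform over $q\in Q^a$, because the third-derivative bound does not depend on $q$ (and $Q^a$ is finite anyway).

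\textbf{Averaging over the library.} For the principal, take $x=y-w$ and $\lambda=\lambda_P$. For the agent, take $x=w$ and $\lambda=\lambda_A$, and subtract the constant $c(a)$. Integrating the per-model expansion against $\mu^a$ and using linearity of the mean, $\E_{\mu^a}\E_q[x]=\E_{\bar q^a}[x]$, gives both displayed expansions. The uniform remainder stays $O(\lambda^2)$ after averaging. The principal's expansion is also consistent with the convention $\widehat V_P^a=\E_{\bar q^a}[y-w]$ at $\lambda_P=0$.

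\textbf{Dispersion.} For fixed $x$, write $\phi(q):=\Var_q(x)=\E_q[x^2]-(\E_q[x])^2$. The first term is linear in $q$ and the second is minus the square of a linear functional, so $\phi$ is continuous and concave on $\DeltaY$, hence on the convex hull of the models. The definition of $\tilde\mu^a\succeq\mu^a$ then gives $\int\phi\,d\tilde\mu^a\le\int\phi\,d\mu^a$ directly, which is the claimed inequality.

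\textbf{Main obstacle.} The only delicate point is the uniformity of the remainder. It is handled by the bound on the third cumulant of the bounded payoff $x$, together with the finiteness of $Q^a$. Everything else reduces to routine properties of the log-moment generating function.
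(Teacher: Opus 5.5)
Your proposal is correct and follows essentially the same route as the paper. Both arguments expand the multiplier value model by model to second order, average over the finite library with a uniform remainder, and combine concavity of $q\mapsto\Var_q(x)$ with the definition of the dispersion order; the only difference is that you package the expansion through the cumulant generating function $K_q$, while the paper Taylor-expands the exponential and then the logarithm separately.
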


Observation \ref{prop:variance} makes the mechanism behind Theorem \ref{thm:library} more transparent. Holding the mean model fixed, greater dispersion across reference models lowers average within-model variance. Since multiplier preferences penalize within-model payoff variance locally, this raises the ARC valuation of a fixed payoff for either party. For the desired action, the same reallocation of uncertainty therefore benefits the agent's evaluation of wages and the principal's evaluation of residual output. For a deviation, only the former effect matters for implementation, which makes the deviation harder to deter.

\subsubsection{Comparison: smooth ambiguity}\label{subsec:smooth-library}

The second-order belief used to motivate the smooth Bayesian criterion also makes clear that Theorem \ref{thm:library} is not a general implication of ambiguity aversion. To formalize this, consider instead the smooth-ambiguity criterion used by \citet{kellner17}. Let $\phi_A:\R\to\R$ be continuous, strictly increasing, and concave, and define the agent's smooth-ambiguity value from action $a$ by $U_{A,S}^a(w;\mu^a):=\E_{\mu^a}[\phi_A(\E_q[w]-c(a))]$. Let $X_S^{a^*}(\mu)$ be the set of contracts satisfying $U_{A,S}^{a^*}(w;\mu^{a^*})\ge \phi_A(U_0)$ and $U_{A,S}^{a^*}(w;\mu^{a^*})\ge U_{A,S}^{a}(w;\mu^{a})$  $\forall a\ne a^*$.

\begin{obs}\label{prop:smooth-library-dispersion}
Fix $a^*$. If $\tilde\mu^{a^*}\succeq\mu^{a^*}$ and all deviation libraries are unchanged, then $X_S^{a^*}(\tilde\mu)\subseteq X_S^{a^*}(\mu)$. If instead $\tilde\mu^{\hat a}\succeq\mu^{\hat a}$ for some deviation $\hat a\ne a^*$ and all other libraries are unchanged, then $X_S^{a^*}(\mu)\subseteq X_S^{a^*}(\tilde\mu)$. Moreover, if the principal also has a continuous, strictly increasing, concave smooth-ambiguity aggregator $\phi_P$, then dispersion in the desired-action library weakly lowers her evaluation of every fixed contract: $\E_{\tilde\mu^{a^*}}[\phi_P(\E_q[y-w])]\le \E_{\mu^{a^*}}[\phi_P(\E_q[y-w])]$ for every $w$.
\end{obs}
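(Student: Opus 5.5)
The argument rests on one observation. Fix a payoff vector $x$ and an action $a$. The map $q\mapsto\phi(\E_q[x]-c(a))$ is concave on the convex hull of models, because $q\mapsto\E_q[x]-c(a)$ is affine and $\phi$ is concave and increasing. It is also continuous, since $\phi$ is continuous. So whenever $\tilde\mu^a\succeq\mu^a$, the definition of dispersion gives
$$\E_{\tilde\mu^a}[\phi(\E_q[x]-c(a))]\le \E_{\mu^a}[\phi(\E_q[x]-c(a))].$$
In words, dispersion lowers every smooth-ambiguity evaluation. This is the reverse of the ARC case in Theorem \ref{thm:library}. There the model-specific value $q\mapsto\mathcal W_\lambda(q,x)$ is convex in $q$ by joint convexity of $\Gamma$. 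Here it is a concave transformation of a linear functional.

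\emph{First claim.} Take $w\in X_S^{a^*}(\tilde\mu)$ and apply the inequality to $x=w$ and $a=a^*$. This gives $U_{A,S}^{a^*}(w;\mu^{a^*})\ge U_{A,S}^{a^*}(w;\tilde\mu^{a^*})$. The right side is at least $\phi_A(U_0)$, so participation holds under $\mu$. For each deviation $a\ne a^*$, the library is unchanged, so $U_{A,S}^{a}(w;\mu^a)=U_{A,S}^{a}(w;\tilde\mu^a)$. Chaining, $U_{A,S}^{a^*}(w;\mu^{a^*})\ge U_{A,S}^{a^*}(w;\tilde\mu^{a^*})\ge U_{A,S}^{a}(w;\mu^a)$. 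Hence $w\in X_S^{a^*}(\mu)$.

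\emph{Second claim.} Take $w\in X_S^{a^*}(\mu)$. The desired-action library is unchanged, so participation carries over directly. For the deviation $\hat a$, the inequality gives $U_{A,S}^{\hat a}(w;\tilde\mu^{\hat a})\le U_{A,S}^{\hat a}(w;\mu^{\hat a})\le U_{A,S}^{a^*}(w;\mu^{a^*})$. Every other deviation constraint is unchanged. Hence $w\in X_S^{a^*}(\tilde\mu)$.

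\emph{Principal's evaluation.} Apply the same inequality with $\phi_P$, $x=y-w$, and no cost term. This gives $\E_{\tilde\mu^{a^*}}[\phi_P(\E_q[y-w])]\le \E_{\mu^{a^*}}[\phi_P(\E_q[y-w])]$ for every $w$.

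The only point needing care is a domain issue. The dispersion order is stated for concave $\phi$ on the convex hull of models, so the composite must be defined and concave there. It is, because $\E_q[x]$ is linear in $q$ on all of $\DeltaY$. Beyond that there is no real obstacle; the work lies in the concavity observation.
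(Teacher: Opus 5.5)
Your proposal is correct and follows essentially the same route as the paper's proof. Both arguments note that $q\mapsto\phi(\E_q[x]-c(a))$ is a continuous concave function of an affine map, so the dispersion order lowers every smooth-ambiguity value. Both then obtain the two implementation-set inclusions and the principal's inequality by the same chaining of participation and incentive constraints.
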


The implementation and contract valuation effects are reversed in Observation \ref{prop:smooth-library-dispersion}. Under smooth ambiguity, $q\mapsto\E_q[w]$ is affine and the concavity of $\phi_A$ makes $q\mapsto\phi_A(\E_q[w]-c(a))$ concave. A more dispersed desired-action library therefore lowers the agent's valuation, tightening participation and incentive compatibility, while dispersion in a deviation library lowers the value of that deviation and makes implementation easier. The principal's evaluation sharpens this contrast. Under smooth Bayesian criteria, $q\mapsto \mathcal W_{\lambda_P}(q,y-w)$ is convex, so dispersion in the desired-action library weakly raises the principal's robust payoff from every fixed contract. In contrast, with smooth ambiguity, $q\mapsto\phi_P(\E_q[y-w])$ is concave, so the same dispersion weakly lowers her evaluation. Thus, holding the mean model fixed, dispersion in the desired-action library is favorable to both parties under smooth Bayesian criteria but unfavorable to both under smooth ambiguity.

The difference comes from where curvature enters the two criteria. Smooth Bayesian criteria apply a robust transformation within each reference model and then average robust valuations linearly across models. Smooth ambiguity instead averages a concave transformation of model-specific expected payoffs. The opposite comparative statics are therefore not a general consequence of having multiple probability models; they reflect  different attitudes toward how uncertainty across reference models should be aggregated.

\subsubsection{Application: endogenous model dispersion}\label{subsec:endogenous-dispersion}

The dispersion of a model library may itself reflect an organizational choice. \citet{linear20} develop a framework where a firm estimates the parameters governing the output distribution and represents its statistical uncertainty by a confidence region around a nominal estimate. The size of this region is endogenous: a wider region gives the firm greater statistical confidence that the relevant parameter is covered, but exposes its contract to a larger set of possible models. We use their confidence-region technology to study how this choice relates to model dispersion and how it changes due to misspecification.

We specialize their framework to the binary-output environment in Section \ref{subsec:twooutcome}. Let $e_H>e_L>0$ denote the effort levels corresponding to $H$ and $L$, and write the existing nominal success probabilities as $p_j=\bar\theta e_j$, $j\in\{H,L\}$, where $\bar\theta>0$ is the nominal productivity parameter. In \citet{linear20}, the firm is uncertain about this common parameter: for a confidence radius $\delta\in[0,\bar\delta]$, it considers $\Theta(\delta):=[\bar\theta-\delta,\bar\theta+\delta]$, and success under action $j$ and parameter $\theta$ occurs with probability $\theta e_j$. The same uncertain productivity parameter therefore governs both actions: increasing $\delta$ cannot selectively change uncertainty about $H$ without also changing uncertainty about $L$.\footnote{\citet{linear20} allow several outputs and an ellipsoidal region for a vector of productivity parameters. The interval $\Theta(\delta)$ is its one-dimensional specialization. We take $\bar\delta$ small enough that every induced probability has full support. Their baseline also imposes limited liability, but Remark 4 shows that their main robust-contract conclusions continue to hold with a participation constraint instead. }

Under \citeauthor{linear20}'s (\citeyear{linear20}) worst-case criterion, this common uncertainty makes high effort more costly to implement. Specializing their Proposition 2 to $\Y=\{0,1\}$, optimized worst-case performance from inducing $H$ falls at rate $e_H$ as $\delta$ increases. If $\chi(\delta)$ denotes the increasing confidence level associated with the region and $K\geq0$ its value to the firm, their confidence choice therefore reduces, up to terms independent of $\delta$, to
$$
\delta^{WC}\in\argmax_{\delta\in[0,\bar\delta]}
\big\{K\chi(\delta)-e_H\delta\big\}.
$$
Greater coverage and contractual performance thus move in opposite directions. In particular, absent a value of greater confidence, the worst-case principal chooses $\delta^{WC}=0$.

To introduce misspecification concerns using ARC, we define a probability distribution over candidate models. Fix parameter perturbations $-1=z_1<\cdots<z_m=1$ with probabilities $\nu_k>0$, where $\sum_k\nu_k=1$ and $\sum_k\nu_kz_k=0$. For each $\delta$, let $\mu_\delta^j$ assign probability $\nu_k$ to the model with success probability
$q_\delta^{j,k}(1)=e_j(\bar\theta+\delta z_k).
$
The standardized scenarios and their relative weights remain fixed as the confidence radius changes; $\delta$  scales only their distance from the nominal parameter.\footnote{This is the probabilistic counterpart of varying the radius of the centered confidence region in \citet{linear20}. The centering restriction $\sum_k\nu_kz_k=0$ ensures that $\E_{\mu_\delta^j}[q]=q^j$ for every $\delta$, so the exercise changes model dispersion without changing the nominal action-output model. Including the endpoints $\{-1,1\}$ also makes the convex hull of the candidate parameters equal to $\Theta(\delta)$.}  Recall that $\widehat X^{a}(\mu)$ denotes the implementation set for action $a$ and $\widehat\Lambda_P^a(\mu)$ denotes the principal's optimized robust payoff from action $a$. Let $\mu(\delta):=(\mu_\delta^H,\mu_\delta^L)$ and suppose $H$ is implementable at $\delta=0$.

\begin{proposition}\label{prop:endogenous-dispersion}
If $0\leq\delta<\delta'\leq\bar\delta$, then
$$
\mu_{\delta'}^j\succeq\mu_\delta^j
\quad\text{for }j\in\{H,L\},\qquad
\widehat X^H(\mu(\delta))
\subseteq
\widehat X^H(\mu(\delta')),
\qquad
\widehat\Lambda_P^H(\mu(\delta'))
\geq
\widehat\Lambda_P^H(\mu(\delta)).
$$
Consequently, if $\chi$ is weakly increasing, $\bar\delta$ solves
$\max_{\delta\in[0,\bar\delta]}
\big\{\widehat\Lambda_P^H(\mu(\delta))+K\chi(\delta)\big\}$.
If $K>0$ and $\chi$ is strictly increasing, this solution is unique.
\end{proposition}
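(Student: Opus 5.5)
The plan has three steps. First I check the dispersion order directly. Then I handle the implementation set by hand, because both libraries spread at once and the two halves of Theorem \ref{thm:library} push in opposite directions. Finally I compare optimized values and do the confidence choice.

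\textbf{Dispersion order.} The centering condition $\sum_k\nu_kz_k=0$ gives $\E_{\mu^j_\delta}[q]=q^j$ for every $\delta$, so the mean model is fixed. For a continuous concave $\phi$, set $f(t):=\phi(q^j+t\,v_j)$, where $v_j$ is the common direction of the perturbation. Let $g(s):=\sum_k\nu_kf(s z_k)$; it is concave in $s$. Take a supergradient $\xi$ of $f$ at $0$. Then $f(sz_k)\le f(0)+\xi sz_k$, and summing with the centering condition gives $g(s)\le g(0)$. Now write $\delta=(1-t)\cdot0+t\delta'$. Concavity gives $g(\delta)\ge(1-t)g(0)+tg(\delta')\ge g(\delta')$, which is exactly $\mu^j_{\delta'}\succeq\mu^j_\delta$.

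\textbf{Implementation set.} Take a contract $(s,b)$ and let $\psi_b(p):=-\lambda_A^{-1}\log(1-p+pe^{-\lambda_Ab})$. The agent's value from $j$ is $s+\sum_k\nu_k\psi_b\big(e_j(\bar\theta+\delta z_k)\big)-c(j)$.
\begin{itemize}
\item \emph{Participation.} A direct computation gives $\psi_b''=\lambda_A(\psi_b')^2\ge0$, so $\psi_b$ is convex. The same argument as in the first step, with convex in place of concave, shows the $H$-value is nondecreasing in $\delta$. Hence participation is preserved.
\item \emph{Incentive compatibility.} The IC margin is $\sum_k\nu_kh(\bar\theta+\delta z_k)-\Delta c_{HL}$, where $h(\theta):=\psi_b(\theta e_H)-\psi_b(\theta e_L)$. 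It suffices to show $h$ is convex in $\theta$. Write $E:=e^{-\lambda_Ab}$. Then
$$
e^2\psi_b''(\theta e)=\frac{1}{\lambda_A}\Big(\frac{(1-E)e}{1-\theta e(1-E)}\Big)^2 .
$$
One checks that $e\mapsto e/(1-\theta e(1-E))$ has derivative $1/(1-\theta e(1-E))^2>0$. So the squared quantity is increasing in $e$ whenever it is positive, for either sign of $b$, and it vanishes identically when $b=0$. Since $e_H>e_L$, this gives $h''\ge0$.
\end{itemize}
With $h$ convex, the centered-spread argument gives that the IC margin is nondecreasing in $\delta$. Together, these show $\widehat X^H(\mu(\delta))\subseteq\widehat X^H(\mu(\delta'))$.

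I expect the convexity of $h$ to be the main obstacle. It is the only place where the common productivity parameter matters, and it needs the explicit multiplier algebra. If the clean monotonicity in $e$ failed for some sign of $b$, I would fall back on comparing $\psi_b'(\theta e)\,e$ across $e$ directly.

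\textbf{Principal's value and the confidence choice.} The principal's objective depends only on $\mu^H_\delta$. The first part of Theorem \ref{thm:library}, which rests on convexity of $q\mapsto\mathcal W_{\lambda_P}(q,y-w)$, gives $\widehat V_P^H(w;\mu^H_{\delta'})\ge\widehat V_P^H(w;\mu^H_\delta)$ for every $w$. For $\lambda_P=0$ the value depends only on the fixed mean, so it is unchanged. Taking the supremum of a pointwise larger function over a larger set yields $\widehat\Lambda_P^H(\mu(\delta'))\ge\widehat\Lambda_P^H(\mu(\delta))$.

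Since $H$ is implementable at $\delta=0$, the map $\delta\mapsto\widehat\Lambda^H_P(\mu(\delta))$ is finite and nondecreasing on $[0,\bar\delta]$. Adding the weakly increasing term $K\chi$ keeps it nondecreasing, so $\bar\delta$ is a maximizer. If $K>0$ and $\chi$ is strictly increasing, the sum is strictly increasing, so $\bar\delta$ is the unique maximizer.
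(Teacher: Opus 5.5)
Your proof is correct and follows essentially the same route as the paper: the dispersion order comes from concavity along the scaling ray plus Jensen, participation and the principal's value come from the convexity underlying Theorem~\ref{thm:library}, and incentive compatibility rests on the same key fact. That fact is that $e\mapsto e^2\psi_b''(\theta e)$ is increasing because $e\mapsto e/(1-\theta e(1-e^{-\lambda_A b}))$ is, which is precisely the paper's claim that $h_b=h'$ is increasing.

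The differences are cosmetic. You obtain monotonicity in $\delta$ by Jensen plus interpolation rather than by the paper's derivative-and-symmetrization step. Your convexity argument also holds for every sign of $b$, so you can skip the paper's preliminary step showing $b>0$.
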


This result reveals that a wider confidence region  has opposite contractual implications under the two decision criteria. Under worst-case contracting, the contract must remain incentive compatible at the least favorable productivity parameter. Enlarging the region lowers that parameter, requiring stronger incentives and reducing the principal's guaranteed payoff. Greater statistical confidence consequently comes at a price.

Under ARC, widening the same confidence region spreads candidate models around a fixed distribution. Since productivity multiplies effort, a given parameter perturbation moves the success probability more under $H$ than under $L$. The convexity of robust compensation values then makes the agent's valuation of $H$ rise faster than his valuation of $L$, so the common increase in model dispersion relaxes the incentive constraint. Dispersion around the implemented action also raises the principal's ARC valuation. Greater coverage and contractual performance therefore reinforce rather than offset each other.

The difference between these two criteria changes how the principal manages model uncertainty. Under the worst-case criterion, more precise information can be valuable because it shrinks the set against which the contract must be protected. Under ARC, eliminating credible models can instead destroy useful dispersion: when the same underlying parameter affects productive effort more strongly than shirking, a broader model library both strengthens incentives and improves the principal's valuation of the relationship. Thus, the way an organization aggregates plausible models can determine whether model uncertainty is a feature that should be eliminated or preserved.

\section{Conclusion}\label{sec:conclusion}

We analyze agency problems where the probability model linking action to output is viewed as an approximation. The first-best analysis isolates a contractual role for performance pay that is absent from the classical principal-agent benchmark. When both parties evaluate misspecification using proportional variational costs, an optimal contract divides output linearly. Its slope allocates misspecification exposure according to the parties' relative concerns. At the first best, the parties must have some common effective models, which implies that the resulting allocation is Pareto efficient. This sharing rule also raises the value of the agency relationship relative to assigning the entire output claim to either party and can change which observable action is first-best optimal.

Moral hazard distorts the first-best effective-model alignment by the marginal incentive wedges associated with active deviations; under multiplier preferences, this distortion becomes a nonlinear likelihood-ratio correction to the first-best linear sharing rule. With a continuum of actions, misspecification concerns can also change the geometry of implementation: robust compensation may become sufficiently convex that intermediate actions cannot be implemented and performance pay produces discrete changes in effort. When a set of plausible models is available for each action, the location of uncertainty matters as well. Greater dispersion around the desired action expands the implementation set and can raise the principal's robust payoff, whereas dispersion around deviations makes implementation more difficult and can lower that payoff. Endogenizing dispersion  further shows that the criterion used to aggregate plausible models can determine whether the principal wants to eliminate model uncertainty or preserve it.

Taken together, our results suggest that introducing model misspecification changes more than the level of contracting costs in agency problems. It changes what performance pay is used for, how incentive contracts should be decomposed, and which actions can be implemented. The linear first-best contract is especially useful for separating these effects: it identifies the part of performance sensitivity that can be attributed to sharing a common misspecified model before informational frictions are introduced.

\appendix

\section{Appendix: Proofs}\label{app:proofs}

\subsection{Variational and multiplier properties}\label{app:prelim}

We begin by stating several properties that will be used repeatedly in the proofs. Fix an action $a$ and suppress the superscript $a$ when no confusion can arise. Because $\Y$ is finite, $\DeltaY$ is a nonempty compact subset of a finite-dimensional Euclidean space. Since $\Gamma$ is lower semicontinuous and takes values in $[0,\infty]$, for every finite payoff vector $x\in\R^{\Y}$ and every $\lambda>0$ the map $p\mapsto \E_p[x]+\Gamma(p)/\lambda$ is lower semicontinuous on $\DeltaY$. It therefore attains its minimum. Hence, $\Pi_\lambda(x)$ is nonempty. Since $\Gamma$ is grounded and lower semicontinuous on the compact simplex, $C:=\argmin_{p\in\DeltaY}\Gamma(p)$ is also nonempty, and $\Pi_0(x)=\argmin_{p\in C}\E_p[x]$ is nonempty as well.

For $\lambda>0$, the functional $\mathcal V_\lambda$ is concave in $x$. Indeed, for any $x,x'\in\R^{\Y}$ and $t\in[0,1]$, for every $p\in\DeltaY$ we have $\E_p[tx+(1-t)x']+\Gamma(p)/\lambda=t\E_p[x]+(1-t)\E_p[x']+\Gamma(p)/\lambda$. Since $\Gamma(p)/\lambda=t\Gamma(p)/\lambda+(1-t)\Gamma(p)/\lambda$, taking the minimum over $p$ gives $\mathcal V_\lambda(tx+(1-t)x')\ge t\mathcal V_\lambda(x)+(1-t)\mathcal V_\lambda(x')$. The same argument, with the minimization restricted to $C$, shows that $\mathcal V_0$ is concave.

The variational value is translation invariant. For every $r\in\R$, every $\lambda>0$, and every $x$, we have $\mathcal V_\lambda(x+r\m{1})=\mathcal V_\lambda(x)+r$, because $\E_p[x+r\m{1}]=\E_p[x]+r$ for every probability $p$. The identical argument gives $\mathcal V_0(x+r\m{1})=\mathcal V_0(x)+r$. For every $\gamma>0$ and $\lambda>0$, it also satisfies the scaling identity $\mathcal V_\lambda(\gamma x)=\gamma\mathcal V_{\gamma\lambda}(x)$. To verify this identity, factor the positive scalar $\gamma$ out of the objective: $\E_p[\gamma x]+\Gamma(p)/\lambda=\gamma\{\E_p[x]+\Gamma(p)/(\gamma\lambda)\}$ for every $p$; minimizing both sides over the same set $\DeltaY$ yields the claim.

We will also use the relation between effective models and derivatives. Suppose $\lambda>0$ and $\Pi_\lambda(x)=\{p^*\}$ is a singleton. For a direction $h\in\R^{\Y}$ and scalar $t$, define $f(t,p):=\E_p[x+th]+\Gamma(p)/\lambda$. The feasible set $\DeltaY$ is compact, $f$ is affine and hence differentiable in $t$, and the minimizer at $t=0$ is unique. The directional derivative is $D\mathcal V_\lambda(x)[h]=\E_{p^*}[h]=p^*\cdot h$. Since this is linear in $h$, $\mathcal V_\lambda$ is differentiable at $x$ with gradient $p^*$. The same argument applies to $\mathcal V_0$ when $\Pi_0(x)$ is a singleton, because the minimization is then over the fixed compact set $C$. Consequently, at every smooth contract in the sense of Section \ref{subsec:MH-contract}, $\nabla_wV_A^a(w)=\pi_A^a(\cdot;w)$ and $\nabla_wV_P^a(w)=-\pi_P^a(\cdot;w)$.

For multiplier preferences, the preceding properties can be made explicit. Fix a full-support model $q$ and $\lambda>0$. For every finite payoff vector $x$, $Z(x):=\E_q[e^{-\lambda x(y)}]$ is finite and strictly positive. Thus, $M(q,x;\lambda)=-\lambda^{-1}\log Z(x)$ is finite. 

\subsection{Proofs for Section \ref{sec:firstbest}}\label{app:FB}

\begin{proof}[Proof of Lemma \ref{lem:FB-participation}]
Fix an action $a$ and let $w$ solve $(\mathsf{FB}_a)$. Suppose, toward a contradiction, that the participation constraint is slack, so $V_A^a(w)>U_0$. Set $\delta:=V_A^a(w)-U_0>0$ and choose any $\varepsilon\in(0,\delta)$. Consider the translated contract $\widetilde w:=w-\varepsilon\m{1}$.

By translation invariance of $\mathcal V_{\lambda_A}^a$, $V_A^a(\widetilde w)=\mathcal V_{\lambda_A}^a(w-\varepsilon\m{1})-c(a)=V_A^a(w)-\varepsilon>U_0$. Hence, $\widetilde w$ is feasible. The principal's residual payoff under $\widetilde w$ is $y-\widetilde w=(y-w)+\varepsilon\m{1}$. Translation invariance of $\mathcal V_{\lambda_P}^a$, including the definition at $\lambda_P=0$, therefore gives $V_P^a(\widetilde w)=V_P^a(w)+\varepsilon>V_P^a(w)$. This contradicts the optimality of $w$. Thus, no optimum can have slack participation, and every solution satisfies $V_A^a(w)=U_0$.
\end{proof}

\begin{proof}[Proof of Theorem \ref{thm:firstbest-linear}]
Fix $a$ and suppress the superscript $a$ throughout the proof. Write $B:=U_0+c(a)$, so that the participation constraint is $\mathcal V_{\lambda_A}(w)\ge B$.

We first consider $\lambda_P>0$. Define $\lambda_{AP}:=\lambda_A\lambda_P/(\lambda_A+\lambda_P)$. We begin with an upper bound that holds for every feasible contract. Fix any $w\in\R^{\Y}$ and any $p\in\DeltaY$. By the definition of a minimum, evaluating the agent's variational problem at this particular $p$ gives $\mathcal V_{\lambda_A}(w)\le \E_p[w]+\Gamma(p)/\lambda_A$. Likewise, evaluating the principal's variational problem at the same $p$ gives $\mathcal V_{\lambda_P}(y-w)\le \E_p[y-w]+\Gamma(p)/\lambda_P$. Adding these two inequalities cancels the transfer state by state and yields $\mathcal V_{\lambda_A}(w)+\mathcal V_{\lambda_P}(y-w)\le \E_p[y]+(1/\lambda_A+1/\lambda_P)\Gamma(p)$. Since $1/\lambda_{AP}=1/\lambda_A+1/\lambda_P$, the right-hand side equals $\E_p[y]+\Gamma(p)/\lambda_{AP}$. The inequality holds for every $p\in\DeltaY$, so minimizing its right-hand side over $p$ gives $\mathcal V_{\lambda_A}(w)+\mathcal V_{\lambda_P}(y-w)\le \mathcal V_{\lambda_{AP}}(y)$.

Now let $w$ be feasible. Then, $\mathcal V_{\lambda_A}(w)\ge B$, and the preceding inequality implies $\mathcal V_{\lambda_P}(y-w)\le \mathcal V_{\lambda_{AP}}(y)-\mathcal V_{\lambda_A}(w)\le \mathcal V_{\lambda_{AP}}(y)-B$. Thus, $\mathcal V_{\lambda_{AP}}(y)-B$ is an upper bound on the principal's value over the entire feasible set.

We next construct a feasible contract attaining this upper bound. Let $s:=\lambda_P/(\lambda_A+\lambda_P)$, and set $w^*(y):=\alpha+sy$, where $\alpha:=B-s\mathcal V_{\lambda_{AP}}(y)$. Notice that $s\lambda_A=\lambda_{AP}$ and $(1-s)\lambda_P=\lambda_{AP}$. Translation invariance and the scaling identity from Appendix \ref{app:prelim} give $\mathcal V_{\lambda_A}(w^*)=\alpha+\mathcal V_{\lambda_A}(sy)=\alpha+s\mathcal V_{s\lambda_A}(y)=\alpha+s\mathcal V_{\lambda_{AP}}(y)=B$. Hence, $w^*$ satisfies participation with equality. The principal's residual payoff is $y-w^*=-\alpha+(1-s)y$, so the same two identities give $\mathcal V_{\lambda_P}(y-w^*)=-\alpha+(1-s)\mathcal V_{(1-s)\lambda_P}(y)=-\alpha+(1-s)\mathcal V_{\lambda_{AP}}(y)=\mathcal V_{\lambda_{AP}}(y)-B$. Therefore, $w^*$ attains the upper bound derived above and is optimal. Replacing $s$ by its definition yields the stated slope.

For this slope, the intercept is uniquely determined by participation. Indeed, translation invariance gives $\mathcal V_{\lambda_A}(\alpha\m{1}+sy)=\alpha+\mathcal V_{\lambda_A}(sy)$. Therefore, the equation $\mathcal V_{\lambda_A}(\alpha\m{1}+sy)=B$ has exactly one solution in $\alpha$, namely the value constructed above. Lemma \ref{lem:FB-participation} shows that every optimal contract with this slope must use that intercept.

It remains to consider $\lambda_P=0$. Let $C:=\argmin_{p\in\DeltaY}\Gamma(p)$, which is nonempty by Appendix \ref{app:prelim}. For every $p\in C$, $\Gamma(p)=0$. If $w$ is feasible, then $B\le\mathcal V_{\lambda_A}(w)\le \E_p[w]+\Gamma(p)/\lambda_A=\E_p[w]$. Thus, $\E_p[w]\ge B$ for every $p\in C$. It follows that $\E_p[y-w]\le\E_p[y]-B$ for every $p\in C$. Taking the minimum over $p\in C$ gives $\mathcal V_0(y-w)\le \mathcal V_0(y)-B$. The constant contract $w^*(y)=B$ attains this bound. Indeed, groundedness of $\Gamma$ and translation invariance imply $\mathcal V_{\lambda_A}(B\m{1})=B$, so participation binds, and $\mathcal V_0(y-B\m{1})=\mathcal V_0(y)-B$. Hence, $w^*$ is optimal. Its slope is zero, as required when $\lambda_P=0$, and its intercept is uniquely fixed at $B$ by binding participation. This completes the proof.
\end{proof}

\begin{proof}[Proof of Corollary \ref{cor:effective-models}]
Fix $\lambda_P>0$ and an action $a$, and let $B:=U_0+c(a)$ and $\lambda_{AP}:=\lambda_A\lambda_P/(\lambda_A+\lambda_P)$. Let $w$ be any first-best optimum. By Lemma \ref{lem:FB-participation}, $\mathcal V_{\lambda_A}^a(w)=B$. By Theorem \ref{thm:firstbest-linear}, the optimal principal value equals $\mathcal V_{\lambda_{AP}}^a(y)-B$. Hence, at $w$, $\mathcal V_{\lambda_A}^a(w)+\mathcal V_{\lambda_P}^a(y-w)=\mathcal V_{\lambda_{AP}}^a(y)$. Choose any $p^*\in\Pi_{\lambda_{AP}}^a(y)$; such a model exists by Appendix \ref{app:prelim}. By definition of the variational minima, $\mathcal V_{\lambda_A}^a(w)\le \E_{p^*}[w]+\Gamma^a(p^*)/\lambda_A$ and $\mathcal V_{\lambda_P}^a(y-w)\le \E_{p^*}[y-w]+\Gamma^a(p^*)/\lambda_P$. The two right-hand sides add to $\E_{p^*}[y]+(1/\lambda_A+1/\lambda_P)\Gamma^a(p^*)=\mathcal V_{\lambda_{AP}}^a(y)$ because $p^*$ minimizes the pooled problem. The two left-hand sides also add to $\mathcal V_{\lambda_{AP}}^a(y)$ by the above. Therefore, the sum of the two weak inequalities is an equality. Neither inequality can then be strict: if, for example, the first were strict, adding the weak second inequality would make the sum of the left-hand sides strictly smaller than the sum of the right-hand sides. Thus, both inequalities hold with equality. Equality in the first says precisely that $p^*\in\Pi_A^a(w)$, and equality in the second says that $p^*\in\Pi_P^a(w)$. Hence, $\Pi_A^a(w)\cap\Pi_P^a(w)$ is nonempty. In fact, the argument shows the stronger inclusion $\Pi_{\lambda_{AP}}^a(y)\subseteq\Pi_A^a(w)\cap\Pi_P^a(w)$ for every first-best optimum $w$.

Now consider the particular linear solution in Theorem \ref{thm:firstbest-linear}. Let $s:=\lambda_P/(\lambda_A+\lambda_P)$  and write $w_a^{FB}=\alpha_a\m{1}+sy$. Translation does not affect an argmin, so $\Pi_A^a(w_a^{FB})$ is the set of minimizers of $p\mapsto s\E_p[y]+\Gamma^a(p)/\lambda_A$. Since $s>0$, multiplying this objective by $1/s$ does not change its minimizers. Because $s\lambda_A=\lambda_{AP}$, those minimizers are exactly $\Pi_{\lambda_{AP}}^a(y)$. Likewise, $y-w_a^{FB}=-\alpha_a\m{1}+(1-s)y$, and $\Pi_P^a(w_a^{FB})$ is the set of minimizers of $p\mapsto(1-s)\E_p[y]+\Gamma^a(p)/\lambda_P$. Dividing by $1-s>0$ and using $(1-s)\lambda_P=\lambda_{AP}$ shows that this set is also $\Pi_{\lambda_{AP}}^a(y)$. Therefore, $\Pi_A^a(w_a^{FB})=\Pi_P^a(w_a^{FB})$, as claimed.
\end{proof}

\begin{proof}[Proof of Corollary \ref{lem:FB-pareto}]
Fix any first-best optimum $w$. By Corollary \ref{cor:effective-models}, choose $p^*\in\Pi_A^a(w)\cap\Pi_P^a(w)$. For any $\widetilde w$, evaluating both criteria at $p^*$ gives $V_A^a(\widetilde w)\le V_A^a(w)+\E_{p^*}[\widetilde w-w]$ and $V_P^a(\widetilde w)\le V_P^a(w)-\E_{p^*}[\widetilde w-w]$. If both parties weakly improved, these inequalities would force $\E_{p^*}[\widetilde w-w]=0$, in which case neither party could strictly improve. 
\end{proof}

\begin{proof}[Proof of Corollary \ref{cor:firstbest-unique}]
Fix $a$ and let $p^*\in\operatorname{ri}\DeltaY$ be a minimizer of $p\mapsto\E_p[y]+(1/\lambda_A+1/\lambda_P)\Gamma^a(p)$ at which $\Gamma^a$ is differentiable. Let $w$ be an arbitrary first-best optimum. We will show that $w$ must equal the linear solution in Theorem \ref{thm:firstbest-linear}.

By the stronger conclusion established in the proof of Corollary \ref{cor:effective-models}, every pooled minimizer is an effective model for both parties at every first-best optimum. Thus, $p^*$ minimizes $p\mapsto\E_p[w]+\Gamma^a(p)/\lambda_A$ and also minimizes $p\mapsto\E_p[y-w]+\Gamma^a(p)/\lambda_P$.

Because $p^*$ lies in the relative interior of the simplex, none of the nonnegativity constraints on probabilities binds at $p^*$. The feasible first-order directions are therefore exactly the tangent vectors $d\in\R^{\Y}$ satisfying $\m{1}\cdot d=0$. Since $\Gamma^a$ is differentiable at $p^*$, first-order optimality of $p^*$ in the agent's minimization gives $d\cdot[w+(1/\lambda_A)\nabla\Gamma^a(p^*)]=0$ for every such tangent vector $d$. A vector is orthogonal to every tangent vector of the simplex if and only if it is proportional to $\m{1}$. Hence, there is a scalar $k_A$ such that $w+(1/\lambda_A)\nabla\Gamma^a(p^*)=k_A\m{1}$. Applying the same argument to the principal's minimization gives a scalar $k_P$ such that $y-w+(1/\lambda_P)\nabla\Gamma^a(p^*)=k_P\m{1}$.

Adding these two vector equalities gives $y+(1/\lambda_A+1/\lambda_P)\nabla\Gamma^a(p^*)=(k_A+k_P)\m{1}$. Since $(1/\lambda_A+1/\lambda_P)^{-1}=\lambda_A\lambda_P/(\lambda_A+\lambda_P)$, we can solve this equality for $\nabla\Gamma^a(p^*)$ and substitute the result back into the agent's first-order condition. The coefficient on $y$ that results is $(1/\lambda_A)/(1/\lambda_A+1/\lambda_P)=\lambda_P/(\lambda_A+\lambda_P)$. Therefore, for some scalar $\alpha$, $w=\alpha\m{1}+[\lambda_P/(\lambda_A+\lambda_P)]y$.

Finally, Lemma \ref{lem:FB-participation} implies that the agent's participation constraint binds at $w$. For a contract with the displayed slope, translation invariance makes the agent's value strictly increase one-for-one with $\alpha$. Hence, exactly one value of $\alpha$ can satisfy participation with equality. Theorem \ref{thm:firstbest-linear} constructs that value. Thus, every first-best optimum coincides with the linear solution from Theorem \ref{thm:firstbest-linear}, proving uniqueness.
\end{proof}

\begin{proof}[Proof of Proposition \ref{prop:robustness-pooling}]
Fix an observable action $a$ and write $B:=U_0+c(a)$ and $\lambda_{AP}:=\lambda_A\lambda_P/(\lambda_A+\lambda_P)$. The proof of Theorem \ref{thm:firstbest-linear} established both an upper bound and a contract attaining it. It therefore established directly that the optimized first-best value for this action is $\mathcal V_{\lambda_{AP}}^a(y)-B=\mathcal V_{\lambda_{AP}}^a(y)-c(a)-U_0$. Since $U_0$ is independent of $a$, maximizing the principal's optimized value over observable actions is equivalent to maximizing $\mathcal V_{\lambda_{AP}}^a(y)-c(a)$. This proves the first two claims.

We next compare this optimized value with the two polar sharing arrangements. First, consider a constant participating wage. Since a constant payoff has variational value equal to that constant, the unique constant wage satisfying participation with equality is $w(y)=B$ for every $y$. The principal then receives $y-B$, so her value is $\mathcal V_{\lambda_P}^a(y)-B$. Second, consider a contract under which the agent receives the entire stochastic output claim up to a constant transfer, $w(y)=y-s$. Translation invariance gives the agent value $\mathcal V_{\lambda_A}^a(y)-s-c(a)$. Binding participation therefore requires $s=\mathcal V_{\lambda_A}^a(y)-B$. The principal's residual payoff is the constant $s$, so her value is $s=\mathcal V_{\lambda_A}^a(y)-B$.

It remains to compare the three variational values. We first show that, for a fixed payoff vector $x$, $\lambda\mapsto\mathcal V_\lambda^a(x)$ is weakly decreasing on $(0,\infty)$. Let $0<\lambda_1<\lambda_2$ and choose $p_1\in\Pi_{\lambda_1}^a(x)$. Since $1/\lambda_2<1/\lambda_1$ and $\Gamma^a(p_1)\ge0$, evaluating the $\lambda_2$ problem at $p_1$ gives $\mathcal V_{\lambda_2}^a(x)\le\E_{p_1}[x]+\Gamma^a(p_1)/\lambda_2\le\E_{p_1}[x]+\Gamma^a(p_1)/\lambda_1=\mathcal V_{\lambda_1}^a(x)$. Because $\lambda_{AP}<\lambda_A$ and $\lambda_{AP}<\lambda_P$, it follows that $\mathcal V_{\lambda_{AP}}^a(y)\ge\mathcal V_{\lambda_A}^a(y)$ and $\mathcal V_{\lambda_{AP}}^a(y)\ge\mathcal V_{\lambda_P}^a(y)$. Subtracting $B$ proves both weak comparisons in the statement.

Suppose now that $y$ is nonconstant and the condition in Corollary \ref{cor:firstbest-unique} holds. Let $p^*$ be the interior differentiability point appearing in that condition. We first prove that $\Gamma^a(p^*)>0$. If instead $\Gamma^a(p^*)=0$, then $p^*$ is a global minimizer of the nonnegative convex function $\Gamma^a$. Since $p^*$ is in the relative interior and $\Gamma^a$ is differentiable there, its derivative in every tangent direction must vanish. Equivalently, $\nabla\Gamma^a(p^*)$ is proportional to $\m{1}$. But $p^*$ also minimizes $p\mapsto\E_p[y]+\Gamma^a(p)/\lambda_{AP}$. The first-order condition for this minimization at an interior point then requires $y+(1/\lambda_{AP})\nabla\Gamma^a(p^*)$ to be proportional to $\m{1}$. Since the second term is already proportional to $\m{1}$, this would imply that $y$ itself is constant on $\Y$, contrary to the hypothesis. Hence, $\Gamma^a(p^*)>0$.

Now evaluate the $\lambda_A$ criterion at $p^*$. Because $p^*$ minimizes the pooled criterion, $\mathcal V_{\lambda_{AP}}^a(y)=\E_{p^*}[y]+\Gamma^a(p^*)/\lambda_{AP}$. Since $\lambda_{AP}<\lambda_A$ and $\Gamma^a(p^*)>0$, $\E_{p^*}[y]+\Gamma^a(p^*)/\lambda_A<\mathcal V_{\lambda_{AP}}^a(y)$. The $\lambda_A$ value is no larger than the left-hand side because it minimizes over all $p$, so $\mathcal V_{\lambda_A}^a(y)<\mathcal V_{\lambda_{AP}}^a(y)$. The identical argument with $\lambda_P$ gives $\mathcal V_{\lambda_P}^a(y)<\mathcal V_{\lambda_{AP}}^a(y)$. Subtracting $B$ proves that both gains from optimal sharing are strict.
\end{proof}

\begin{proof}[Proof of Proposition \ref{prop:different-reference-models}]
Fix the action and write $B:=U_0+c(a)$. The agent's value is $M(q_A,w;\lambda_A)-c(a)$ and the principal's value is $M(q_P,y-w;\lambda_P)$. Both reference models have full support and both $\lambda$'s are strictly positive.

We first establish that any optimum must satisfy participation with equality. The proof is identical to Lemma \ref{lem:FB-participation}: if participation were slack, lowering every wage by a sufficiently small common constant would preserve feasibility, lower the agent's value by exactly that constant, and raise the principal's value by exactly the same constant. Thus, at every optimum, $M(q_A,w;\lambda_A)=B$.

We next derive a necessary condition for every optimum. Consider the equivalent convex-optimization formulation that minimizes $-M(q_P,y-w;\lambda_P)$ subject to $B-M(q_A,w;\lambda_A)\le0$. Both functions in this formulation are convex because multiplier values are concave. Slater's condition holds: for example, adding a sufficiently large positive constant to every wage makes $M(q_A,w;\lambda_A)>B$. Hence, the KKT conditions are necessary at every optimum. Since both multiplier values are differentiable, there is a multiplier $\eta\ge0$ such that stationarity is $-\pi_P(y;w)+\eta\pi_A(y;w)=0$ for every $y\in\Y$, where $\pi_A$ and $\pi_P$ are the two multiplier effective models. Summing this equality over $y$ gives $-1+\eta=0$, because both effective models are probability distributions. Thus, $\eta=1$ and every optimum satisfies $\pi_A(\cdot;w)=\pi_P(\cdot;w)$.

Write $Z_A:=\E_{q_A}[e^{-\lambda_Aw}]$ and $Z_P:=\E_{q_P}[e^{-\lambda_P(y-w)}]$. Equality of the effective models gives $q_A(y)e^{-\lambda_Aw(y)}/Z_A=q_P(y)e^{-\lambda_P[y-w(y)]}/Z_P$ for every output. Every quantity in this equality is strictly positive, so we may take logarithms. Rearranging the resulting equality gives $(\lambda_A+\lambda_P)w(y)=\lambda_Py+\log[q_A(y)/q_P(y)]+\log(Z_P/Z_A)$. Therefore, every optimum must have the form presented in the statement, with the output-independent term $\log(Z_P/Z_A)/(\lambda_A+\lambda_P)$ absorbed into an intercept $\alpha$. Binding participation determines that intercept uniquely because the agent's multiplier value changes one-for-one with a common wage translation.

It remains to show that a contract of this form actually exists and is globally optimal. Start from the output-dependent part of the displayed formula and choose the unique intercept $\alpha$ that makes $M(q_A,w;\lambda_A)=B$. The equality just derived can then be read in reverse: for this contract, the ratio between the two effective-model formulas is independent of $y$, and normalization of both probability distributions forces $\pi_A=\pi_P$. Hence, the KKT stationarity condition holds with multiplier $\eta=1$, and participation binds. These KKT conditions are sufficient for global optimality. To see this, let $w^*$ denote the constructed contract and let $w$ be any feasible contract. Concavity and differentiability of the principal's value imply $M(q_P,y-w;\lambda_P)\le M(q_P,y-w^*;\lambda_P)-\pi_P(\cdot;w^*)\cdot(w-w^*)$. Concavity of the agent's value implies $M(q_A,w;\lambda_A)\le M(q_A,w^*;\lambda_A)+\pi_A(\cdot;w^*)\cdot(w-w^*)$. Since $w$ is feasible and $w^*$ satisfies participation with equality, the second inequality implies $\pi_A(\cdot;w^*)\cdot(w-w^*)\ge M(q_A,w;\lambda_A)-M(q_A,w^*;\lambda_A)\ge0$. Using $\pi_A=\pi_P$ in the first inequality gives $M(q_P,y-w;\lambda_P)\le M(q_P,y-w^*;\lambda_P)$. Thus, $w^*$ is globally optimal.

Finally, every optimum has already been shown to possess the displayed output-dependent form, and binding participation uniquely fixes its intercept. Hence, the optimum is unique. The formula is affine in output if and only if $y\mapsto\log[q_A(y)/q_P(y)]$ is affine on $\Y$, which proves the final statement.
\end{proof}

\begin{proof}[Proof of Proposition \ref{prop:different-divergences}]
Define $F(p):=\Gamma_A(p)/s$ and $G(p):=\Gamma_P(p)/(1-s)$. Since $s\in(0,1)$ and both ambiguity indices are continuous and strictly convex, $F$ and $G$ are also continuous and strictly convex on $\DeltaY$, and they are continuously differentiable on $\operatorname{ri}\DeltaY$.

We first prove sufficiency. Suppose $F(p)=G(p)$ $\forall p\in\DeltaY$. Fix an arbitrary payoff vector $y$ and an arbitrary intercept $\alpha$, and let the agent receive $w=\alpha\m{1}+sy$. Translation does not affect minimizing models. The agent's criterion for the state-dependent part of his claim is $p\mapsto s\E_p[y]+\Gamma_A(p)=s\{\E_p[y]+F(p)\}$. The principal's residual claim is $-\alpha\m{1}+(1-s)y$, and her criterion for its state-dependent part is $p\mapsto(1-s)\E_p[y]+\Gamma_P(p)=(1-s)\{\E_p[y]+G(p)\}$. Since $F=G$, these two problems have exactly the same minimizers. Strict convexity gives a unique common minimizer; call it $p^*$.

We show that the resulting allocation is Pareto efficient. Suppose, to the contrary, that there is a transfer perturbation $d\in\R^{\Y}$ such that the agent weakly prefers $w+d$ to $w$, the principal weakly prefers $y-w-d$ to $y-w$, and at least one of these preferences is strict. Since $p^*$ is an effective model for the agent at $w$, evaluating the agent's variational criterion for $w+d$ at $p^*$ gives $\mathcal W_A(w+d)\le\mathcal W_A(w)+p^*\cdot d$. Therefore, $\mathcal W_A(w+d)\ge\mathcal W_A(w)$ implies $p^*\cdot d\ge0$, and strict improvement of the agent would imply $p^*\cdot d>0$. Similarly, because $p^*$ is an effective model for the principal at $y-w$, $\mathcal W_P(y-w-d)\le\mathcal W_P(y-w)-p^*\cdot d$. If $p^*\cdot d>0$, the principal is strictly worse off; if $p^*\cdot d=0$, the principal cannot be strictly better off. Thus, no perturbation can weakly improve both parties and strictly improve one. The allocation is Pareto efficient. Since $y$ and $\alpha$ were arbitrary, the sharing rule is Pareto efficient for every aggregate payoff and every intercept.

We now prove necessity. Suppose that, for the fixed $s$, every sharing rule $w=\alpha\m{1}+sy$ is Pareto efficient for every $y\in\R^{\Y}$ and every $\alpha\in\R$. We first establish a general implication of Pareto efficiency in the present regular class: the two parties' effective models must coincide. Because $\Gamma_A$ and $\Gamma_P$ are strictly convex, each variational minimization problem has a unique effective model. Let $p_A$ and $p_P$ denote these models at some interior sharing allocation. Suppose $p_A\ne p_P$. Set $h:=p_A-p_P$. Then, $p_A\cdot h-p_P\cdot h=\|p_A-p_P\|^2>0$, so $p_A\cdot h>p_P\cdot h$. Choose a scalar $k$ satisfying $-p_A\cdot h<k<-p_P\cdot h$, which is possible because the left endpoint is strictly smaller than the right endpoint. Let $d:=h+k\m{1}$. Since both $p_A$ and $p_P$ sum to one, $p_A\cdot d=p_A\cdot h+k>0$ while $p_P\cdot d=p_P\cdot h+k<0$.

Unique effective models imply differentiability of both variational values by Appendix \ref{app:prelim}. Hence, the derivative of the agent's value in direction $d$ is $p_A\cdot d>0$, and the derivative of the principal's value when her residual claim changes by $-d$ is $-p_P\cdot d>0$. By the definition of a strictly positive directional derivative, there is an $\varepsilon>0$ small enough that giving the agent the additional claim $\varepsilon d$ strictly raises the agent's value and, simultaneously, removing that same claim from the principal strictly raises the principal's value. This is a Pareto improvement, contradicting efficiency. Therefore, Pareto efficiency requires $p_A=p_P$.

Fix now an arbitrary $p\in\operatorname{ri}\DeltaY$. We will use universal efficiency to compare the marginal statistical costs represented by $F$ and $G$ at this arbitrary model. Since $F$ is differentiable at $p$, choose a vector representative of its derivative on the affine hull of the simplex and set $y:=-\nabla F(p)$. We claim that $p$ is the unique minimizer of $q\mapsto q\cdot y+F(q)$. Convexity and differentiability of $F$ at the relative-interior point $p$ imply the supporting-hyperplane inequality $F(q)\ge F(p)+\nabla F(p)\cdot(q-p)$ for every $q\in\DeltaY$. Therefore, $q\cdot y+F(q)\ge p\cdot y+F(p)$. Strict convexity makes this inequality strict whenever $q\ne p$: if equality held at some $q\ne p$, $F$ would be affine on the nondegenerate segment joining $p$ and $q$, contradicting strict convexity. Thus, $p$ is the unique minimizer.

Consider the universally efficient sharing rule $w=\alpha\m{1}+sy$ for this payoff vector $y$. The agent's minimization problem is, up to the positive factor $s$, exactly $q\mapsto q\cdot y+F(q)$, so the agent's unique effective model is $p$. By the common-effective-model implication proved in the preceding paragraph, the principal's unique effective model must also be $p$. Her minimization problem is, up to the positive factor $1-s$, $q\mapsto q\cdot y+G(q)$. Hence, $p$ minimizes this problem as well. Because $p$ is in the relative interior and both $F$ and $G$ are differentiable there, their first-order conditions on the simplex imply that $\nabla F(p)+y$ and $\nabla G(p)+y$ are each proportional to $\m{1}$. Subtracting shows that $\nabla[G-F](p)$ is proportional to $\m{1}$. Equivalently, for every tangent vector $d$ satisfying $\m{1}\cdot d=0$, $d\cdot\nabla[G-F](p)=0$. Since the choice of $p\in\operatorname{ri}\DeltaY$ was arbitrary, the derivative of $G-F$ vanishes in every feasible direction at every point of the relative interior.

The relative interior of a simplex is convex and hence path connected. Take any two points $p_0,p_1$ in the relative interior and define $p(t):=(1-t)p_0+tp_1$. The vector $p_1-p_0$ is tangent to the simplex. The chain rule gives $\frac{d}{dt}[G-F](p(t))=(p_1-p_0)\cdot\nabla[G-F](p(t))=0$ for every $t\in(0,1)$. Hence, $G(p_1)-F(p_1)=G(p_0)-F(p_0)$. Thus, $G-F$ is constant on the relative interior. Continuity of $F$ and $G$ extends this equality to the boundary of $\DeltaY$, so $G=F+C$ on all of $\DeltaY$ for some constant $C$.

Finally, groundedness of $\Gamma_A$ and $\Gamma_P$ implies $\min F=0$ and $\min G=0$. If $G=F+C$, then $\min G=\min F+C=C$, so $C=0$. Therefore, $F=G$, which is exactly $\Gamma_A(p)/s=\Gamma_P(p)/(1-s)$ for every $p\in\DeltaY$. This proves necessity and completes the proof.
\end{proof}

\subsection{Proofs for Section \ref{sec:moralhazard}}\label{app:MH}

\begin{proof}[Proof of Lemma \ref{lem:MH-binding}]
Let $w$ solve $(\mathsf{MH}_{a^*})$. Suppose first that participation is slack, so $V_A^{a^*}(w)>U_0$. Choose $\varepsilon>0$ strictly smaller than $V_A^{a^*}(w)-U_0$ and define $\widetilde w:=w-\varepsilon\m{1}$. Translation invariance gives $V_A^{a^*}(\widetilde w)=V_A^{a^*}(w)-\varepsilon>U_0$, so participation remains satisfied.

For every deviation $a\ne a^*$, translation invariance applies to the agent's value under both actions by exactly the same amount. Hence, $V_A^{a^*}(\widetilde w)-V_A^a(\widetilde w)=[V_A^{a^*}(w)-\varepsilon]-[V_A^a(w)-\varepsilon]=V_A^{a^*}(w)-V_A^a(w)\ge0$. Thus, every incentive constraint remains satisfied. The principal's residual claim rises by the constant $\varepsilon$, so $V_P^{a^*}(\widetilde w)=V_P^{a^*}(w)+\varepsilon$. This contradicts optimality. Therefore, participation binds at every solution.

Now suppose a first-best solution $w^{FB}_{a^*}$ is incentive compatible, meaning $V_A^{a^*}(w^{FB}_{a^*})\ge V_A^a(w^{FB}_{a^*})$ for every $a\ne a^*$. By definition of a first-best solution, it also satisfies participation. Hence, it is feasible for $(\mathsf{MH}_{a^*})$. The feasible set of $(\mathsf{MH}_{a^*})$ is a subset of the first-best feasible set for the fixed action $a^*$ because it imposes the same participation constraint and additional incentive constraints. Since $w^{FB}_{a^*}$ maximizes the principal's objective over the larger first-best feasible set, no moral-hazard feasible contract can yield a strictly larger principal payoff. Therefore, $w^{FB}_{a^*}$ solves $(\mathsf{MH}_{a^*})$.
\end{proof}

\begin{proof}[Proof of Theorem \ref{thm:MH-general}]
Let $w^*$ be a smooth, regular solution of $(\mathsf{MH}_{a^*})$. Define the participation function $g_0(w):=V_A^{a^*}(w)-U_0$ and, for each deviation $a\ne a^*$, the incentive function $g_a(w):=V_A^{a^*}(w)-V_A^a(w)$. The feasible set is described locally by $g_0(w)\ge0$ and $g_a(w)\ge0$ for all deviations. Lemma \ref{lem:MH-binding} implies $g_0(w^*)=0$.

Let $B(w^*):=\{a\ne a^*:g_a(w^*)=0\}$ denote the active deviations. Smoothness means that the principal's effective model under $a^*$, the agent's effective model under $a^*$, and the agent's effective model under every $a\in B(w^*)$ are unique. By Appendix \ref{app:prelim}, the relevant functions are differentiable at $w^*$ and their gradients are $\nabla V_P^{a^*}(w^*)=-\pi_P^{a^*}(\cdot;w^*)$, $\nabla g_0(w^*)=\pi_A^{a^*}(\cdot;w^*)$, and $\nabla g_a(w^*)=\pi_A^{a^*}(\cdot;w^*)-\pi_A^a(\cdot;w^*)$ for each active deviation $a$.

Regularity is the Mangasarian-Fromovitz constraint qualification for these binding constraints. The standard KKT theorem for a finite-dimensional nonlinear program therefore applies. Thus, there are multipliers $\eta\ge0$ for participation and $\beta^a\ge0$ for the incentive constraints, with $\beta^a=0$ for every inactive constraint by complementary slackness, such that stationarity is
$-\pi_P^{a^*}(\cdot;w^*)+\eta\pi_A^{a^*}(\cdot;w^*)+\sum_{a\ne a^*}\beta^a\big[\pi_A^{a^*}(\cdot;w^*)-\pi_A^a(\cdot;w^*)\big]=0.$
We now determine the participation multiplier. Take the inner product of this vector equality with $\m{1}$. Every effective model is a probability distribution, so its coordinates sum to one. Therefore, $\m{1}\cdot\pi_P^{a^*}=1$, $\m{1}\cdot\pi_A^{a^*}=1$, and $\m{1}\cdot(\pi_A^{a^*}-\pi_A^a)=0$ for every deviation. The inner product of stationarity with $\m{1}$ therefore reduces to $-1+\eta=0$, so $\eta=1$.

Substituting $\eta=1$ into stationarity and rearranging yields $\pi_P^{a^*}(\cdot;w^*)=\pi_A^{a^*}(\cdot;w^*)+\sum_{a\ne a^*}\beta^a[\pi_A^{a^*}(\cdot;w^*)-\pi_A^a(\cdot;w^*)]$. The multipliers are nonnegative, and complementary slackness already gives $\beta^a=0$ whenever the corresponding incentive constraint is inactive. This is exactly the statement of the theorem.
\end{proof}

\begin{proof}[Proof of Corollary \ref{thm:MH-contract}]
Assume multiplier preferences and suppose $(\mathsf{MH}_{a^*})$ is feasible. We proceed in four steps. First, we transform the contracting problem into a finite-dimensional convex program. Second, we establish existence and uniqueness. Third, we derive the multiplier representation of the unique solution. Finally, we translate back into wages.

\emph{Step 1: transformation.} We may restrict without loss to contracts with binding participation. Indeed, for any feasible $w$, let $\delta:=V_A^{a^*}(w)-U_0\ge0$ and set $\widetilde w:=w-\delta\m{1}$. Translation invariance leaves every incentive difference unchanged, makes participation bind, and raises the principal's value by $\delta$. Hence, the original problem and its restriction to binding participation have the same supremum, and any optimizer of the latter solves the former. Define, for every action $a$, $C_a:=\exp\{-\lambda_A[U_0+c(a)]\}>0$, and make the one-to-one change of variables $t(y):=e^{-\lambda_A w(y)}$. Since $w(y)$ is finite, $t(y)>0$ for every output, and conversely every strictly positive vector $t$ corresponds to exactly one wage vector through $w(y)=-(1/\lambda_A)\log t(y)$.

Under the desired action, binding participation is $-\lambda_A^{-1}\log\E_{q^{a^*}}[t]-c(a^*)=U_0$. Exponentiating this equality gives $\E_{q^{a^*}}[t]=C_{a^*}$. For a deviation $a\ne a^*$, incentive compatibility is $-\lambda_A^{-1}\log\E_{q^{a^*}}[t]-c(a^*)\ge-\lambda_A^{-1}\log\E_{q^a}[t]-c(a)$. Multiplying by $-\lambda_A<0$ reverses the inequality, and then exponentiating gives $\E_{q^a}[t]\ge C_a$. To see the constant exactly, use the binding participation equality: the incentive inequality is equivalent to $\log\E_{q^a}[t]\ge-\lambda_A[U_0+c(a)]$, hence to $\E_{q^a}[t]\ge C_a$. Thus, after imposing binding participation, the feasible contracts correspond exactly to strictly positive vectors $t$ satisfying one affine equality, $\E_{q^{a^*}}[t]=C_{a^*}$, and the affine inequalities $\E_{q^a}[t]\ge C_a$ for all $a\ne a^*$.

Suppose first that $\lambda_P>0$ and write $r:=\lambda_P/\lambda_A>0$. The principal's multiplier value is $-\lambda_P^{-1}\log\sum_yq^{a^*}(y)e^{-\lambda_Py}e^{\lambda_Pw(y)}$. Since $e^{\lambda_Pw(y)}=t(y)^{-r}$ and $-\lambda_P^{-1}\log(\cdot)$ is strictly decreasing, maximizing the principal's value is equivalent to minimizing $F(t):=\sum_yq^{a^*}(y)e^{-\lambda_Py}t(y)^{-r}$ over the transformed feasible set. If $\lambda_P=0$, the principal maximizes $\E_{q^{a^*}}[y-w(y)]=\E_{q^{a^*}}[y]+\lambda_A^{-1}\sum_yq^{a^*}(y)\log t(y)$, so, after dropping the output-dependent constant and multiplying by the positive number $\lambda_A$, this is equivalent to minimizing $F_0(t):=-\sum_yq^{a^*}(y)\log t(y)$ over the same feasible set.

\emph{Step 2: existence and uniqueness.} Let $\overline T$ be the set of nonnegative vectors satisfying the transformed affine equality and inequalities. It is closed. It is also bounded: because $q^{a^*}$ has full support and $t(y)\ge0$, the equality $\sum_zq^{a^*}(z)t(z)=C_{a^*}$ implies $0\le t(y)\le C_{a^*}/q^{a^*}(y)$ for every $y$. Thus, $\overline T$ is compact. Feasibility of the original problem provides at least one strictly positive feasible vector, so the transformed feasible set is nonempty.

For $\lambda_P>0$, each function $u\mapsto u^{-r}$ is strictly convex on $(0,\infty)$ and tends to $+\infty$ as $u\downarrow0$. Because every coefficient $q^{a^*}(y)e^{-\lambda_Py}$ is strictly positive, $F$ is strictly convex on the positive orthant and tends to $+\infty$ along any sequence in $\overline T$ for which at least one coordinate tends to zero. Take a minimizing sequence of strictly positive feasible vectors. Compactness of $\overline T$ gives a convergent subsequence. Its limit cannot have a zero coordinate because then $F$ along that subsequence would diverge to $+\infty$, while a strictly positive feasible vector has finite objective. Hence, the limit is strictly positive and feasible, and continuity of $F$ there shows that it attains the minimum. Strict convexity of $F$ on a convex feasible set makes this minimizer unique.

For $\lambda_P=0$, the same argument applies to $F_0$. The function $-\log u$ is strictly convex on $(0,\infty)$ and tends to $+\infty$ as $u\downarrow0$. Full support makes every coefficient positive, so $F_0$ is strictly convex and diverges at the boundary. Therefore, a unique strictly positive minimizer exists. Since $t\mapsto w$ is one-to-one, the original moral-hazard problem has a unique optimal contract in both cases. Denote the unique minimizer by $t^*$.

\emph{Step 3: multipliers for the transformed linear constraints.} Let $B$ denote the set of deviations whose transformed constraints bind at $t^*$, so $\E_{q^a}[t^*]=C_a$ for $a\in B$. Since $t^*$ is strictly positive, the positivity constraints are locally inactive. The feasible first-order directions at $t^*$ are therefore the vectors $d$ satisfying $q^{a^*}\cdot d=0$ and $q^a\cdot d\ge0$ for every $a\in B$. Optimality implies that the directional derivative of the relevant objective is nonnegative in every such direction.

We now justify the existence of Lagrange multipliers without imposing a constraint qualification. Consider first $\lambda_P>0$ and let $g:=\nabla F(t^*)$. If there were no scalar $\eta$ and nonnegative numbers $\gamma^a$, $a\in B$, satisfying $g+\eta q^{a^*}-\sum_{a\in B}\gamma^aq^a=0$, the finite-dimensional Farkas alternative \citep[Section 5.8.3, p. 263]{boyd04} applied to the cone generated by the active inequality normals and the equality normal would yield a direction $d$ with $q^{a^*}\cdot d=0$, $q^a\cdot d\ge0$ for all $a\in B$, and $g\cdot d<0$. For sufficiently small $\varepsilon>0$, $t^*+\varepsilon d$ would remain strictly positive; all inactive inequalities would remain satisfied because they have positive slack; the active inequalities would be weakly relaxed; and the equality would continue to hold. The differentiability of $F$ and $g\cdot d<0$ would then imply $F(t^*+\varepsilon d)<F(t^*)$ for all sufficiently small $\varepsilon$, contradicting optimality. Hence, such multipliers exist. Set $\gamma^a=0$ for inactive deviations. The same Farkas argument applies to $F_0$ when $\lambda_P=0$.

Suppose $\lambda_P>0$. Differentiating $F$ gives $\partial F(t^*)/\partial t(y)=-r q^{a^*}(y)e^{-\lambda_Py}t^*(y)^{-r-1}$. The multiplier identity therefore reads $-r q^{a^*}(y)e^{-\lambda_Py}t^*(y)^{-r-1}+\eta q^{a^*}(y)-\sum_{a\ne a^*}\gamma^aq^a(y)=0$ for every $y$. Multiply this equality by $t^*(y)$ and sum over outputs. The first term sums to $-rF(t^*)$; the equality constraint makes the second term $\eta C_{a^*}$; and complementary slackness in the construction of the normal cone gives $\gamma^a>0$ only for active constraints, for which $\E_{q^a}[t^*]=C_a$. Hence, $\eta C_{a^*}=rF(t^*)+\sum_{a\ne a^*}\gamma^aC_a$. Define $\beta^a:=\gamma^aC_a/[rF(t^*)]\ge0$. Since $F(t^*)>0$, this is well defined, and $\beta^a=0$ for every inactive deviation. Also, $C_{a^*}/C_a=e^{\lambda_A[c(a)-c(a^*)]}$. Substituting the expression for $\eta$ into the coordinatewise multiplier identity and dividing by the positive number $r q^{a^*}(y)$ yields $e^{-\lambda_Py}t^*(y)^{-r-1}=[F(t^*)/C_{a^*}]R(y)$, where $R(y):=1+\sum_{a\ne a^*}\beta^a[1-e^{\lambda_A[c(a)-c(a^*)]}q^a(y)/q^{a^*}(y)]$. The left-hand side is strictly positive, as is $F(t^*)/C_{a^*}$, so $R(y)>0$ for every output.

Now suppose $\lambda_P=0$. The derivative of $F_0$ is $-q^{a^*}(y)/t^*(y)$. The multiplier identity is therefore $-q^{a^*}(y)/t^*(y)+\eta q^{a^*}(y)-\sum_{a\ne a^*}\gamma^aq^a(y)=0$. Multiplying by $t^*(y)$ and summing yields $-1+\eta C_{a^*}-\sum_a\gamma^aC_a=0$, so $\eta C_{a^*}=1+\sum_a\gamma^aC_a$. Define $\beta^a:=\gamma^aC_a\ge0$. Using again $C_{a^*}/C_a=e^{\lambda_A[c(a)-c(a^*)]}$ and substituting for $\eta$ gives $t^*(y)^{-1}=C_{a^*}^{-1}R(y)$ with exactly the same function $R$. Again, the left-hand side is strictly positive, so $R(y)>0$ for every $y$, and $\beta^a=0$ for every inactive deviation.

\emph{Step 4: return to wages.} Suppose first $\lambda_P>0$. Taking logarithms of $e^{-\lambda_Py}t^*(y)^{-r-1}=[F(t^*)/C_{a^*}]R(y)$ is legitimate because every factor is strictly positive. Since $\log t^*(y)=-\lambda_Aw^*(y)$ and $(r+1)\lambda_A=\lambda_A+\lambda_P$, the output-dependent terms rearrange to $(\lambda_A+\lambda_P)w^*(y)=\lambda_Py+\log R(y)+K$ for a constant $K$ independent of $y$. Dividing by $\lambda_A+\lambda_P$ and writing $\alpha:=K/(\lambda_A+\lambda_P)$ gives exactly the formula in the corollary.

If $\lambda_P=0$,  $t^*(y)^{-1}=C_{a^*}^{-1}R(y)$ implies $-\log t^*(y)=-\log C_{a^*}+\log R(y)$. Dividing by $\lambda_A$ yields the same formula with the first-best slope equal to zero. By Lemma \ref{lem:MH-binding}, participation binds. For any fixed output-dependent part of the wage formula, adding a constant $\alpha$ changes the agent's value one-for-one, so there is only one intercept satisfying participation. This proves existence of the stated representation, positivity of the log term, the inactivity property of multipliers, and uniqueness of the optimal contract.
\end{proof}

\begin{proof}[Proof of Corollary \ref{prop:MH-slope-floor}]
Use the multiplier case in Corollary \ref{thm:MH-contract} and let $R(y):=1+\sum_{a\ne a^*}\beta^a[1-e^{\lambda_A[c(a)-c(a^*)]}q^a(y)/q^{a^*}(y)]$. Corollary \ref{thm:MH-contract} proves that $R(y)>0$ for every output.

Fix a deviation $a$. By hypothesis, $y\mapsto q^a(y)/q^{a^*}(y)$ is weakly decreasing. Multiplication by the positive constant $e^{\lambda_A[c(a)-c(a^*)]}$ preserves this ordering, and subtracting the result from one reverses it, so $y\mapsto1-e^{\lambda_A[c(a)-c(a^*)]}q^a(y)/q^{a^*}(y)$ is weakly increasing. Multiplication by the nonnegative multiplier $\beta^a$ preserves weak increase. Summing across deviations and adding one therefore shows that $R$ is weakly increasing. Since $R$ is strictly positive and the logarithm is strictly increasing, $\log R$ is weakly increasing as well.

Take any two outputs $y'>y$. Subtracting the wage formula in Corollary \ref{thm:MH-contract} at these outputs gives
$w^*(y')-w^*(y)=\frac{\lambda_P}{\lambda_A+\lambda_P}(y'-y)+\frac{\log R(y')-\log R(y)}{\lambda_A+\lambda_P}.$
The second term is nonnegative, proving the claimed lower bound. If $\lambda_P>0$, the first term is strictly positive because $y'-y>0$, so the optimal contract is strictly increasing. Finally, suppose there is a deviation $a$ with $\beta^a>0$ and $q^a(y')/q^{a^*}(y')<q^a(y)/q^{a^*}(y)$. The corresponding summand in $R$ increases strictly between $y$ and $y'$, while every other summand weakly increases. Hence, $R(y')>R(y)$ and $\log R(y')>\log R(y)$, so the lower bound is strict.
\end{proof}

\begin{proof}[Proof of Corollary \ref{cor:two-output-bonus}]
Let $w(0)=s$ and $w(1)=s+b$. If an action has success probability $p$, multiplier translation invariance gives $M(q,w;\lambda_A)=s+\Psi(p,b)$, where $\Psi(p,b):=-\lambda_A^{-1}\log(1-p+pe^{-\lambda_Ab})$. Therefore, the fixed payment $s$ cancels from the incentive constraint. Implementing $H$ requires $D(b):=\Psi(p_H,b)-\Psi(p_L,b)\ge\Delta c_{HL}$.

We first characterize this constraint. Differentiating $\Psi$ with respect to $b$ gives $\partial_b\Psi(p,b)=pe^{-\lambda_Ab}/(1-p+pe^{-\lambda_Ab})$. Hence,
$D'(b)=\frac{e^{-\lambda_Ab}(p_H-p_L)}{(1-p_H+p_He^{-\lambda_Ab})(1-p_L+p_Le^{-\lambda_Ab})}>0$
for every finite $b$. Thus, $D$ is continuous and strictly increasing. Moreover, $D(0)=0$, while $\lim_{b\to\infty}D(b)=\lambda_A^{-1}\log[(1-p_L)/(1-p_H)]=\overline C_{HL}$. Since $\Delta c_{HL}>0$, any implementing bonus must satisfy $b>0$. Because $D$ is strictly increasing and approaches $\overline C_{HL}$ from below as $b\to\infty$, a finite implementing bonus exists if and only if $\Delta c_{HL}<\overline C_{HL}$. When this inequality holds, there is a unique finite $b^{IC}>0$ satisfying $D(b^{IC})=\Delta c_{HL}$, and incentive compatibility is equivalent to $b\ge b^{IC}$.

We solve explicitly for $b^{IC}$. Let $r:=e^{-\lambda_A\Delta c_{HL}}\in(0,1)$ and $t:=e^{-\lambda_Ab^{IC}}\in(0,1)$. The equality $D(b^{IC})=\Delta c_{HL}$ is equivalent to $(1-p_H+p_Ht)/(1-p_L+p_Lt)=r$. Solving this linear equation in $t$ gives $t=[r(1-p_L)-(1-p_H)]/[p_H-rp_L]$. The numerator is strictly positive exactly because $\Delta c_{HL}<\overline C_{HL}$, and the denominator is strictly positive because $r<1$ and $p_H>p_L$. Moreover, the numerator minus the denominator equals $r-1<0$, so $t<1$. Thus, $t\in(0,1)$ and taking $-(1/\lambda_A)\log t$ gives the expression for $b^{IC}$.

We next optimize the principal's payoff over the feasible bonuses. By Lemma \ref{lem:MH-binding}, participation binds. Hence, $s+\Psi(p_H,b)-c(H)=U_0$, so $s=U_0+c(H)-\Psi(p_H,b)$. Substituting this fixed payment into the principal's payoff leaves, up to the constant $-U_0-c(H)$, the one-dimensional objective $G(b):=\Psi(p_H,b)+M(q^H,(1-b)y;\lambda_P)$, with the second term interpreted as its EU limit when $\lambda_P=0$.

Suppose first that $\lambda_P>0$. Differentiation gives $G'(b)=p_He^{-\lambda_Ab}/(1-p_H+p_He^{-\lambda_Ab})-p_He^{-\lambda_P(1-b)}/(1-p_H+p_He^{-\lambda_P(1-b)})$. For fixed $p_H\in(0,1)$, the map $x\mapsto p_Hx/(1-p_H+p_Hx)$ is strictly increasing on $(0,\infty)$. Therefore, $G'(b)=0$ if and only if $e^{-\lambda_Ab}=e^{-\lambda_P(1-b)}$, which is equivalent to $b=\lambda_P/(\lambda_A+\lambda_P)$. The second derivative of $\Psi(p_H,b)$ is strictly negative. The second derivative of $b\mapsto M(q^H,(1-b)y;\lambda_P)$ is also strictly negative because its first derivative is minus the principal's effective success probability, and that probability is strictly increasing in $b$. Hence, $G$ is strictly concave. The displayed stationary point is therefore its unique unconstrained maximizer.

If $\lambda_P=0$, then $M(q^H,(1-b)y;0)=p_H(1-b)$. Thus, $G'(b)=\partial_b\Psi(p_H,b)-p_H$. Since $\partial_b\Psi(p_H,0)=p_H$ and $\partial_b\Psi$ is strictly decreasing in $b$, $b=0$ is the unique unconstrained maximizer. This is exactly the formula $\lambda_P/(\lambda_A+\lambda_P)$ evaluated at $\lambda_P=0$.

In either case, the feasible set for the bonus is the interval $[b^{IC},\infty)$ and $G$ is strictly concave with unique unconstrained maximizer $\lambda_P/(\lambda_A+\lambda_P)$. Therefore, its unique constrained maximizer is the larger of the lower bound and the unconstrained maximizer: $b^*=\max\{b^{IC},\lambda_P/(\lambda_A+\lambda_P)\}$. Once $b^*$ is fixed, binding participation determines a unique fixed payment, namely $s^*=U_0+c(H)-\Psi(p_H,b^*)$. This proves the result.
\end{proof}

\subsection{Proofs for Section \ref{sec:continuum}}

For the multiplier specialization used after Proposition \ref{prop:continuum-general}, fix a finite contract $w(0)=s$ and $w(1)=s+b$, and define $d:=1-e^{-\lambda_Ab}$. Since $b$ is finite, $d<1$. If action $a$ is identified with the success probability, multiplier utility is $M(q^a,w;\lambda_A)=s-\lambda_A^{-1}\log(1-ad)$. The quantity inside the logarithm is strictly positive because $1-ad=1-a+ae^{-\lambda_Ab}$ and $a\in(0,1)$. Differentiating with respect to $a$ gives $\partial_aM(q^a,w;\lambda_A)=d/[\lambda_A(1-ad)]$ and $\partial_{aa}M(q^a,w;\lambda_A)=d^2/[\lambda_A(1-ad)^2]=\lambda_A[\partial_aM(q^a,w;\lambda_A)]^2$. 

\begin{proof}[Proof of Proposition \ref{prop:continuum-general}]
Fix a finite contract $w$ and suppose that an interior action $a$ is implemented. Define the agent's objective as a function of a possible action $\tilde a$ by $J(\tilde a):=\mathcal V_{\lambda_A}^{\tilde a}(w)-c(\tilde a)$. By assumption, the first term is twice differentiable in a neighborhood of $a$, and $c$ is twice differentiable, so $J$ is twice differentiable there. Because $a$ is an interior global maximizer, it is in particular an interior local maximizer. The elementary first- and second-order necessary conditions therefore give $J'(a)=0$ and $J''(a)\le0$.

Expanding these derivatives yields $\partial_a\mathcal V_{\lambda_A}^a(w)-c'(a)=0$ and $\partial_{aa}\mathcal V_{\lambda_A}^a(w)-c''(a)\le0$. Rearranging gives the first two inequalities stated in the statement. If $c'(a)\ne0$, the first-order condition implies $\partial_a\mathcal V_{\lambda_A}^a(w)=c'(a)\ne0$, so the local ambiguity-curvature index $\kappa_A(a;w)$ is well defined. By its definition, $\partial_{aa}\mathcal V_{\lambda_A}^a(w)=\kappa_A(a;w)[\partial_a\mathcal V_{\lambda_A}^a(w)]^2$. Substituting the first-order condition into the second-order inequality therefore gives $c''(a)\ge\kappa_A(a;w)[c'(a)]^2$. This is exactly the final statement.
\end{proof}

\begin{proof}[Proof of Observation \ref{prop:continuum-interior}]
Let $a\in(\underline a,\bar a)$ be implemented by a finite contract $w(0)=s$, $w(1)=s+b$. Define $J(\tilde a):=M(q^{\tilde a},w;\lambda_A)-c(\tilde a)$. Since $a$ is an interior maximizer, $J'(a)=0$ and $J''(a)\le0$. We first show that $b>0$. Recall $d:=1-e^{-\lambda_Ab}$. If $b=0$, then $d=0$, so the derivative of gross compensation with respect to action is zero. Hence, $J'(a)=-c'(a)<0$, contradicting $J'(a)=0$. If $b<0$, then $e^{-\lambda_Ab}>1$ and therefore $d<0$. The denominator $1-ad$ is strictly positive, so $\partial_aM=d/[\lambda_A(1-ad)]<0$. Since $c'(a)>0$, this again gives $J'(a)<0$. Therefore, $b>0$. In particular, $d\in(0,1)$.

The first-order condition now reads $c'(a)=d/[\lambda_A(1-ad)]$. Since $0<d<1$ and $a<1$, we have $d/(1-ad)<1/(1-a)$. To verify this inequality without cross-multiplication ambiguity, note that both denominators are strictly positive and $d(1-a)<1-ad$ is equivalent simply to $d<1$, which holds. Multiplying by $1/\lambda_A$ therefore gives $c'(a)<1/[\lambda_A(1-a)]$, or equivalently $\lambda_A(1-a)c'(a)<1$. This is the first inequality in \eqref{eq:continuum-conditions}.

The same first-order condition can be solved for the bonus. Multiplying by $\lambda_A(1-ad)$ gives $\lambda_Ac'(a)-\lambda_Aa c'(a)d=d$, so $d=\lambda_Ac'(a)/[1+\lambda_Aa c'(a)]$. Hence, $e^{-\lambda_Ab}=1-d=[1-\lambda_A(1-a)c'(a)]/[1+\lambda_Aa c'(a)]$. The first inequality just proved makes the numerator strictly positive. Taking logarithms therefore gives the unique stationary bonus.

Finally, the second-order necessary condition and the multiplier derivative identity recorded before the proof give $0\ge J''(a)=\lambda_A[\partial_aM(q^a,w;\lambda_A)]^2-c''(a)$. The first-order condition identifies $\partial_aM(q^a,w;\lambda_A)$ with $c'(a)$. Substitution yields $0\ge\lambda_A[c'(a)]^2-c''(a)$, or $c''(a)\ge\lambda_A[c'(a)]^2$. This is the second inequality in \eqref{eq:continuum-conditions}.
\end{proof}

\begin{proof}[Proof of Corollary \ref{cor:continuum-foa}]
Let $a^*\in(\underline a,\bar a)$ satisfy the agent's first-order condition under a finite contract. Then, $\partial_aM(q^{a^*},w;\lambda_A)=c'(a^*)$. The second derivative identity recorded above gives $\partial_{aa}M(q^{a^*},w;\lambda_A)=\lambda_A[c'(a^*)]^2$. Therefore, the second derivative of the agent's net objective at $a^*$ is $\lambda_A[c'(a^*)]^2-c''(a^*)$, which is strictly positive by hypothesis. The second-derivative test then implies that $a^*$ is a strict local minimum of the agent's objective. In particular, every sufficiently small neighborhood of $a^*$ contains feasible actions that give the agent strictly greater utility. Hence, $a^*$ cannot be a global maximizer and is not implemented by the contract.
\end{proof}

\begin{proof}[Proof of Theorem \ref{thm:continuum-polarization}]
Fix a finite contract. The agent's objective is continuous in $a$ on the compact interval $[\underline a,\bar a]$, so the maximizer set is nonempty. Suppose an interior action were a maximizer. Observation \ref{prop:continuum-interior} would then require $c''(a)\ge\lambda_A[c'(a)]^2$, contradicting the strict inequality assumed at every interior action. Thus, no interior action can maximize the agent's objective. Every maximizer must therefore be one of the two endpoints.

It remains to compare the endpoint utilities as a function of the bonus. Define $D(b)$ as the agent's payoff under $\bar a$ minus his payoff under $\underline a$. The fixed payment cancels, so direct substitution into the binary multiplier formula gives $D(b)=-\lambda_A^{-1}\log\{[1-\bar a+\bar ae^{-\lambda_Ab}]/[1-\underline a+\underline ae^{-\lambda_Ab}]\}-\Delta c$. Differentiating gives
$D'(b)=\frac{e^{-\lambda_Ab}(\bar a-\underline a)}{(1-\bar a+\bar ae^{-\lambda_Ab})(1-\underline a+\underline ae^{-\lambda_Ab})}>0.$
Thus, $D$ is continuous and strictly increasing on $\R$. At $b=0$, the two actions generate the same gross compensation and the high endpoint costs $\Delta c>0$ more, so $D(0)=-\Delta c<0$. As $b\to\infty$, $e^{-\lambda_Ab}\to0$ and therefore $D(b)\to\lambda_A^{-1}\log[(1-\underline a)/(1-\bar a)]-\Delta c=\overline C-\Delta c$.

First suppose $\Delta c\ge\overline C$. If $\Delta c>\overline C$, the limiting value is strictly negative. Because $D$ is increasing, $D(b)$ is then strictly negative for every finite $b$. If $\Delta c=\overline C$, the limit is zero. Strict increase implies that every finite value of $D$ lies strictly below its limit, so again $D(b)<0$ for every finite $b$. In either case, the low endpoint gives strictly higher utility than the high endpoint under every finite contract. Since no interior action can maximize, $\underline a$ is the unique maximizer.

Now suppose $\Delta c<\overline C$. Then, $D(0)<0$ and $\lim_{b\to\infty}D(b)>0$. Continuity gives at least one root, and strict monotonicity gives exactly one. Denote it by $\widetilde b$. Because $D(0)<0$ and the root is unique, $\widetilde b>0$. To obtain its formula, let $t:=e^{-\lambda_A\widetilde b}$. The equality $D(\widetilde b)=0$ is equivalent to $[1-\bar a+\bar at]/[1-\underline a+\underline at]=e^{-\lambda_A\Delta c}$. Solving for $t$ gives the ratio displayed in the definition of $\widetilde b$. The numerator is strictly positive because $\Delta c<\overline C$ is equivalent to $e^{-\lambda_A\Delta c}(1-\underline a)>1-\bar a$. The denominator is strictly positive because $e^{-\lambda_A\Delta c}<1$ and $\bar a>\underline a$. The numerator is smaller than the denominator because their difference is $e^{-\lambda_A\Delta c}-1<0$. Hence, $t\in(0,1)$ and the logarithm defining $\widetilde b$ is finite and positive. Strict monotonicity now implies $D(b)<0$ for $b<\widetilde b$, $D(b)=0$ for $b=\widetilde b$, and $D(b)>0$ for $b>\widetilde b$. Since only endpoints can maximize, these three cases imply, respectively, that the unique maximizer is $\underline a$, both endpoints are maximizers, and the unique maximizer is $\bar a$. This is exactly \eqref{eq:continuum-jump}.
\end{proof}

\begin{proof}[Proof of Observation \ref{prop:continuum-eu}]
Under EU, the expected wage from $w(0)=s$ and $w(1)=s+b$ at action $a$ is $(1-a)s+a(s+b)=s+ab$. Hence, the agent maximizes $s+ab-c(a)$ over $a\in[\underline a,\bar a]$. Fix any target $a^*\in(\underline a,\bar a)$ and set $b=c'(a^*)$.

Strict convexity and differentiability of $c$ imply the strict supporting-line inequality $c(a)>c(a^*)+c'(a^*)(a-a^*)$  $\forall a\ne a^*$. Rearranging gives $s+a c'(a^*)-c(a)<s+a^*c'(a^*)-c(a^*)$  $\forall a\ne a^*$. Thus, under  $b=c'(a^*)$, the target $a^*$ gives strictly greater utility than every other feasible action and is uniquely implemented. Because $c$ is continuously differentiable and strictly convex, $c'$ is continuous and strictly increasing on the interior of the action interval. Its image is therefore an interval, and it has a continuous inverse on that image. For any bonus $b$ in the interior image of $c'$, the unique implemented action is the unique $a$ satisfying $c'(a)=b$, namely $(c')^{-1}(b)$. Continuity of the inverse proves that the induced action varies continuously with the bonus on this range.
\end{proof}

\subsection{Proofs for Section \ref{subsec:model-libraries}}

\begin{proof}[Proof of Theorem \ref{thm:library}]
We begin with a property that will be used in both parts of the theorem. Fix a payoff vector $x$ and $\lambda>0$. Take any two reference models $q_0,q_1$ and any $t\in[0,1]$. Let $p_0$ and $p_1$ be minimizers in the definitions of $\mathcal W_\lambda(q_0,x)$ and $\mathcal W_\lambda(q_1,x)$, respectively; existence follows from compactness and lower semicontinuity. Define $q_t:=tq_0+(1-t)q_1$ and $p_t:=tp_0+(1-t)p_1$. Joint convexity of $\Gamma$ gives $\Gamma(p_t,q_t)\le t\Gamma(p_0,q_0)+(1-t)\Gamma(p_1,q_1)$. Evaluating the minimization that defines $\mathcal W_\lambda(q_t,x)$ at the feasible model $p_t$ therefore gives $\mathcal W_\lambda(q_t,x)\le t\mathcal W_\lambda(q_0,x)+(1-t)\mathcal W_\lambda(q_1,x)$. Thus, $q\mapsto\mathcal W_\lambda(q,x)$ is convex. Moreover, $W_\lambda(\cdot,x)$ is finite because $p=q$ is feasible and $\Gamma(q,q)=0$; hence, as a finite convex function, it is continuous on $\operatorname{ri}\DeltaY$, and therefore on the convex hull of the full-support reference models. At $\lambda=0$, $\mathcal W_0(q,x)=\E_q[x]$ is affine in $q$.

Now let $\tilde\mu^a\succeq\mu^a$. By definition, the two distributions over models have the same mean and every continuous concave function has a weakly smaller expectation under $\tilde\mu^a$. Applying this definition to the concave function $-\mathcal W_\lambda(\cdot,x)$ gives $\E_{\tilde\mu^a}[\mathcal W_\lambda(q,x)]\ge\E_{\mu^a}[\mathcal W_\lambda(q,x)]$ whenever $\lambda>0$. When $\lambda=0$, the two expectations are equal because $\mathcal W_0(q,x)$ is affine and the mean models coincide.

Suppose first that only the desired-action library becomes more dispersed: $\tilde\mu^{a^*}\succeq\mu^{a^*}$ and $\tilde\mu^a=\mu^a$ for every deviation. Apply the preceding inequality with $x=w$ and $\lambda=\lambda_A$. For every contract, the agent's value of the desired action weakly rises, while the value of every deviation is unchanged. Hence, every contract satisfying participation under $\mu$ continues to satisfy participation under $\tilde\mu$, and every incentive inequality $\widehat V_A^{a^*}(w;\mu^{a^*})\ge\widehat V_A^a(w;\mu^a)$ remains valid because only its left-hand side weakly increases. Therefore, $\widehat X^{a^*}(\mu)\subseteq\widehat X^{a^*}(\tilde\mu)$. For the principal, apply the same convexity result to $x=y-w$. If $\lambda_P>0$, $\widehat V_P^{a^*}(w;\tilde\mu^{a^*})\ge\widehat V_P^{a^*}(w;\mu^{a^*})$ for every fixed contract. If $\lambda_P=0$, the two values are equal because the mean model is unchanged. Combining this fixed-contract comparison with the implementation-set inclusion gives $\widehat\Lambda_P^{a^*}(\tilde\mu)\ge\widehat\Lambda_P^{a^*}(\mu)$. Explicitly, the supremum under $\tilde\mu$ is taken over a weakly larger set, and on every contract in the old set the new objective is weakly larger. The conclusion also holds under the convention that an empty implementation set has value $-\infty$.

Suppose instead that only a deviation library $\hat a\ne a^*$ becomes more dispersed. Applying the convexity result to the agent's wage payoff shows $\widehat V_A^{\hat a}(w;\tilde\mu^{\hat a})\ge\widehat V_A^{\hat a}(w;\mu^{\hat a})$ for every contract. The desired-action value and all other deviation values are unchanged. Therefore, any contract that satisfies the incentive constraint against $\hat a$ under the more dispersed library also satisfies it under the original library, whereas the converse need not hold. All other feasibility constraints are identical. Hence, $\widehat X^{a^*}(\tilde\mu)\subseteq\widehat X^{a^*}(\mu)$. Since the desired-action library is unchanged, the principal's objective is identical under the two profiles for every fixed contract. Maximizing the same objective over the smaller implementation set gives $\widehat\Lambda_P^{a^*}(\tilde\mu)\le\widehat\Lambda_P^{a^*}(\mu)$. This proves the theorem.
\end{proof}

\begin{proof}[Proof of Observation \ref{prop:variance}]
Fix a model $q$ and a finite payoff vector $x$. Because $\Y$ is finite, $x$ is bounded. Let $m_1:=\E_q[x]$ and $m_2:=\E_q[x^2]$. Taylor's theorem applied coordinatewise to the exponential, with a remainder uniform over the finite support, gives $\E_q[e^{-\lambda x}]=1-\lambda m_1+(\lambda^2/2)m_2+O(\lambda^3)$ as $\lambda\downarrow0$. Let $u_\lambda:=-\lambda m_1+(\lambda^2/2)m_2+O(\lambda^3)$, so the preceding expression is $1+u_\lambda$ and $u_\lambda=O(\lambda)$. Taylor's theorem for the logarithm gives $\log(1+u_\lambda)=u_\lambda-u_\lambda^2/2+O(u_\lambda^3)$. Substituting $u_\lambda$ and keeping terms through order $\lambda^2$ yields $\log\E_q[e^{-\lambda x}]=-\lambda m_1+(\lambda^2/2)(m_2-m_1^2)+O(\lambda^3)$. Since $m_2-m_1^2=\Var_q(x)$, multiplying by $-1/\lambda$ gives $M(q,x;\lambda)=\E_q[x]-(\lambda/2)\Var_q(x)+O(\lambda^2)$.

Each model library in the statement of the result is finite. Therefore, for a fixed contract, the $O(\lambda^2)$ remainder can be bounded uniformly over the finitely many $q$ in the library. Averaging the expansion over $\mu^a$ is consequently valid term by term. The average of the mean term is $\E_{\mu^a}[\E_q[x]]=\E_{\bar q^a}[x]$ by the definition of the mean model. Taking $x=y-w$ and $\lambda=\lambda_P$ gives the principal's expansion, and taking $x=w$ and $\lambda=\lambda_A$ and then subtracting $c(a)$ gives the agent's expansion.

For the final claim, fix a payoff vector $x$. As a function of $q$, $\Var_q(x)=\sum_yq(y)x(y)^2-[\sum_yq(y)x(y)]^2$. The first term is affine in $q$. The second term is the square of an affine function, hence convex. Its negative is concave. Therefore, $q\mapsto\Var_q(x)$ is concave. If $\tilde\mu^a\succeq\mu^a$, the definition of the dispersion order applied to this continuous concave function gives $\E_{\tilde\mu^a}[\Var_q(x)]\le\E_{\mu^a}[\Var_q(x)]$, as required.
\end{proof}

\begin{proof}[Proof of Observation \ref{prop:smooth-library-dispersion}]
Fix a contract $w$ and an action $a$. The map $q\mapsto\E_q[w]-c(a)$ is affine in the probability vector $q$. Since $\phi_A$ is concave and increasing, the composition $q\mapsto\phi_A(\E_q[w]-c(a))$ is concave. Therefore, if $\tilde\mu^a\succeq\mu^a$, the definition of the dispersion order gives $U_{A,S}^a(w;\tilde\mu^a)\le U_{A,S}^a(w;\mu^a)$.

Suppose first that only the desired-action library becomes more dispersed. Let $w\in X_S^{a^*}(\tilde\mu)$. Under $\tilde\mu$, the contract satisfies participation and every incentive constraint. Replacing $\tilde\mu^{a^*}$ by the less dispersed $\mu^{a^*}$ weakly raises the agent's value of the desired action, while leaving every deviation value unchanged. Hence, participation remains satisfied and every incentive inequality remains satisfied. Thus, $w\in X_S^{a^*}(\mu)$, proving $X_S^{a^*}(\tilde\mu)\subseteq X_S^{a^*}(\mu)$. Now, suppose instead that only a deviation library $\hat a$ becomes more dispersed. Let $w\in X_S^{a^*}(\mu)$. The desired-action value is unchanged, while the value of deviation $\hat a$ weakly falls under the more dispersed library. The incentive constraint against $\hat a$ therefore remains satisfied. All other incentive constraints and participation are unchanged. Hence, $w\in X_S^{a^*}(\tilde\mu)$, proving $X_S^{a^*}(\mu)\subseteq X_S^{a^*}(\tilde\mu)$.

For the principal, the map $q\mapsto\E_q[y-w]$ is affine, so concavity of $\phi_P$ makes $q\mapsto\phi_P(\E_q[y-w])$ concave. The dispersion order therefore gives $\E_{\tilde\mu^{a^*}}[\phi_P(\E_q[y-w])]\le\E_{\mu^{a^*}}[\phi_P(\E_q[y-w])]$ for every fixed contract. This is the final claim.
\end{proof}

\begin{proof}[Proof of Proposition \ref{prop:endogenous-dispersion}]
Fix $0\le\delta<\delta'\le\bar\delta$. We first prove that each library becomes more dispersed in the convex-order sense while preserving its mean. For action $j\in\{H,L\}$, the mean success probability under $\mu_\delta^j$ is $\sum_k\nu_ke_j(\bar\theta+\delta z_k)=e_j\bar\theta$ because $\sum_k\nu_kz_k=0$. Thus, the mean model is independent of $\delta$.

Let $r:=\delta/\delta'\in[0,1)$ and denote the common mean model for action $j$ by $q^j$. For each scenario $k$, linearity of success probability in the parameter implies $q_\delta^{j,k}=r q_{\delta'}^{j,k}+(1-r)q^j$. Let $\phi$ be any continuous concave function of a model. Concavity gives $\phi(q_\delta^{j,k})\ge r\phi(q_{\delta'}^{j,k})+(1-r)\phi(q^j)$. Averaging with weights $\nu_k$ yields $\E_{\mu_\delta^j}[\phi(q)]\ge r\E_{\mu_{\delta'}^j}[\phi(q)]+(1-r)\phi(q^j)$. Jensen's inequality applied to the more dispersed library gives $\phi(q^j)=\phi(\E_{\mu_{\delta'}^j}[q])\ge\E_{\mu_{\delta'}^j}[\phi(q)]$. Substituting this lower bound into the preceding inequality gives $\E_{\mu_\delta^j}[\phi(q)]\ge\E_{\mu_{\delta'}^j}[\phi(q)]$. Together with equality of the means, this proves $\mu_{\delta'}^j\succeq\mu_\delta^j$ for both actions.

We next prove the implementation-set inclusion. Let $w$ implement $H$ under $\mu(\delta)$, and write $w(0)=s$, $w(1)=s+b$. Necessarily $b>0$. If $b=0$, expected and robust gross wage compensation is independent of action, while $c(H)>c(L)$, so $H$ cannot be optimal. If $b<0$, the function $p\mapsto\Psi(p,b)$ is strictly decreasing because its derivative with respect to $p$ is $(1-e^{-\lambda_Ab})/[\lambda_A(1-p+pe^{-\lambda_Ab})]<0$. Since every common parameter realization generates a larger success probability under $H$ than under $L$, gross wage value is then strictly smaller under $H$ in every scenario while the cost is higher, again making implementation impossible. Therefore, every contract in $\widehat X^H(\mu(\delta))$ has $b>0$.

For such a bonus, define $G_j(\delta,b):=\sum_k\nu_k\Psi(e_j(\bar\theta+\delta z_k),b)$. The incentive constraint is $G_H(\delta,b)-G_L(\delta,b)\ge\Delta c_{HL}$. Let $d_b:=1-e^{-\lambda_Ab}\in(0,1)$. Direct differentiation of $\Psi$ with respect to the success probability gives $\Psi_p(p,b)=d_b/[\lambda_A(1-d_bp)]$ and $\Psi_{pp}(p,b)=d_b^2/[\lambda_A(1-d_bp)^2]>0$. Differentiating the gross incentive advantage with respect to $\delta$ gives $\sum_k\nu_kz_kh_b(\bar\theta+\delta z_k)$, where $h_b(\theta):=e_H\Psi_p(e_H\theta,b)-e_L\Psi_p(e_L\theta,b)$.

We claim that $h_b$ is strictly increasing. Its derivative is $e_H^2\Psi_{pp}(e_H\theta,b)-e_L^2\Psi_{pp}(e_L\theta,b)$. Substituting the formula for $\Psi_{pp}$ shows that the sign is the sign of $[e_H/(1-d_be_H\theta)]^2-[e_L/(1-d_be_L\theta)]^2$. On the admissible parameter range, all denominators are positive because all success probabilities lie in $(0,1)$. For fixed $d_b\theta\ge0$, the function $x\mapsto x/(1-d_b\theta x)$ is strictly increasing wherever its denominator is positive. Since $e_H>e_L$, the derivative of $h_b$ is strictly positive.

Using $\sum_k\nu_kz_k=0$, the derivative of the incentive advantage can be symmetrized as one half of the double sum $\sum_{k,\ell}\nu_k\nu_\ell(z_k-z_\ell)[h_b(\bar\theta+\delta z_k)-h_b(\bar\theta+\delta z_\ell)]$. Each term in this double sum is nonnegative because $h_b$ is increasing: the two bracketed differences always have the same sign. Hence, $G_H(\delta,b)-G_L(\delta,b)$ is weakly increasing in $\delta$. Every incentive-compatible contract at $\delta$ therefore remains incentive compatible at $\delta'$.

Participation also remains satisfied. We have already established $\mu_{\delta'}^H\succeq\mu_\delta^H$. In the ARC specialization used here, Theorem \ref{thm:library} implies that the agent's value of a fixed wage under $H$ weakly rises with this dispersion. Hence, any contract satisfying participation at $\delta$ also satisfies it at $\delta'$. Combining participation and incentive compatibility gives $\widehat X^H(\mu(\delta))\subseteq\widehat X^H(\mu(\delta'))$. Finally, for every fixed contract, Theorem \ref{thm:library} and $\mu_{\delta'}^H\succeq\mu_\delta^H$ imply $\widehat V_P^H(w;\mu_{\delta'}^H)\ge\widehat V_P^H(w;\mu_\delta^H)$ when $\lambda_P>0$; when $\lambda_P=0$, the two values are equal because the mean model does not change. The objective is thus weakly higher at every old feasible contract, and the feasible set weakly expands. Taking suprema proves $\widehat\Lambda_P^H(\mu(\delta'))\ge\widehat\Lambda_P^H(\mu(\delta))$. The final optimization statement now follows immediately. The function $\delta\mapsto\widehat\Lambda_P^H(\mu(\delta))$ is weakly increasing by what we just proved, and $\delta\mapsto K\chi(\delta)$ is weakly increasing because $K\ge0$ and $\chi$ is weakly increasing. Their sum is therefore weakly increasing on $[0,\bar\delta]$, so $\bar\delta$ is a maximizer. If $K>0$ and $\chi$ is strictly increasing, then $K\chi(\delta)$ is strictly increasing, while the first term is weakly increasing. Their sum is strictly increasing, making $\bar\delta$ the unique maximizer.
\end{proof}

\bibliography{ref}
\bibliographystyle{apalike}

\clearpage

\renewcommand{\thepage}{OA-\arabic{page}} \setcounter{page}{1}
{\noindent\LARGE\bf For Online Publication}

\section{Appendix: Generality of First Best}\label{app:general-criteria}

To formalize the scope of Theorem \ref{thm:firstbest-linear}, we consider the Monotonic, Bernoullian, and Archimedean (MBA) preferences of \citet{mba11}. This class imposes general ``rationality'' axioms: weak order, monotonicity, risk independence, and Archimedean continuity, so most ambiguity preferences are special cases of MBA. Within this class, imposing \citeauthor{schmeidler89}'s (\citeyear{schmeidler89}) Uncertainty Aversion axiom and invariance to the location of utility yields the variational subclass axiomatized by \citet{mmr06}; under risk neutrality, the latter property is translation invariance. Multiplier preferences were axiomatized by \citet{strz11}. 
Appendix \ref{app:smooth-ambiguity} uses smooth ambiguity to illustrate what happens when translation invariance is relaxed, while Appendix \ref{app:veu} uses vector expected utility to illustrate what happens when Uncertainty Aversion is relaxed. These comparisons identify the preference restrictions that make linear contracts first-best optimal.

\subsection{Smooth ambiguity}\label{app:smooth-ambiguity}

We first relax translation invariance while retaining Uncertainty Aversion. The smooth-ambiguity preference of \citet{smooth05} is a popular example because its ambiguity attitude can depend on the level of model-specific expected payoffs.\footnote{These preferences are \textit{neutral} to model misspecification \citep[][Section 6.1]{hansenmiss25}.} Observation \ref{prop:smooth-nonlinear} shows that this payoff-level dependence can produce a unique nonlinear first-best contract.
\par To keep the comparison parallel to the first-best problem with variational preferences, let both parties be risk neutral  and share a finite library $Q^a\subset\DeltaY$ of full-support models and a second-order probability $\mu^a\in\Delta(Q^a)$ with full support. For party $i\in\{A,P\}$, let $\phi_i:\R\to\R$ be continuous and strictly increasing and write its smooth-ambiguity certainty equivalent as $\mathcal S_i^a(x):=\phi_i^{-1}\big(\sum_{q\in Q^a}\mu^a(q)\phi_i(\E_q[x])\big)$. The corresponding first-best problem maximizes $\mathcal S_P^a(y-w)$ subject to $\mathcal S_A^a(w)-c(a)\ge U_0$.

\begin{obs}\label{prop:smooth-nonlinear}
There is a three-output first-best problem in which the agent and principal share the same full-support model library and second-order belief, both are risk neutral in money and strictly ambiguity averse, and the unique optimal contract is nonlinear in output. In particular, take $\Y=\{0,1,2\}$, $c(a)=0$, equal second-order probability on $q^1=(0.8,0.1,0.1)$, $q^2=(0.1,0.8,0.1)$, and $q^3=(0.1,0.1,0.8)$, together with $\phi_P(t)=-e^{-t}$ and $\phi_A(t)=-e^{-t}-(1/2)e^{-2t}$. There exists an outside option $U_0$ for which the first-best solution is unique and nonlinear in output.
\end{obs}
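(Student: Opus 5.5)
The plan is to change variables to model-specific expected payments, where both smooth-ambiguity criteria become explicit functions of three numbers. The first-order conditions of the first-best problem then reduce to a fixed-point equation $m_k=h(Y_k)$ for a strictly concave $h$. A linear contract would need $h$ to be affine at three equally spaced points, which strict concavity rules out.

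\emph{Step 1: change of variables.} The models $q^1,q^2,q^3$ are linearly independent. So the map $w\mapsto m:=(\E_{q^1}[w],\E_{q^2}[w],\E_{q^3}[w])$ is a linear bijection of $\R^{\Y}$ onto $\R^3$, and it maps constant shifts of $w$ to constant shifts of $m$. Write $Y_k:=\E_{q^k}[y]$; then $Y=(0.3,1.0,1.7)$, which is equally spaced. A contract $w=\alpha+\beta y$ has $m_k=\alpha+\beta Y_k$. Hence $w$ is linear in output if and only if $m_2-m_1=m_3-m_2$. In the new variables the problem is
\[
\max_{m\in\R^3}\ -\log\Big(\tfrac13\textstyle\sum_k e^{-(Y_k-m_k)}\Big)\quad\text{s.t.}\quad \tfrac13\textstyle\sum_k\phi_A(m_k)\ge\phi_A(U_0).
\]
The objective is concave, since it is minus a log-sum-exp of affine functions. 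The feasible set is convex because $\phi_A$ is concave; indeed $\phi_A''(t)=-e^{-t}-2e^{-2t}<0$. Slater's condition holds by taking $m$ large. Therefore the KKT conditions with some $\eta\ge0$ are necessary. They are also sufficient, because the objective is concave and the constraint function is concave.

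\emph{Step 2: solve the KKT conditions.} The objective is strictly increasing in every $m_k$ in the direction $-\m{1}$ of the wage, that is, lowering $m$ raises it, so by the argument of Lemma~\ref{lem:FB-participation} we get $\eta>0$ and participation binds. Stationarity reads $e^{-(Y_k-m_k)}\propto\phi_A'(m_k)=e^{-m_k}+e^{-2m_k}$. Taking logs gives $Y_k=g(m_k)+\kappa$ for a constant $\kappa$, where $g(m):=2m+\log(1+e^{-m})$. Here $g'(m)=2-e^{-m}/(1+e^{-m})\in(1,2)$ and $g''(m)=e^{-m}/(1+e^{-m})^2>0$. So $g$ is a strictly increasing, strictly convex bijection of $\R$, and $m_k=g^{-1}(Y_k-\kappa)$ with $g^{-1}$ strictly concave. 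Equal spacing of the $Y_k$ and strict concavity give $m_2-m_1>m_3-m_2$. Hence every KKT point, and in particular every optimum, is nonlinear in output.

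\emph{Step 3: existence and uniqueness.} Fix any $\kappa$, say $\kappa=0$, define $m_k:=g^{-1}(Y_k)$, recover $w$, and set $U_0:=\phi_A^{-1}\big(\tfrac13\sum_k\phi_A(m_k)\big)$. Then $m$ satisfies KKT with binding participation, so it is optimal by sufficiency. For uniqueness, suppose $m\neq m'$ are both optimal. Their midpoint is feasible and weakly better by concavity. Log-sum-exp is strictly convex except along the direction $\m{1}$. So $m'-m$ must be parallel to $\m{1}$, otherwise the midpoint would be strictly better. But a nonzero shift along $\m{1}$ changes the participation value strictly, and participation binds at both optima, so $m=m'$.

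The step I expect to be the main obstacle is this uniqueness argument together with justifying KKT sufficiency under a quasi-concave constraint, in particular checking that the log-sum-exp objective fails strict concavity only along $\m{1}$. If that is delicate, I would fall back on computing the Hessian explicitly: it is $-(\mathrm{diag}(\pi)-\pi\pi^\top)$ with $\pi$ a full-support vector, whose kernel is exactly the span of $\m{1}$.
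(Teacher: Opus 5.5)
Your route is essentially the paper's. You pass to model-specific expectations $m=Qw$ (the paper's $z$), impose the first-order conditions, choose $U_0$ so that participation binds at the constructed point, and read off nonlinearity from strict curvature of the inverse stationarity map at the equally spaced $Y=(0.3,1,1.7)$.

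However, Step 2 contains a sign slip that carries into Step 3. From $e^{m_k-Y_k}\propto e^{-m_k}+e^{-2m_k}=e^{-m_k}(1+e^{-m_k})$, taking logs gives $Y_k=2m_k-\log(1+e^{-m_k})+\kappa$, not $2m_k+\log(1+e^{-m_k})+\kappa$. The correct map is $g(m)=2m-\log(1+e^{-m})=3m-\log(1+e^{m})$, which is exactly the paper's $L$. Its derivative $2+1/(1+e^{m})$ lies in $(2,3)$ and is decreasing. So $g$ is strictly \emph{concave}, $g^{-1}$ is strictly convex, and the right conclusion is $m_2-m_1<m_3-m_2$ (the paper's $z_1+z_3>2z_2$), not the reverse.

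Nonlinearity is unaffected, because either strict curvature rules out equal spacing, and the corrected $g$ is still a bijection of $\R$. But as written, your Step 3 point $m_k=g^{-1}(Y_k)$ does not satisfy stationarity: that would require $\log(1+e^{-m_k})$ to be constant across $k$. So the existence step needs the corrected $g$.

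The step you flag as the main obstacle works as you have it. Set $x_k:=m_k-Y_k$. Cauchy--Schwarz gives $\sum_k e^{(x_k+x_k')/2}\le(\sum_k e^{x_k})^{1/2}(\sum_k e^{x_k'})^{1/2}$, with equality iff $x'-x\in\operatorname{span}(\m{1})$. This yields strict midpoint concavity of your objective off that line, and a shift along $\m{1}$ is excluded by binding participation. KKT sufficiency is routine, since your constraint function $\frac13\sum_k\phi_A(m_k)$ is concave, not merely quasi-concave.

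The paper avoids this issue altogether. It maximizes $\frac13\sum_k\phi_P(Y_k-m_k)=-\frac13\sum_k e^{-Y_k}e^{m_k}$, a strictly increasing transform of the certainty equivalent that is strictly concave in $m$. That makes uniqueness immediate and lets the paper fix the multiplier at one. In exchange, your formulation with a free $\kappa$ proves slightly more: Slater holds for every $U_0$, so every optimum for every $U_0$ is a KKT point and hence nonlinear.

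Finally, add the one-line check that $\phi_P$ and $\phi_A$ are strictly increasing and strictly concave, since the statement asserts strict ambiguity aversion.
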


The key mechanism is visible in the marginal weights. For party $i\in\{A,P\}$, the marginal weight on model $q$ is proportional to $\mu^a(q)\phi_i'(\E_q[x])$. A common shift in the payoff changes these relative weights whenever ambiguity attitudes vary with payoff levels. Since the agent evaluates wages and the principal evaluates residual output, a contract can change their model weights differently. Nonlinear compensation can then improve the allocation by changing model-specific expected payoffs asymmetrically across the two parties. Notice that this channel is absent under proportional variational preferences. There, $\Gamma^a(p)$ assigns a penalty to the candidate model $p$ that is independent of the payoff being evaluated. Changing the contract changes $\E_p[x]$, but not the cost attached to $p$.

\subsection{Vector expected utility}\label{app:veu}

We now relax Uncertainty Aversion while retaining translation invariance. A popular example is the Vector expected utility (VEU) of \citet{vec09}. It evaluates a payoff from its expectation under a baseline probability together with an adjustment for its exposure to distinct sources of ambiguity. To align the comparison with the preceding problems, let both parties be risk neutral in money, share the same full-support baseline probability $p$, and share the same adjustment factors, denoted $\zeta_1,\zeta_2$. For $i\in\{A,P\}$,
$$
\mathcal V_i^{\mathrm{VEU}}(x)
:=
\E_p[x]
+
A_i\big(\E_p[\zeta_1x],\E_p[\zeta_2x]\big).
$$
For the primitives, take $\Y=\{0,1,2\}$, $p=(1/3,1/3,1/3)$, $c(a)=0$, and, ordering coordinates by output, let $e_1:=\sqrt{3/2}(-1,0,1)$, $e_2:=(1,-2,1)/\sqrt2$, $\zeta_1:=(4e_1+3e_2)/5$, and $\zeta_2:=(3e_1-4e_2)/5$. Finally, let $g(t):=1-e^{-t^2}$, $\varepsilon:=1/10$, $A_A(\varphi_1,\varphi_2):=-\varepsilon[g(\varphi_1)+g(\varphi_2)]$ and $A_P(\varphi_1,\varphi_2):=-2\varepsilon[g(\varphi_1)+g(\varphi_2)]$. The resulting VEU first-best problem becomes $\max_w\mathcal V_P^{\mathrm{VEU}}(y-w)$ subject to $\mathcal V_A^{\mathrm{VEU}}(w)\ge U_0$.

\begin{obs}\label{prop:veu-nonlinear}
Under the VEU specification above, both parties' preferences are strictly monotone and ambiguity averse relative to EU in the comparative sense of \citet{vec09}. They have the same baseline prior and adjustment factors, and their adjustment functions are proportional. However, for every outside option $U_0\in\R$, the VEU first-best problem has a unique solution, and that contract is nonlinear in output.
\end{obs}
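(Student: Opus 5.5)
The plan is to diagonalize both VEU functionals with a $p$-orthonormal change of coordinates. Then check the preference properties, and finally reduce the first-best problem to two independent scalar problems whose solutions visibly fail to share a common slope.

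\textbf{Coordinates.} First I would verify that $\m{1},e_1,e_2$ is an orthonormal basis of $\R^{\Y}$ under the inner product $\langle x,z\rangle:=\E_p[xz]$. Direct computation gives $\E_p[e_k]=0$, $\E_p[e_k^2]=1$ and $\E_p[e_1e_2]=0$. Since $(\zeta_1,\zeta_2)$ is a rotation of $(e_1,e_2)$, the triple $\m{1},\zeta_1,\zeta_2$ is also orthonormal. Hence every $x$ has the unique expansion
$$x=\E_p[x]\m{1}+\varphi_1(x)\zeta_1+\varphi_2(x)\zeta_2,\qquad \varphi_k(x):=\E_p[\zeta_kx],$$
and in these coordinates
$$\mathcal V_i^{\mathrm{VEU}}(x)=\E_p[x]-k_i\varepsilon\,[g(\varphi_1)+g(\varphi_2)],\qquad k_A=1,\ k_P=2.$$

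\textbf{Preference properties.}
\begin{itemize}
\item Ambiguity aversion relative to EU with prior $p$ follows from $g\ge0$: this gives $\mathcal V_i^{\mathrm{VEU}}\le\E_p$, with equality on constants.
\item For strict monotonicity I would compute the gradient $p(y)-k_i\varepsilon\sum_kg'(\varphi_k)\,p(y)\zeta_k(y)$. Using $|g'(t)|=|2te^{-t^2}|\le\sqrt{2/e}$ and the explicit entries of $\zeta_k$ (all bounded by roughly $1.5$), each coordinate is bounded below by $p(y)\big[1-2\varepsilon\sqrt{2/e}\,(|\zeta_1(y)|+|\zeta_2(y)|)\big]>0$ when $\varepsilon=1/10$. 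I expect this to be a short numerical check.
\item Common prior, common factors and proportional adjustment functions hold by construction.
\end{itemize}

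\textbf{Reduction of the first-best problem.} Write $\eta_k:=\varphi_k(y)$. Direct calculation gives $\E_p[e_1y]=\sqrt{2/3}$ and $\E_p[e_2y]=0$, so
$$\eta_1=\tfrac45\sqrt{2/3},\qquad \eta_2=\tfrac35\sqrt{2/3}.$$
Both are positive and distinct. For a contract with coordinates $(m,\psi_1,\psi_2)$, the residual has coordinates $(\E_p[y]-m,\ \eta_1-\psi_1,\ \eta_2-\psi_2)$. Translation invariance gives a binding-participation argument exactly as in Lemma \ref{lem:FB-participation}, which pins $m=U_0+\varepsilon\sum_kg(\psi_k)$. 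Substituting into the principal's objective leaves
$$\E_p[y]-U_0-\varepsilon\sum_{k=1,2}f_k(\psi_k),\qquad f_k(\psi):=g(\psi)+2g(\eta_k-\psi).$$
So the problem separates into two one-dimensional minimizations that do not depend on $U_0$. This is why the conclusion holds for every outside option.

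A contract is linear, $w=\alpha+sy$, if and only if $\psi_k=s\eta_k$ for both $k$ with a common $s$. Nonlinearity therefore reduces to showing that the unique minimizers satisfy $\psi_1^*/\eta_1\ne\psi_2^*/\eta_2$.

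\textbf{Main obstacle: the scalar problems.} The difficulty is that $g$ is not convex, so existence, uniqueness and the value of the ratio $\psi^*/\eta$ need care. I would proceed in four steps.
\begin{enumerate}
\item Show every minimizer lies in $[0,\eta_k]$. Outside this interval, moving $\psi$ toward the interval lowers both terms, because $g$ is even and increasing in $|t|$.
\item Since $\eta_k\le\tfrac45\sqrt{2/3}\approx0.65<1/\sqrt2$, $g$ is strictly convex on $[0,\eta_k]$, where $g''(t)=2e^{-t^2}(1-2t^2)$. Hence $f_k$ is strictly convex there and the minimizer is unique and interior.
\item The minimizer is characterized by the first-order condition $\psi e^{-\psi^2}=2(\eta-\psi)e^{-(\eta-\psi)^2}$. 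Writing $\psi=r\eta$, this becomes
$$r\,e^{-r^2\eta^2}=2(1-r)\,e^{-(1-r)^2\eta^2},\quad\text{equivalently}\quad \log\frac{r}{2(1-r)}=\eta^2(2r-1).$$
\item Apply the implicit function theorem to the equation in step 3. The left side is strictly increasing in $r$, and at $\eta=0$ the root is $r=2/3$, so the right side is positive near the root. I would try to show the root $r(\eta)$ is strictly increasing in $\eta$ on $(0,0.66)$. Failing a clean monotonicity argument, I would simply evaluate both roots numerically at $\eta_1$ and $\eta_2$ and confirm they differ.
\end{enumerate}
Distinct ratios give a unique contract that is nonlinear in output.
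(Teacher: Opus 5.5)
Your proposal is correct and follows essentially the same route as the paper's proof: the same $p$-orthonormal coordinates, binding participation, separation into the scalar problems $g(\psi)+2g(\eta_k-\psi)$ confined to $[0,\eta_k]$ where $g$ is strictly convex, and the same ratio equation $\log\frac{r}{2(1-r)}=\eta^2(2r-1)$. The step you left tentative closes immediately, as in the paper: the left side minus the right side is strictly increasing in $r$ and equals $\log(1/2)<0$ at $r=\tfrac12$, so the root satisfies $r>\tfrac12$, and the implicit function theorem gives $r'(\eta)=2\eta(2r-1)/[1/(r(1-r))-2\eta^2]>0$ on $(0,1/\sqrt2)$; alternatively, equal ratios at $\eta_1\neq\eta_2$ would force $2r-1=0$, which violates the equation.
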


The VEU specification above is translation invariant, as shown in the proof of Observation \ref{prop:veu-nonlinear}, but it violates the Uncertainty Aversion axiom \citep[][Corollary 2]{vec09}. Thus, translation invariance alone does not place preferences in the variational class. Here, the ambiguity adjustment is nonlinear across distinct ambiguity factors, so the optimal exposure to one factor need not equal the optimal exposure to another. A single output slope cannot generally satisfy both margins. The unique optimum can therefore be nonlinear even though both parties share the same baseline probability and adjustment factors, and their adjustment functions differ only by scale.

\section{Appendix: Regular Solutions}\label{app:regularity}

Section \ref{sec:moralhazard} uses the KKT conditions to characterize solutions of $(\mathsf{MH}_{a^*})$. We take regularity to mean the Mangasarian-Fromovitz constraint qualification (MFCQ) \citep{mangasarian67}: at a feasible contract, there must exist a wage perturbation that strictly relaxes every binding constraint to first order. When the relevant effective models are unique, this condition is transparent because the gradient of $V_A^a(w)$ is the effective model $\pi_A^a(\cdot;w)$. For a feasible contract $w$, let $B(w):=\{a\neq a^*:V_A^{a^*}(w)=V_A^a(w)\}$.

\begin{lemma}\label{lem:regularity-characterization}
Suppose participation binds at a smooth feasible contract $w$. Then, $w$ is regular if and only if $\pi_A^{a^*}(\cdot;w)\notin\co\{\pi_A^a(\cdot;w):a\in B(w)\}$, which holds when $B(w)=\varnothing$.
\end{lemma}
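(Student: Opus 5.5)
The plan is to write MFCQ out explicitly using the gradients from Appendix~\ref{app:prelim} and then reduce the resulting linear system to a separation statement via a theorem of the alternative. At a smooth feasible contract where participation binds, the binding constraints are $g_0(w)=V_A^{a^*}(w)-U_0\ge0$ and $g_a(w)=V_A^{a^*}(w)-V_A^a(w)\ge0$ for $a\in B(w)$. Their gradients are $\nabla g_0=\pi_A^{a^*}$ and $\nabla g_a=\pi_A^{a^*}-\pi_A^a$. There are no equality constraints, so MFCQ reduces to the existence of a direction $d\in\R^{\Y}$ with $\pi_A^{a^*}\cdot d>0$ and $(\pi_A^{a^*}-\pi_A^a)\cdot d>0$ for every $a\in B(w)$.

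I will first simplify this system using translation. Every effective model sums to one, so adding $k\m{1}$ to $d$ raises $\pi_A^{a^*}\cdot d$ by $k$ and leaves each difference $(\pi_A^{a^*}-\pi_A^a)\cdot d$ unchanged. The participation inequality can therefore always be satisfied. MFCQ is thus equivalent to the existence of $d$ with $\pi_A^{a^*}\cdot d>\pi_A^a\cdot d$ for all $a\in B(w)$.

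Next I will apply Gordan's alternative to the finite family of vectors $\{\pi_A^{a^*}-\pi_A^a\}_{a\in B(w)}$. Either some $d$ makes all of them strictly positive against $d$, or there are nonnegative weights $\theta^a$, not all zero, with $\sum_a\theta^a(\pi_A^{a^*}-\pi_A^a)=0$. After normalizing $\sum_a\theta^a=1$, the second alternative says exactly that $\pi_A^{a^*}=\sum_a\theta^a\pi_A^a\in\co\{\pi_A^a:a\in B(w)\}$. Hence regularity holds if and only if $\pi_A^{a^*}$ lies outside that hull.

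To make the argument self-contained without citing Gordan, I will argue each direction directly. If $\pi_A^{a^*}=\sum\theta^a\pi_A^a$, then for any $d$ we have $\sum\theta^a(\pi_A^{a^*}-\pi_A^a)\cdot d=0$, so not every term can be strictly positive. Conversely, if $\pi_A^{a^*}$ lies outside the hull, then strict separation of a point from a compact convex set gives $d$ with $\pi_A^{a^*}\cdot d>\max_{a\in B(w)}\pi_A^a\cdot d$.

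When $B(w)=\varnothing$, the hull is empty and the condition holds trivially. Equivalently, $d=\m{1}$ strictly relaxes the only binding constraint, participation. The only step needing care is the converse direction, which relies on strict separation; the rest is routine.
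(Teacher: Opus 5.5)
Your proposal is correct and takes essentially the same route as the paper. Both arguments write MFCQ using the effective-model gradients. Both rule out a relaxing direction by averaging the strict inequalities with the convex weights. In the converse, both strictly separate $\pi_A^{a^*}$ from the compact hull and then add a multiple of $\m{1}$ to relax participation, using $\m{1}$ alone when $B(w)=\varnothing$. The only differences are cosmetic: you do the $\m{1}$-translation before separating, and you mention Gordan's alternative as an equivalent packaging of the same two directions.
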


The characterization depends only on effective models and therefore applies to any smooth variational specification. Regularity fails precisely when the desired-action marginal probability model can be reproduced as a convex combination of the marginal models associated with binding deviations. In that case, no first-order wage perturbation can move strictly away from every active deviation at once.

We next specialize the characterization to a subclass of variational preferences. Following \citet[eq. (6)]{hansenmiss25}, suppose that, for every action $a$, the ambiguity index is the $\phi$-divergence from the full-support reference model $q^a$, so that $\Gamma^a(p):=\sum_{y\in\Y}q^a(y)\phi(p(y)/q^a(y))$. Here, $\phi:[0,\infty)\to[0,\infty)$ is continuous and strictly convex, satisfies $\phi(1)=0$ with $\phi(t)/t\to\infty$ as $t\to\infty$. This class includes relative entropy, obtained from $\phi(t)=t\log t-t+1$, and the Gini index used in \citet[eq. (21)--(22)]{hansenmiss25}, obtained from $\phi(t)=(t-1)^2/2$. For the sufficient condition below, suppose also that $\phi$ is twice continuously differentiable on $(0,\infty)$, $\phi''>0$, and the inverse marginal penalty $\ell_\phi:=(\phi')^{-1}$ is log concave on $\phi'((0,\infty))$. Both relative entropy and the Gini index satisfy this condition: respectively, $\ell_\phi(z)=e^z$ and $\ell_\phi(z)=1+z$.

\begin{proposition}\label{prop:regularity-sufficient}
Suppose the variational ambiguity indices are generated by the same $\phi$-divergence as above. Then, every feasible contract is smooth. Let $w$ be feasible with binding participation. If $B(w)=\varnothing$, then $w$ is regular. If $B(w)\neq\varnothing$, suppose additionally that $w$ is nondecreasing in output, that $y\mapsto q^a(y)/q^{a^*}(y)$ is weakly decreasing and nonconstant for every $a\in B(w)$, and that the agent's effective models under $a^*$ and every $a\in B(w)$ have full support. Then, $w$ is regular.
\end{proposition}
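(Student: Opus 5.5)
The plan has three parts: smoothness, the case $B(w)=\varnothing$, and the case $B(w)\neq\varnothing$. The last is reduced to a strict first-order stochastic dominance statement, which Lemma \ref{lem:regularity-characterization} then turns into regularity.

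\textbf{Smoothness.} Under a $\phi$-divergence with $\phi$ strictly convex, $\Gamma^a$ is strictly convex on $\DeltaY$. It is finite and continuous there because $\phi$ is continuous on $[0,\infty)$ and $q^a$ has full support. Hence for every payoff vector $x$ and every $\lambda>0$, the map $p\mapsto\E_p[x]+\Gamma^a(p)/\lambda$ has a unique minimizer, so all effective-model sets are singletons. For $\lambda_P=0$ we have $C^a=\{q^a\}$, since $\phi(t)=0$ only at $t=1$, so the principal's set is also a singleton. Thus every feasible contract is smooth. When $B(w)=\varnothing$, regularity follows at once from Lemma \ref{lem:regularity-characterization}.

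\textbf{Reduction to dominance.} Now let $B(w)\neq\varnothing$. Again by Lemma \ref{lem:regularity-characterization}, it suffices to find a vector $d$ with
$$\pi_A^{a^*}(\cdot;w)\cdot d>\pi_A^a(\cdot;w)\cdot d\quad\text{for every }a\in B(w).$$
Such a $d$ strictly separates $\pi_A^{a^*}$ from the convex hull of the $\pi_A^a$. I will take $d$ to be an increasing function of output, for example $d(y)=y$. The inequality then follows if $\pi_A^{a^*}$ strictly first-order dominates each $\pi_A^a$. That in turn follows if the ratio $\pi_A^a(y)/\pi_A^{a^*}(y)$ is weakly decreasing in $y$ and not constant, since a nonconstant monotone likelihood ratio gives strict dominance.

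\textbf{Computing the likelihood ratio.} The effective models have full support by hypothesis, so the simplex first-order conditions hold with equality at every output. They read $\phi'(p(y)/q^a(y))+\lambda_Aw(y)=\kappa_a$, where $\kappa_a$ is the normalization multiplier. Solving gives $\pi_A^a(y)=q^a(y)\,\ell_\phi(\kappa_a-\lambda_Aw(y))$, and therefore
$$\frac{\pi_A^a(y)}{\pi_A^{a^*}(y)}=\frac{q^a(y)}{q^{a^*}(y)}\cdot\frac{\ell_\phi(\kappa_a-t(y))}{\ell_\phi(\kappa_{a^*}-t(y))},\qquad t(y):=\lambda_Aw(y),$$
where $t$ is nondecreasing in $y$ because $w$ is. The first factor is weakly decreasing and nonconstant by hypothesis. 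For the second factor, log-concavity of $\ell_\phi$ gives
$$\frac{d}{dt}\Big[\log\ell_\phi(\kappa_a-t)-\log\ell_\phi(\kappa_{a^*}-t)\Big]=(\log\ell_\phi)'(\kappa_{a^*}-t)-(\log\ell_\phi)'(\kappa_a-t),$$
which is $\le0$ whenever $\kappa_a\le\kappa_{a^*}$. In that case the second factor is nonincreasing in $y$. The product of two positive nonincreasing factors, one of them nonconstant, is nonincreasing; and it is nonconstant as well, since both factors are positive and nonincreasing.

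\textbf{Main obstacle: the case $\kappa_a>\kappa_{a^*}$.} Here the direction above fails, and this is where I expect the difficulty. First I will try to show this case cannot occur at an active deviation. By conjugate duality for $\phi$-divergences,
$$\mathcal V^a_{\lambda_A}(w)=\frac{\kappa_a}{\lambda_A}-\frac1{\lambda_A}\sum_yq^a(y)\phi^*\big(\kappa_a-t(y)\big).$$
Binding participation and a binding incentive constraint pin down both values. I would combine these with the MLR ordering of $q^a$ against $q^{a^*}$ and with monotonicity of $w$ to compare $\kappa_a$ with $\kappa_{a^*}$. If that comparison fails, the fallback is to abandon the fixed direction $d(y)=y$ and build $d$ directly from the tilts $\ell_\phi(\kappa-t(y))$, choosing it so that the normalization constraints $\sum_y\pi(y)=1$ force the required strict inequality for each active deviation.
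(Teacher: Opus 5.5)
Your smoothness argument, the $B(w)=\varnothing$ case, the characterization $\pi_A^a(y)=q^a(y)\ell_\phi(\kappa_a-\lambda_Aw(y))$, the likelihood-ratio decomposition, and the log-concavity step all match the paper and are fine. The proof is not complete, though. Everything hinges on showing $\kappa_a\le\kappa_{a^*}$ for each $a\in B(w)$, and you leave that open.

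The route you propose for that step is also a dead end. Binding participation and binding incentive constraints only say $\mathcal V^{a^*}_{\lambda_A}(w)=U_0+c(a^*)$ and $\mathcal V^{a}_{\lambda_A}(w)-\mathcal V^{a^*}_{\lambda_A}(w)=c(a)-c(a^*)$. Since $U_0$ and the costs are free primitives, these equations put no restriction on $w$, $q^a$, or $q^{a^*}$, so they cannot restrict the normalization constants. The comparison must come from monotonicity of $w$ and the likelihood-ratio ordering alone. In fact it holds for every nondecreasing $w$, whether or not the constraint against $a$ is active. Your fallback of building $d$ from the tilts is not specified enough to evaluate.

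The missing idea is a Chebyshev (covariance) inequality applied to the deviation's normalization, evaluated under the desired action's reference model. Let $r(y):=q^{a^*}(y)/q^a(y)$; it is weakly increasing and $\E_{q^a}[r]=1$. Let $f(y):=\ell_\phi(\kappa_a-\lambda_Aw(y))$; it is weakly decreasing because $w$ is nondecreasing, and $\E_{q^a}[f]=1$ by normalization of $\pi_A^a$. Opposite monotonicity gives $\operatorname{Cov}_{q^a}(r,f)\le0$, so
\[
\sum_y q^{a^*}(y)\,\ell_\phi\big(\kappa_a-\lambda_Aw(y)\big)=\E_{q^a}[rf]\le\E_{q^a}[r]\,\E_{q^a}[f]=1 .
\]
The map $\kappa\mapsto\sum_yq^{a^*}(y)\ell_\phi(\kappa-\lambda_Aw(y))$ is strictly increasing (its domain $\phi'((0,\infty))$ is unbounded above) and equals $1$ at $\kappa_{a^*}$. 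Hence $\kappa_{a^*}\ge\kappa_a$. The case $\kappa_a>\kappa_{a^*}$ therefore never arises, and your likelihood-ratio argument with $d(y)=y$ then goes through as written.
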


Proposition \ref{prop:regularity-sufficient} gives familiar primitive conditions for the regularity imposed in Theorem \ref{thm:MH-general}. The likelihood-ratio condition is the same ordering used in Section \ref{sec:moralhazard}: the desired action shifts probability toward higher outputs relative to every binding deviation. Under a $\phi$-divergence whose inverse marginal penalty is log concave, a common nondecreasing wage tilt preserves this ordering in the parties' effective models. The desired-action effective model therefore has strictly higher expected output than every binding-deviation effective model and cannot lie in their convex hull.

For relative entropy, the full-support condition in Proposition \ref{prop:regularity-sufficient} is automatic for every finite wage schedule, so monotonicity and the standard strict MLRP ordering are sufficient for regularity. The result also applies to the Gini divergence. In that case the effective model under action $a$ is interior exactly when $1-\lambda_A\big(w(y)-\E_{q^a}w\big)>0$ for every output $y$; when $w$ is nondecreasing, this is equivalent to $\lambda_A\big(\max_yw(y)-\E_{q^a}w\big)<1$. Thus, the same regularity argument applies whenever the Gini distortion remains interior.

These conditions are distinct from assumptions used to justify the first-order approach with continuous effort. \citet[Section 5.2.3]{laffontmartimort02} discuss MLRP together with convexity of the distribution function as sufficient conditions for replacing the agent's global optimization problem by its local first-order condition. Those assumptions must guarantee global incentive compatibility from a local condition. Here, the action set is finite and every incentive constraint is retained; regularity is needed only to ensure valid KKT multipliers at the optimum.

\section{Appendix: Omitted Proofs}\label{app:oproofs}
\subsection{EU derivation in Section \ref{subsec:twooutcome}}\label{app:EU}
The EU comparison used in Section \ref{subsec:twooutcome} follows from a direct expansion. For clarity, write $\Psi_\lambda(p,b):=-\lambda^{-1}\log(1-p+pe^{-\lambda b})$ for $\lambda>0$ and define its continuous extension at $\lambda=0$ by $\Psi_0(p,b):=pb$. Taylor expansion of the exponential gives $1-p+pe^{-\lambda b}=1-p\lambda b+(p/2)\lambda^2b^2+O(\lambda^3)$. Expanding the logarithm around one and simplifying yields $\Psi_\lambda(p,b)=pb-(\lambda/2)p(1-p)b^2+O(\lambda^2)$, locally uniformly for $b$ in a bounded set.

Therefore, $D_\lambda(b):=\Psi_\lambda(p_H,b)-\Psi_\lambda(p_L,b)$ satisfies $D_\lambda(b)=(p_H-p_L)b-(\lambda/2)(p_H-p_L)(1-p_H-p_L)b^2+O(\lambda^2)$. At $\lambda=0$, the incentive equation $D_0(b)=\Delta c_{HL}$ has the unique solution $b^{EU}=\Delta c_{HL}/(p_H-p_L)$, and $\partial_bD_0(b^{EU})=p_H-p_L>0$. The implicit-function theorem therefore gives a unique smooth solution $b^{IC}(\lambda)$ near $\lambda=0$ with $b^{IC}(0)=b^{EU}$. Write $b^{IC}(\lambda)=b^{EU}+\lambda\beta+O(\lambda^2)$ and substitute this expression into the preceding expansion of $D_\lambda$. The zero-order term is $\Delta c_{HL}$ by definition of $b^{EU}$, and the coefficient on $\lambda$ must vanish. This gives $(p_H-p_L)\beta-(1/2)(p_H-p_L)(1-p_H-p_L)(b^{EU})^2=0$, so $\beta=(1/2)(1-p_H-p_L)(b^{EU})^2$. This is the expansion stated in the text.

Finally, $\overline C_{HL}=\lambda_A^{-1}\log[(1-p_L)/(1-p_H)]$. The logarithm is a strictly positive constant because $p_H>p_L$, so $\overline C_{HL}$ is strictly decreasing in $\lambda_A$. Hence, stronger misspecification concern tightens the global implementability bound, as stated.

\subsection{Proofs for Appendices \ref{app:general-criteria} and \ref{app:regularity}}\label{app:general-criteria-proofs}

\begin{proof}[Proof of Observation \ref{prop:smooth-nonlinear}]
We verify in turn that the specified preferences satisfy the stated properties, that the constructed contract is the unique first-best optimum, and that this optimum is nonlinear. The two outer aggregators are $\phi_P(t)=-e^{-t}$ and $\phi_A(t)=-e^{-t}-(1/2)e^{-2t}$. Their derivatives satisfy $\phi_P'(t)=e^{-t}>0$, $\phi_P''(t)=-e^{-t}<0$, $\phi_A'(t)=e^{-t}+e^{-2t}>0$, and $\phi_A''(t)=-e^{-t}-2e^{-2t}<0$ for every $t$. Thus, both aggregators are strictly increasing and strictly concave. With linear utility over money, the two parties are therefore risk neutral within each model and strictly ambiguity averse across models in the smooth-ambiguity sense.

Let $Q$ be the $3\times3$ matrix whose rows are the three reference models in the statement. Explicitly, $Q=0.7I+0.1\m{1}\m{1}^{\top}$. The vector $\m{1}$ is an eigenvector with eigenvalue one, and every vector orthogonal to $\m{1}$ is an eigenvector with eigenvalue $0.7$. Hence, all eigenvalues are strictly positive and $Q$ is invertible. Write the aggregate output vector as $y=(0,1,2)^{\top}$ and set $m:=Qy=(0.3,1,1.7)^{\top}$. For any wage vector $w$, define $z:=Qw$. Then, $z_j=\E_{q^j}[w]$, while $m_j-z_j=\E_{q^j}[y-w]$. Since $Q$ is invertible, choosing $w\in\R^3$ is equivalent to choosing $z\in\R^3$.

For a scalar $m\in\R$, define $z(m)$ as the solution of $e^{-(m-z)}=e^{-z}+e^{-2z}$. We first show that this solution exists and is unique. Set $t:=e^z>0$. Multiplying the defining equality by $t^2$ gives $e^{-m}t^3=t+1$. Let $H_m(t):=e^{-m}t^3-t-1$. We have $H_m(0)=-1$, and $H_m(t)\to+\infty$ as $t\to\infty$. Its derivative is $H_m'(t)=3e^{-m}t^2-1$, which vanishes at exactly one positive point $\widehat t=(e^m/3)^{1/2}$. The derivative is negative below $\widehat t$ and positive above it, so $H_m$ decreases and then increases. At the unique critical point, the identity $3e^{-m}\widehat t^2=1$ gives $e^{-m}\widehat t^3=\widehat t/3$, and therefore $H_m(\widehat t)=\widehat t/3-\widehat t-1=-2\widehat t/3-1<0$. Since the function is negative at its unique minimum and tends to $+\infty$, it crosses zero exactly once on $(\widehat t,\infty)$. Hence, there is a unique $t>0$, so a unique $z(m)\in\R$ satisfying the equation. Set $z_j:=z(m_j)$ for $j=1,2,3$. Because $\phi_A$ is continuous, strictly increasing, tends to $-\infty$ as $t\to-\infty$, and tends to zero from below as $t\to+\infty$, it maps $\R$ bijectively onto $(-\infty,0)$. The number $\frac{1}{3}\sum_j\phi_A(z_j)$ belongs to $(-\infty,0)$, so there is a unique outside option $U_0$ such that $\phi_A(U_0)=\frac{1}{3}\sum_j\phi_A(z_j)$. Fix this $U_0$ and let $z^*:=(z_1,z_2,z_3)$ and $w^*:=Q^{-1}z^*$. Since both outer aggregators are strictly increasing, maximizing the principal's smooth-ambiguity valuation is equivalent to maximizing its argument $F(z):=\frac{1}{3}\sum_j\phi_P(m_j-z_j)$, and the agent's participation constraint is equivalent to $G(z):=\frac{1}{3}\sum_j\phi_A(z_j)\ge\phi_A(U_0)$. By construction, $G(z^*)=\phi_A(U_0)$, so participation binds at $z^*$.

We next verify first-order optimality. For each coordinate, $\partial F/\partial z_j=-(1/3)\phi_P'(m_j-z_j)=-(1/3)e^{-(m_j-z_j)}$, while $\partial G/\partial z_j=(1/3)\phi_A'(z_j)=(1/3)(e^{-z_j}+e^{-2z_j})$. The defining equation for $z_j$ says these two quantities sum to zero. Hence, $\nabla F(z^*)+\nabla G(z^*)=0$. These conditions are sufficient for global optimality, and we prove this directly. Let $z'$ be any feasible vector. Concavity of $G$ gives $G(z')\le G(z^*)+\nabla G(z^*)\cdot(z'-z^*)$. Since both $z'$ and $z^*$ are feasible and $G(z^*)$ equals the constraint threshold, $G(z')\ge G(z^*)$. Combining the two inequalities yields $\nabla G(z^*)\cdot(z'-z^*)\ge0$. Concavity of $F$ gives $F(z')\le F(z^*)+\nabla F(z^*)\cdot(z'-z^*)$. Using $\nabla F(z^*)=-\nabla G(z^*)$ then implies $F(z')\le F(z^*)$. Thus, $z^*$ is globally optimal. Moreover, $F$ is strictly concave because each map $z_j\mapsto-e^{-(m_j-z_j)}=-e^{-m_j}e^{z_j}$ is strictly concave. Therefore, if $z'\ne z^*$, the concavity inequality for $F$ is strict along the segment connecting the two points, and $z'$ cannot also be optimal. Hence, $z^*$ is the unique optimum. Since $Q$ is invertible, $w^*=Q^{-1}z^*$ is the unique optimal wage vector.

It remains to prove nonlinearity. The equation defining $z(m)$ can be rewritten by taking logarithms as $m=3z-\log(e^z+1)$. Define $L(z):=3z-\log(e^z+1)$. Then, $L'(z)=3-e^z/(e^z+1)>2>0$ and $L''(z)=-e^z/(e^z+1)^2<0$. Thus, $L$ is strictly increasing and strictly concave. Its inverse $z(m)$ is therefore strictly convex. For completeness, differentiating the identity $L(z(m))=m$ twice gives $z''(m)=-L''(z(m))/[L'(z(m))]^3>0$. Since $m_2=1$ is the midpoint of $m_1=0.3$ and $m_3=1.7$, strict convexity implies $z_1+z_3>2z_2$. If the unique wage vector were affine in output, $w^*=\alpha\m{1}+\beta y$ for some scalars $\alpha,\beta$. Every row of $Q$ sums to one, so $Q\m{1}=\m{1}$. Hence, $z^*=Qw^*=\alpha\m{1}+\beta Qy=\alpha\m{1}+\beta m$. Because the entries of $m$ are equally spaced, this affine relation would imply $z_1+z_3=2z_2$, contradicting the strict inequality established above. Therefore, the unique first-best contract is nonlinear.
\end{proof}

\begin{proof}[Proof of Observation \ref{prop:veu-nonlinear}]
We first verify the properties of the specified VEU representation. Equip $\R^3$ with the inner product $\langle x,z\rangle_p:=\E_p[xz]=(1/3)\sum_yx(y)z(y)$. The vectors $e_1=\sqrt{3/2}(-1,0,1)$ and $e_2=(1,-2,1)/\sqrt2$ both have zero mean under $p=(1/3,1/3,1/3)$. Direct calculation gives $\langle e_1,e_1\rangle_p=1$, $\langle e_2,e_2\rangle_p=1$, and $\langle e_1,e_2\rangle_p=0$. Thus, $e_1,e_2$ form an orthonormal basis for the two-dimensional mean-zero subspace.

The transformation from $(e_1,e_2)$ to $(\zeta_1,\zeta_2)$ is given by the matrix $(1/5)\left(\begin{smallmatrix}4&3\\3&-4\end{smallmatrix}\right)$. Its rows are orthonormal, so it is an orthogonal matrix. Hence, $\zeta_1$ and $\zeta_2$ also have zero mean and are orthonormal under $p$. Orthogonality additionally preserves the sum of squared coordinates state by state: for every output state $y$, $\zeta_1(y)^2+\zeta_2(y)^2=e_1(y)^2+e_2(y)^2=2$.

The adjustment function $g(t)=1-e^{-t^2}$ is continuous, even, nonnegative, and satisfies $g(0)=0$. Therefore, each $A_i$ is continuous, symmetric, and satisfies $A_i(0,0)=0$, as required for the VEU form. We now verify strict monotonicity of the resulting preferences. Let $\rho_A:=1$ and $\rho_P:=2$, so $A_i(\varphi_1,\varphi_2)=-\rho_i\varepsilon[g(\varphi_1)+g(\varphi_2)]$ with $\varepsilon=1/10$. For a payoff vector $x$, write $\varphi_k(x):=\E_p[\zeta_kx]$. Differentiating the VEU value with respect to the payoff in state $y$ gives $(1/3)\{1-\rho_i\varepsilon[g'(\varphi_1(x))\zeta_1(y)+g'(\varphi_2(x))\zeta_2(y)]\}$.

Now $g'(t)=2te^{-t^2}$, whose absolute value is maximized at $|t|=1/\sqrt2$ and has supremum $\sqrt{2/e}$. By Cauchy--Schwarz and the identity $\zeta_1(y)^2+\zeta_2(y)^2=2$, $|\zeta_1(y)|+|\zeta_2(y)|\le\sqrt2\sqrt{\zeta_1(y)^2+\zeta_2(y)^2}=2$. Therefore, the absolute value of the adjustment inside braces is at most $2\rho_i\varepsilon\sqrt{2/e}$. Since $\rho_i\le2$ and $\varepsilon=1/10$, this upper bound is at most $(2/5)\sqrt{2/e}<1$. Every state derivative of each party's VEU value is therefore strictly positive. Hence, both preferences are strictly monotone.

Because $g\ge0$, both adjustment functions are everywhere nonpositive, and $A_P\le A_A\le0$. By \citet[][Proposition 4]{vec09}, this means that both preferences are ambiguity averse relative to the EU benchmark in the comparative sense used there, with the principal more ambiguity averse than the agent. The preferences are nevertheless not variational. Indeed, $g''(t)=2e^{-t^2}(1-2t^2)$ is negative whenever $|t|>1/\sqrt2$, so $-g$ is locally strictly convex on such regions. Thus, $A_i$ is not concave. By \citet[][Corollary 2]{vec09}, a VEU preference satisfies the Uncertainty Aversion axiom exactly under the corresponding concavity restriction on its adjustment, so the specified preferences violate that axiom and fall outside the variational class.

We now solve the first-best problem. Since each $\zeta_k$ has zero mean, adding a constant $r$ to every payoff changes $\E_p[x]$ by $r$ but leaves $\E_p[\zeta_kx]$ unchanged. Hence, both VEU functionals are translation invariant: $\mathcal V_i^{\mathrm{VEU}}(x+r\m{1})=\mathcal V_i^{\mathrm{VEU}}(x)+r$. Exactly as in Lemma \ref{lem:FB-participation}, this implies that the agent's participation constraint binds at every first-best optimum.

For an arbitrary wage vector $w$, define $t:=\E_p[w]$ and $v_k:=\E_p[\zeta_kw]$ for $k=1,2$. The three vectors $\m{1},\zeta_1,\zeta_2$ form an orthonormal basis of $\R^3$ under $\langle\cdot,\cdot\rangle_p$, so the triple $(t,v_1,v_2)$ uniquely determines $w$. Let $c_k:=\E_p[\zeta_ky]$ denote the ambiguity exposure of aggregate output. Direct calculation gives $\E_p[e_1y]=\sqrt{2/3}$ and $\E_p[e_2y]=0$. Hence, $c_1=(4/5)\sqrt{2/3}$ and $c_2=(3/5)\sqrt{2/3}$. In particular, $0<c_2<c_1<1/\sqrt2$.

Binding participation gives $t+A_A(v_1,v_2)=U_0$, so $t=U_0-A_A(v_1,v_2)$. Since $\E_p[y]=1$, the principal's residual claim has mean $1-t$ and ambiguity exposures $c_k-v_k$. Substituting the binding value of $t$ into the principal's VEU value shows that maximizing the principal's payoff is equivalent to minimizing, separately across $k=1,2$, the functions $h_{c_k}(v):=g(v)+2g(c_k-v)$. We now solve this one-dimensional problem for an arbitrary $c\in(0,1/\sqrt2)$.

First, every global minimizer of $h_c$ lies in $[0,c]$. If $v<0$, then $g(v)>0$ and $c-v>c>0$. Since $g$ is strictly increasing on $(0,\infty)$, $g(c-v)>g(c)$, so $h_c(v)>2g(c)=h_c(0)$. If $v>c$, then $g(v)>g(c)$ and $g(c-v)>0$, so $h_c(v)>g(c)=h_c(c)$. Thus, no point outside $[0,c]$ can minimize.

On $[0,c]$, both $v$ and $c-v$ belong to $[0,c]\subset[0,1/\sqrt2)$, where $g''$ is strictly positive. Hence, $h_c''(v)=g''(v)+2g''(c-v)>0$ throughout the interval, so $h_c$ is strictly convex there. Moreover, $h_c'(0)=g'(0)-2g'(c)=-2g'(c)<0$, while $h_c'(c)=g'(c)-2g'(0)=g'(c)>0$. By continuity and strict increase of $h_c'$, there is a unique $v(c)\in(0,c)$ satisfying $h_c'(v(c))=0$, and this point is the unique global minimizer.

The first-order condition is $g'(v(c))=2g'(c-v(c))$. Write $v(c)=\theta(c)c$. Since $h_c'(c/2)=g'(c/2)-2g'(c/2)<0$ and the unique root of $h_c'$ lies to the right of any point at which the derivative is negative, $\theta(c)>1/2$. Substituting $g'(t)=2te^{-t^2}$ into the first-order condition and simplifying gives $\log\{\theta(c)/[2(1-\theta(c))]\}=[2\theta(c)-1]c^2$.

Define $F(\theta,c):=\log\{\theta/[2(1-\theta)]\}-(2\theta-1)c^2$. At the solution, $F(\theta(c),c)=0$. Its partial derivative with respect to $\theta$ is $1/[\theta(1-\theta)]-2c^2$. Because $\theta\in(0,1)$ gives $1/[\theta(1-\theta)]\ge4$ and $c<1/\sqrt2$ gives $2c^2<1$, this derivative is strictly larger than three. Its partial derivative with respect to $c$ is $-2c(2\theta-1)<0$ because $c>0$ and $\theta>1/2$. The implicit-function theorem therefore applies and gives $\theta'(c)=-F_c/F_\theta>0$. Since $c_1>c_2$, we conclude $v(c_1)/c_1=\theta(c_1)>\theta(c_2)=v(c_2)/c_2$.

Let $(v_1^*,v_2^*)$ be the unique pair of minimizing ambiguity exposures. Suppose the resulting optimal wage were affine in output, $w^*=\alpha\m{1}+\beta y$. Since each $\zeta_k$ has zero mean, $v_k^*=\E_p[\zeta_kw^*]=\beta\E_p[\zeta_ky]=\beta c_k$. Hence, an affine contract requires $v_1^*/c_1=v_2^*/c_2=\beta$, contradicting the strict inequality established above. Thus, the optimal contract is nonlinear.

Finally, each one-dimensional problem $h_{c_k}$ has a unique minimizer, so $(v_1^*,v_2^*)$ is unique. Binding participation uniquely determines $t^*$. Since $\m{1},\zeta_1,\zeta_2$ form a basis, the triple $(t^*,v_1^*,v_2^*)$ determines a unique wage vector. Therefore, for every outside option $U_0$, the first-best solution exists, is unique, and is nonlinear.
\end{proof}

\begin{proof}[Proof of Lemma \ref{lem:regularity-characterization}]
At a smooth feasible contract $w$ with binding participation, Appendix \ref{app:prelim} implies that the gradient of participation is $\pi_A^{a^*}(\cdot;w)$ and the gradient of the incentive constraint against an active deviation $a\in B(w)$ is $\pi_A^{a^*}(\cdot;w)-\pi_A^a(\cdot;w)$. Since all constraints are written as weak inequalities that must be nonnegative, MFCQ requires a direction $h\in\R^\Y$ such that $\pi_A^{a^*}\cdot h>0$ and $(\pi_A^{a^*}-\pi_A^a)\cdot h>0$ for every $a\in B(w)$.

Suppose first that $\pi_A^{a^*}$ belongs to the convex hull of the active deviation models. Then, there are weights $\gamma_a\ge0$, summing to one over $a\in B(w)$, such that $\pi_A^{a^*}=\sum_{a\in B(w)}\gamma_a\pi_A^a$. If a direction $h$ strictly relaxed every active incentive constraint, then $(\pi_A^{a^*}-\pi_A^a)\cdot h>0$ for every $a\in B(w)$. Multiplying each inequality by $\gamma_a$ and summing yields $(\pi_A^{a^*}-\sum_{a\in B(w)}\gamma_a\pi_A^a)\cdot h>0$. The vector in parentheses is zero by the convex-combination identity, which gives the contradiction $0>0$. Hence, no direction can satisfy MFCQ, so $w$ is not regular.

Conversely, suppose $\pi_A^{a^*}$ does not belong to the convex hull of the active deviation models. If $B(w)=\varnothing$, take $h=\m{1}$. Since $\pi_A^{a^*}$ is a probability distribution, $\pi_A^{a^*}\cdot\m{1}=1>0$, and there are no binding incentive constraints. Thus, MFCQ holds.

Now suppose $B(w)\neq\varnothing$. The convex hull of the finitely many vectors $\{\pi_A^a:a\in B(w)\}$ is compact and convex. Since $\pi_A^{a^*}$ lies outside it, the strict hyperplane separation theorem gives a vector $h_0\in\R^\Y$ such that $(\pi_A^{a^*}-\pi_A^a)\cdot h_0>0$ for every $a\in B(w)$. This direction need not yet strictly relax participation. Choose a scalar $K>-\pi_A^{a^*}\cdot h_0$ and set $h:=h_0+K\m{1}$. Because all effective models are probability distributions, $(\pi_A^{a^*}-\pi_A^a)\cdot\m{1}=0$ for every deviation. Hence, $(\pi_A^{a^*}-\pi_A^a)\cdot h=(\pi_A^{a^*}-\pi_A^a)\cdot h_0>0$ for every active deviation. Moreover, $\pi_A^{a^*}\cdot h=\pi_A^{a^*}\cdot h_0+K>0$. Thus, the direction $h$ strictly relaxes participation and every binding incentive constraint. MFCQ holds, which proves the characterization.
\end{proof}

\begin{proof}[Proof of Proposition \ref{prop:regularity-sufficient}]
We first prove smoothness. Fix an action $a$. Because $q^a$ has full support, every $p\in\DeltaY$ is absolutely continuous with respect to $q^a$, and therefore $\Gamma^a(p)=\sum_yq^a(y)\phi(p(y)/q^a(y))$ is finite and continuous on the compact simplex. Since $\phi$ is strictly convex and every $q^a(y)$ is strictly positive, $\Gamma^a$ is strictly convex in $p$. Indeed, if $p\neq r$ and $t\in(0,1)$, then $p(y)/q^a(y)\neq r(y)/q^a(y)$ for at least one output $y$; strict convexity of $\phi$ at that coordinate and convexity at every other coordinate imply $\Gamma^a\big(tp+(1-t)r\big)<t\Gamma^a(p)+(1-t)\Gamma^a(r)$.

The ambiguity index has the unique minimizer $q^a$. To see this, $\Gamma^a(q^a)=0$ because $\phi(1)=0$. Since $\phi$ is nonnegative, $\Gamma^a(p)\ge0$ for every $p$. Moreover, strict convexity and nonnegativity of $\phi$ imply that $1$ is its unique zero: if $\phi(t)=0$ for some $t\neq1$, strict convexity would make $\phi$ strictly negative at every nontrivial convex combination of $1$ and $t$, contradicting nonnegativity. Hence, $\Gamma^a(p)=0$ requires $p(y)/q^a(y)=1$ for every $y$, so $p=q^a$. Therefore, $C^a=\{q^a\}$.

For every $\lambda>0$ and finite payoff vector $x$, the function $p\mapsto\E_p[x]+\Gamma^a(p)/\lambda$ is the sum of an affine function and a strictly convex function. It is therefore strictly convex. Since it is continuous on the compact simplex, it has a unique minimizer. Thus, every effective-model set $\Pi_\lambda^a(x)$ is a singleton. When $\lambda=0$, the effective-model set is also a singleton because $C^a=\{q^a\}$. Consequently, for every feasible contract, all effective models entering the principal's objective, participation, and the incentive constraints are unique. Appendix \ref{app:prelim} then implies differentiability of the relevant value functions. Every feasible contract is therefore smooth.

If $B(w)=\varnothing$, regularity follows from Lemma \ref{lem:regularity-characterization}. Suppose that $B(w)\neq\varnothing$ and that the additional hypotheses of the result hold. Write $p^*:=\pi_A^{a^*}(\cdot;w)$ and, for each $a\in B(w)$, write $p^a:=\pi_A^a(\cdot;w)$. These models have full support by assumption.

We next characterize each effective model. Fix an action $a\in B(w)\cup\{a^*\}$. Since $p^a$ is an interior minimizer of $p\mapsto\E_p[w]+\Gamma^a(p)/\lambda_A$, the first-order conditions for this minimization subject to $\sum_yp(y)=1$ are necessary. Hence, there is a scalar $\nu_a$ such that $w(y)+(1/\lambda_A)\phi'(p^a(y)/q^a(y))+\nu_a=0$ for every output $y$. Let $\kappa_a:=-\lambda_A\nu_a$. Since $\phi''>0$, the function $\phi'$ is strictly increasing and has inverse $\ell_\phi$. The first-order conditions therefore give $p^a(y)/q^a(y)=\ell_\phi(\kappa_a-\lambda_Aw(y))$ for every $y$. Normalization gives $\sum_yq^a(y)\ell_\phi(\kappa_a-\lambda_Aw(y))=1$. Because $\ell_\phi$ is strictly increasing, the left-hand side is strictly increasing in $\kappa_a$ wherever it is defined, so the normalizing scalar is unique.

Fix an active deviation $a$. Define $r(y):=q^{a^*}(y)/q^a(y)$ and $f(y):=\ell_\phi(\kappa_a-\lambda_Aw(y))$. The MLRP hypothesis says that $r$ is weakly increasing and nonconstant in output. Because $w$ is nondecreasing and $\ell_\phi$ is increasing, $f$ is weakly decreasing. Moreover, normalization of $p^a$ gives $\E_{q^a}[f]=1$, while $\E_{q^a}[r]=1$ because $r$ is the likelihood ratio of $q^{a^*}$ with respect to $q^a$.

The opposite monotonicity of $r$ and $f$ implies $\operatorname{Cov}_{q^a}(r,f)\le0$. For completeness, the standard pairwise covariance identity expresses this covariance as one half of the sum over all pairs of outputs $y,z$ of $q^a(y)q^a(z)[r(y)-r(z)][f(y)-f(z)]$. Each summand is nonpositive because an increase in output can only increase $r$ and decrease $f$. Hence, $\E_{q^a}[rf]\le\E_{q^a}[r]\E_{q^a}[f]=1$. Since $\E_{q^a}[rf]=\E_{q^{a^*}}[f]$, we obtain $\sum_yq^{a^*}(y)\ell_\phi(\kappa_a-\lambda_Aw(y))\le1$.

Let $\kappa_*:=\kappa_{a^*}$. Normalization of $p^*$ gives $\sum_yq^{a^*}(y)\ell_\phi(\kappa_*-\lambda_Aw(y))=1$. The same sum is strictly increasing in its scalar argument, while its value at $\kappa_a$ is weakly below one. Therefore, $\kappa_*\ge\kappa_a$. Set $d:=\kappa_*-\kappa_a\ge0$.

We now compare the two effective models state by state. Their likelihood ratio satisfies $p^*(y)/p^a(y)=r(y)\ell_\phi(\kappa_a+d-\lambda_Aw(y))/\ell_\phi(\kappa_a-\lambda_Aw(y))$. Let $g:=\log\ell_\phi$. By assumption, $g$ is concave. For every fixed $d\ge0$, the increment $g(x+d)-g(x)$ is weakly decreasing in $x$: differentiability gives derivative $g'(x+d)-g'(x)\le0$ because the derivative of a concave function is weakly decreasing. Hence, $\ell_\phi(x+d)/\ell_\phi(x)$ is weakly decreasing in $x$. Since $x(y):=\kappa_a-\lambda_Aw(y)$ is weakly decreasing in output, the ratio $\ell_\phi(x(y)+d)/\ell_\phi(x(y))$ is weakly increasing in output.

It follows that $p^*(y)/p^a(y)$ is the product of two strictly positive weakly increasing functions of output. The first factor $r(y)$ is nonconstant, so this product is itself weakly increasing and nonconstant. Thus, the desired-action effective model strictly dominates the deviation effective model in the monotone-likelihood-ratio order.

This ordering implies a strict ordering of expected output. Let $L(y):=p^*(y)/p^a(y)$. Since $p^a$ has full support, $\E_{p^a}[L]=1$, and $\E_{p^*}[y]-\E_{p^a}[y]=\operatorname{Cov}_{p^a}(y,L)$. The pairwise covariance identity writes the latter as one half of the sum over $y,z$ of $p^a(y)p^a(z)(y-z)[L(y)-L(z)]$. Every term is nonnegative because $L$ is weakly increasing in output. At least one term is strictly positive because $L$ is nonconstant, the outputs in $\Y$ are distinct, and $p^a$ has full support. Therefore, $\E_{p^*}[y]>\E_{p^a}[y]$ for every $a\in B(w)$.

Suppose, toward a contradiction, that $p^*$ belonged to the convex hull of the active deviation models. Then, there would be nonnegative weights $\gamma_a$, summing to one, such that $p^*=\sum_{a\in B(w)}\gamma_ap^a$. Taking expected output would give $\E_{p^*}[y]=\sum_{a\in B(w)}\gamma_a\E_{p^a}[y]$. But every term on the right is strictly smaller than $\E_{p^*}[y]$, so their convex combination is strictly smaller as well, a contradiction. Thus, $p^*\notin\co\{p^a:a\in B(w)\}$. Lemma \ref{lem:regularity-characterization} implies that $w$ is regular.\end{proof}

\end{document}